\newcommand{\stkout}[1]{\ifmmode\text{\sout{\ensuremath{#1}}}\else\sout{#1}\fi}
\DeclareMathOperator{\Perp}{\perp\!\!\!\perp}
\DeclareMathOperator{\pa}{\operatorname{pa}}
\DeclareMathOperator{\ch}{\operatorname{ch}}
\DeclareMathOperator{\nb}{\operatorname{nb}}
\DeclareMathOperator{\bd}{\operatorname{bd}}
\DeclareMathOperator{\dis}{\operatorname{dis}}
\DeclareMathOperator{\diff}{\operatorname{diff}}
\DeclareMathOperator{\mb}{\operatorname{mb}}
\DeclareMathOperator{\odds}{\operatorname{OR}}
\newcommand{\G}{{\mathcal G}}
\author{\name Trung Phung \email tphung1@jhu.edu \\
       \addr Department of Computer Science\\
       Johns Hopkins University\\
       Baltimore, MD 21218, USA
       \AND
       \name Kyle Reese \email kreese15@jhu.edu \\
       \addr Department of Computer Science\\
       Johns Hopkins University\\
       Baltimore, MD 21218, USA
       \AND
       \name Ilya Shpitser \email ilyas@cs.jhu.edu \\
       \addr Department of Computer Science\\
       Johns Hopkins University\\
       Baltimore, MD 21218, USA
       \AND
       \name Rohit Bhattacharya \email rb17@williams.edu \\
       \addr Department of Computer Science\\
       Williams College\\
       Williamstown, MA 01267, USA}
\title{Recursive Equations For Imputation Of Missing Not At Random Data With Sparse Pattern Support}
\begin{document}
\maketitle

\begin{abstract}

A common approach for handling missing values in data analysis pipelines is multiple imputation via software packages such as  MICE \citep{van2011mice} and Amelia \citep{honaker2011amelia}. These packages typically assume the data are missing at random (MAR), and impose  parametric or smoothing assumptions upon the imputing distributions in a way that allows imputation to proceed even if not all missingness patterns have support in the data. 
Such assumptions  are unrealistic in practice, and induce model misspecification bias on any analysis performed after such imputation.

In this paper, we provide a principled alternative.  Specifically, we develop a new characterization for the full data law in graphical models of missing data. This characterization is constructive, is easily adapted for the calculation of imputation distributions for both MAR and MNAR (missing not at random) mechanisms, and is able to handle lack of support for certain patterns of missingness. We use this characterization to develop a new imputation algorithm---Multivariate Imputation via Supported Pattern Recursion (MISPR)---which uses Gibbs sampling, by analogy with the Multivariate Imputation with Chained Equations (MICE) algorithm, but which is consistent under both MAR and MNAR settings, and is able to handle missing data patterns with no support without imposing additional  assumptions beyond those already imposed by the missing data model itself.

In simulations, we show MISPR obtains comparable results to MICE when data are MAR, and superior, less biased results when data are MNAR.  Our characterization and imputation algorithm based on it are a step towards making principled missing data methods more practical in applied settings, where the data are likely both MNAR and sufficiently high dimensional to yield missing data patterns with no support at available sample sizes.

\end{abstract}

\begin{keywords}
  imputation, graphical models, pattern mixture factorization, Gibbs sampling
\end{keywords}

\section{Introduction}

Missing data is a pervasive problem in applied data analyses across a variety of disciplines in the empirical and social sciences \citep{pigott2001review, mcknight2007missing, eekhout2012missing, berchtold2019treatment}. Currently, imputation using software packages such as MICE \citep{van2011mice} and Amelia \citep{honaker2011amelia} is one of the most popular approaches for analyzing datasets that exhibit missingness \citep{sinharay2001use, azur2011multiple, sun2023deep}. However, these methods assume the data are missing at random (MAR), and impose  parametric or smoothing assumptions upon the imputing distributions in a way that allows imputation to proceed even if not all missingness patterns have support in the data. Such assumptions  are unrealistic in practice, and induce model misspecification bias on any analysis performed after such imputation.

A promising alternative approach to missing data modeling in complex multivariate settings is based on graphical models.  This approach provides a number of advantages for analysis, including a concise representation of the observed data likelihood, an intuitive causal interpretation of missingness mechanisms, as well as developed methods for identification theory \citep{robins1997non, daniel2012using, bhattacharya2019mid, mohan2021graphical, malinsky2021semiparametric, nabi2020mid, nabi2022causal, miao2024identification}.  Finally, graphical models can easily incorporate methods that address model uncertainty, either based on learning a graph using model selection procedures \citep{gain2018structure,tu2019causal}, or developing identification criteria that are sound in a potentially large union model of missingness processes \citep{mathur25common}.

Despite these advantages and a growing recognition of limitations of existing imputation based approaches \citep{sterne2009multiple, perkins2018principled, mathur2025pitfalls}, the use of graphical modeling methods for missing data in practical analyses has been limited.
This is because of additional methodological complexity of such methods: functionals identified under MNAR tend to have a complex form, and lead to semiparametric or weighting estimators that are challenging to implement.  Finally, the complexity of MNAR models of missingness makes it challenging to address missingness patterns with no support, a common complication in high dimensional applications with missing data.
All these complications mean that existing imputation algorithms
are far more likely to be incorporated into an analysis pipeline, due to their ease of implementation and use, and despite their known methodological disadvantages.




In this paper, we address these limitations of graphical models of missing data, and in doing so, provide a principled alternative to existing imputation methods.  Specifically, we develop a new characterization for the full data law in graphical models of missing data. This characterization is constructive, is easily adapted for the calculation of imputation distributions for both MAR and MNAR (missing not at random) mechanisms, and is able to handle lack of support for certain patterns of missingness. We use this characterization to develop a new imputation algorithm---Pattern mixture Imputaton with Gibbs Sampling (MISPR)---which uses Gibbs sampling, by analogy with the Multivariate Imputation with Chained Equations (MICE) algorithm, but which is consistent under both MAR and MNAR settings, and is able to handle missing data patterns with no support without imposing additional  assumptions beyond those already imposed by the missing data model itself. More specifically, we show how identification and imputation is possible by leveraging non-parametric restrictions encoded by a missing data graph when only the complete case pattern and small subsets of other missingness patterns are present. For completeness, we also provide an example of how to conduct these tasks without any complete case patterns.

In simulations, we show MISPR obtains comparable results to MICE when the data are MAR, and superior, less biased results when the data are MNAR.  Our characterization and imputation algorithm based on it are a step towards making principled missing data methods more practical in applied settings, where the data are likely both MNAR and sufficiently high dimensional to yield missing data patterns with no support at available sample sizes.

The remainder of this paper is organized as follows. In Section~\ref{sec:prelim}, we introduce notation, formalize our problem statement, and review the classes of missing data graphs considered in this paper. Section~\ref{sec:constructive_proof} presents PM-ID, our new constructive characterization of the full data law of these models and contrasts it with the identification approaches used in prior work (in particular, \cite{nabi2020mid} and \cite{malinsky2021semiparametric}) on graphical models that are not constructive. Section~\ref{sec:positivity-violations} extends our constructive characterization to settings with positivity violations, i.e., settings where certain patterns of missingness have no support. Section~\ref{sec:imputation_algorithm} describes our imputation algorithm based on this constructive characterization. Section~\ref{sec:experiments} provides numerical comparisons to MICE. Section~\ref{sec:comparison} provides detailed qualitative comparisons of our work to other work on MNAR missing data models, and finally, Section~\ref{sec:conclusion} concludes the paper.

\section{Problem Setup and Description Of Models}
\label{sec:prelim}

In this section, we formalize our problem statement, introduce relevant concepts related to graphical models of missing data, and provide basic intuition on how they are used.

\subsection{Missing Data Models}
\label{sec:prelim:missing_data}

Consider a missing data problem with $J$ fully observed random variables $O = \{O_1, \dots, O_J\}$, whose values are always observed in every data sample, and $K$ potentially missing random variables $X^{(1)} = \{X_1^{(1)}, \ldots, X_K^{(1)}\}$, whose values are missing in at least one of the sample. To describe the data generation process with missingness, we use two additional sets of observed random variables: the missingness indicators $R = \{R_1, \ldots, R_K\}$ and the observed proxies $X = \{X_1, \ldots, X_K\}$. The $X^{(1)}$ are considered hidden, while the observed data sample consists of values of $(O, X, R)$, determined via the missing data analogue of \emph{consistency} in causal inference: $X_k = X_k^{(1)}$ if $R_k=1$ and $X_k = ?$ if $R_k=0$. Each $O_j \in O$ takes values in $\mathcal{O}_j$, each $X^{(1)}_k \in X^{(1)}$ takes values in $\mathcal{X}^{(1)}_k$, and these state spaces are unrestricted. By definition, the state space of each missingness indicator $R_k \in R$ is $\{0, 1\}$, and the state space $\mathcal{X}_k$ of each observed proxy $X_k \in X$ is the state space of the corresponding potentially missing random variable augmented with the special value $``?"$ to indicate missingness, $\mathcal{X}_k = \mathcal{X}^{(1)}_k \cup \{?\}$.

A missing data model is a set of distributions defined over $X^{(1)} \cup O \cup R \cup X$ and satisfying missing data consistency.  That is, the joint distribution $p(X^{(1)}, O, R, X)$ may be written as
\begin{equation}
    \begin{aligned}
    p(X^{(1)}, O, R, X) = p(X^{(1)}, O, R) \times \prod_{k=1}^K p(X_k \mid R_k, X_k^{(1)}),
    \end{aligned}
    \label{eq:missing-data-fact}
\end{equation}
where each factor $p(X_k \mid R_k, X_k^{(1)})$ is deterministically defined by consistency. Following previous conventions \citep{bhattacharya2019mid, nabi2022causal}, the non-deterministic portion $p(X^{(1)}, O, R)$ is often referred to as the \emph{full law}, while the fully observed portion $p(O, X, R)$ is referred to as the \emph{observed law}. The full law can be further factorized as $p(X^{(1)}, O, R) = p(X^{(1)}, O) \times p(R \mid X^{(1)}, O)$, where $p(X^{(1)}, O)$ is known as the \emph{target law} and $p(R \mid X^{(1)}, O)$ is known as the \emph{missingness mechanism} or \emph{propensity score}.  Since $X$ are defined via fixed functions from $R$ and $X^{(1)}$, knowing the full law is equivalent to knowing the entire joint distribution $p(X^{(1)}, O, R, X)$.

For brevity, we sometimes use $[K] = \{1, \ldots, K\}$ as a shorthand for the set of indices of all missing variables. Further, if $Z \subseteq [K]$ is an index subset, we use the notation $R_{Z}, X^{(1)}_{Z}, X_Z$ to denote  the corresponding subset of variables, e.g., $R_{Z} = \{R_{k} \mid k \in Z\}$. 
Finally, for simplicity, we take statements such as $p(R=1 \mid X^{(1)}, O) > 0$ where no specific values are specified for some of the random variables to mean that the statement holds for all possible values of those random variables (e.g., $X^{(1)}, O$ in this case).

\paragraph{Goal} The goal of our work is to design identification and imputation algorithms for Missing Not At Random (MNAR) models---missing data models in which the missingness status of variables may depend on values that are themselves missing---that may also exhibit positivity violations. We pursue imputation as a goal, because it is often helpful as a pre-processing step in missing data problems as it allows downstream analyses to proceed as if there had been no missingness. Most often, parameters of interest are  functions of the target law $p(X^{(1)}, O)$, but occasionally they can be functions of the entire full law $p(X^{(1)}, O, R)$, for example in causal discovery applications \citep{gain2018structure, tu2019causal, liu2022greedy}. Note also that the target law is identified---i.e., can be expressed as a function of the observed law $p(O, X, R)$---if the full law is identified. Thus, in this work, we consider full law identification, and propose an imputation method for a wide class of MNAR models with full laws that factorize according to a graphical model.

\subsection{Graphical Models of Missing Data}
\label{sec:prelim:graphical_model}

\begin{figure}[t]
    \centering
    \begin{tikzpicture}[>=stealth, node distance=1.5cm]
    \def\d{1.5cm}
    \begin{scope}
        \path[->, very thick]
        node[] (x11) {$X^{(1)}_1$}
        node[right of=x11] (x21) {$X^{(1)}_2$}
        node[right of=x21] (x31) {$X^{(1)}_3$}
        
        node[below of=x11] (r1) {$R_1$}
        node[below of=x21] (r2) {$R_2$}
        node[below of=x31] (r3) {$R_3$}
        
        (r1) edge[blue] (r2)
        (r2) edge[blue] (r3)
        (x21) edge[blue] (r1)
        (x31) edge[blue] (r1)
        (x31) edge[blue] (r2)
        (x11) edge[blue] (r3)
        
        (x11) edge[blue] (x21)
        (x21) edge[blue] (x31)
        (x11) edge[blue, bend left] (x31)
        
        node[below of=r2, yshift=\d/3] () {(a)}
        ;
    \end{scope}

    \begin{scope}[xshift=5cm]
        \path[->, very thick]
        node[] (x11) {$X^{(1)}_1$}
        node[right of=x11] (x21) {$X^{(1)}_2$}
        node[right of=x21] (x31) {$X^{(1)}_3$}
        
        node[below of=x11] (r1) {$R_1$}
        node[below of=x21] (r2) {$R_2$}
        node[below of=x31] (r3) {$R_3$}
        
        (r1) edge[blue] (r2)
        (r2) edge[blue] (r3)
        (x31) edge[blue] (r1)
        (x31) edge[blue] (r2)
        (x11) edge[blue] (r3)
        (r1) edge[red, <->, bend right=25] (r2)
        
        (x11) edge[blue] (x21)
        (x21) edge[blue] (x31)
        (x11) edge[red, <->, bend left] (x31)
        
        node[below of=r2, yshift=\d/3] () {(b)}
        ;
    \end{scope}

    \begin{scope}[xshift=10cm]
        \path[->, very thick]
        node[] (x11) {$X^{(1)}_1$}
        node[right of=x11] (x21) {$X^{(1)}_2$}
        node[right of=x21] (x31) {$X^{(1)}_3$}
        
        node[below of=x11] (r1) {$R_1$}
        node[below of=x21] (r2) {$R_2$}
        node[below of=x31] (r3) {$R_3$}
        
        (r1) edge[-, brown] (r2)
        (r2) edge[-, brown] (r3)
        (r1) edge[-, brown, bend right] (r3)
        (x21) edge[blue] (r1)
        (x11) edge[blue] (r2)
        (x21) edge[blue] (r3)
        (x31) edge[blue] (r2)
        (x11) edge[blue] (r3)
        (x31) edge[blue] (r1)
        
        (x11) edge[blue] (x21)
        (x21) edge[blue] (x31)
        (x11) edge[blue, bend left] (x31)
        
        node[below of=r2, yshift=\d/3] () {(c)}
        ;
    \end{scope}
\end{tikzpicture}
\caption{Different kinds of missing data graphs: (a) A missing data DAG; (b) A missing data ADMG; (c) A missing data CG. Though each graph has different causal implications, all of them imply identification of the full data law as they encode the following set of independences for all $i \in \{1, 2, 3\}$: $X_i^{(1)} \Perp R_i \mid X_{-i}^{(1)}, R_{-i}$. }
\label{fig:m-graphs}
\end{figure}
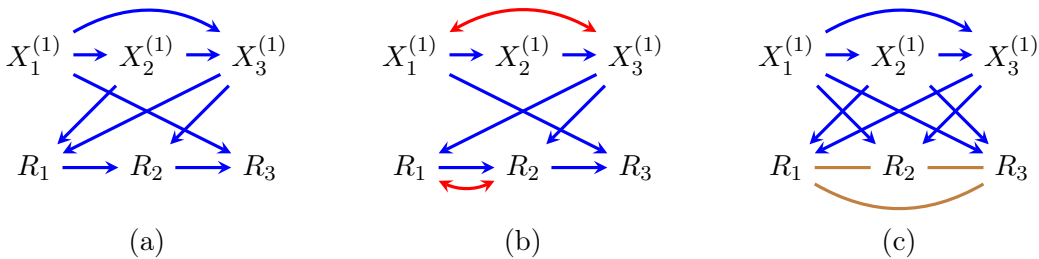

A graph $\G(V, E)$ consists of a set of vertices $V$ and edges $E$.  Similar to causal inference, much progress has been made in non-parametric missing data identification theory by representing missing random variables and their missingness indicators as vertices on a graph and using edges to encode both substantive mechanistic explanations of missingness and conditional independence relations between the missing variables and their missingness indicators. In particular, these conditional independences arise due to the absence of edges in the graph.

Thus far, three classes of graphs have been primarily used in the missing data literature: missing data DAGs (m-DAGs), missing data ADMGs (m-ADMGs), and missing data chain graphs (m-CGs), collectively referred to as \textit{m-graphs}. We describe these three classes here, and note that the identification and imputation methods described in following sections can be applied to datasets that come from distributions that factorize (or equivalently, satisfy the  Markov property) according to any of these three graph classes. 

\subsection*{Missing Data DAGs}

\renewcommand{\arraystretch}{1.5}

The most commonly used graphical models are those associated with directed acyclic graphs (DAGs). A DAG $\G(V, E)$ is a graph where the set of edges $E$ are restricted to be directed ($\rightarrow$) and no directed cycles (a series of edges that point in the same direction and have the same vertex as the start and end points, i.e., $V_i \rightarrow \cdots \rightarrow V_i$ ) are allowed.

In a missing data DAG, which we shorten to m-DAG following the convention in \cite{mohan2021graphical}, the vertices $V$ can be partitioned into the missing variables $X^{(1)}$, fully observed variables $O$, missingness indicators $R$, and observed proxies $X$. That is, for an m-DAG $\G(V, E)$, we have $V = X^{(1)} \cup O \cup R \cup X$. In addition to acyclicity, m-DAGs have two more restrictions, (i) Missingness indicators cannot have outgoing edges to variables in $X^{(1)}$ or $O$, and (ii) The only edges incident to each observed proxy $X_i$ are $X_i^{(1)} \rightarrow X_i \leftarrow R_i$, which is used to enforce the consistency assumption of missing data models.

Figure~\ref{fig:m-graphs}(a) shows an example of a valid m-DAG with three missing variables and no fully observed variables. For brevity, we  omit the portions of the graph that relate to the observed proxies, since they correspond to fully deterministic portions of the models and are not relevant for describing our identification and imputation algorithms.

The primary advantage of graphs is that they provide a concise  picture of independence assumptions. Let $\pa_\G(V_i)$ denote the parents of a vertex $V_i$ in $\G$, i.e., $\pa_\G(V_i) = \{ V_j \ |\  V_j \rightarrow V_i  \text{ in } \G  \}$. The statistical model of an m-DAG $\G(V, E)$ is the set of distributions $p(V)$ that factorize as $p(V) = \prod_{i=1}^{|V|} p(V_i \mid \pa_\G(V_i))$, with each conditional factor of an observed proxy $p(X_k \mid R_k, X_k^{(1)})$ appearing in this factorization being deterministic as before.

The factorization implies a set of independences that can be read off  an m-DAG using d-separation \citep{pearl1988probabilistic}. More directly relevant for our work, however, is a smaller set of independences that is similar to the set used to describe the local Markov property of DAG models \citep{pearl1988probabilistic, lauritzen1996graphical}. The local Markov property provides an equivalent (often simpler) description of these models. The independences are described using Markov blankets --- the Markov blanket of a variable $V_i$ is  the smallest set of variables in $V$ that makes $V_i$ conditionally independent of all other variables in the graph. In DAGs, the Markov blanket of a variable $V_i$ is known to be the parents of $V_i$, the children of $V_i$, and the co-parents of the children of $V_i$ \citep{pearl1988probabilistic}. That is, given a DAG $\G$, the Markov blanket can be described in terms of the sets defined in Table~\ref{table:mb-dags}.
\begin{table}[t]
\centering
\begin{tabular}{|l|l|l|}
\hline
\textbf{Set name} & \textbf{Notation} & \textbf{Definition} \\
\hline
Parents, Children & $\pa_\G(V_i), \ch_\G(V_i)$ & $\{V_j \ |\  V_j \rightarrow V_i  \text{ in } \G \}, \{V_j \ |\  V_i \rightarrow V_j  \text{ in } \G \}$ \\
Co-parents & $\text{co-pa}_\G(V_i)$ & $\{V_j \ |\  V_j \rightarrow V_k \leftarrow V_i  \text{ in } \G \}$ \\
Markov blanket & $\mb_\G(V_i)$ & $\pa_\G(V_i) \cup \ch_\G(V_i) \cup \text{co-pa}_\G(V_i)$\\
\hline
\end{tabular}
\caption{Definition of a Markov blanket in DAGs.}
\label{table:mb-dags}
\end{table}


This gives us the following set of local independences for every variable $V_i \in V$
\begin{align}
V_i \Perp V \setminus \mb_\G(V_i) \mid \mb_\G(V_i).
\label{eq:local-property}
\end{align}
As an example, if we apply the above property to the missing variables of the m-DAG model in Figure~\ref{fig:m-graphs}(a), we get the following independences:
\begin{align*}
&X_1^{(1)} \Perp R_1 \mid X_2^{(1)}, X_3^{(1)}, R_2, R_3 \\
&X_2^{(1)} \Perp R_2, R_3 \mid X_1^{(1)}, X_3^{(1)}, R_1 \\
&X_3^{(1)} \Perp R_3 \mid X_1^{(1)}, X_2^{(1)}, R_1, R_2
\end{align*}
In later sections we see how these independences help with identification and imputation.

We end this subsection with a note on causal interpretations of m-DAGs. Interpreted causally, a directed edge $V_i \rightarrow V_j$ in $\G$ indicates that $V_i$ is potentially a direct cause of $V_j$ relative to other variables on the graph \citep{spirtes2000causation, pearl2009causality}. This makes the MNAR nature of certain models fairly clear. For example, in Figure~\ref{fig:m-graphs}(a), the edge $X_1^{(1)} \rightarrow R_3$ clearly indicates that the missing values of one variable can cause the missingness of another. m-DAGs have appeared in numerous works, such as \cite{mohan2013missing, saadati2019adjustment, bhattacharya2019mid, nabi2020mid}, and \cite{chen2023causal}.

\subsection*{Missing Data ADMGs}
Missing data acyclic directed mixed graphs (m-ADMGs) are a useful extension of m-DAGs when some variables are not just missing but completely unobserved. Consider an m-DAG with vertices that can be partitioned into a set $V = X^{(1)} \cup O \cup R \cup X$ consisting of variables that are either fully observed or missing, and a set $H$ consisting of variables that are always unobserved or hidden. Given such an m-DAG $\G(V\cup H, E)$, an m-ADMG $\G(V, E')$  over the partially observed vertices is a mixed graph consisting of directed ($\rightarrow$) and bidirected ($\leftrightarrow$) edges obtained by applying the latent projection operator described in \cite{verma1990equivalence}. Latent projection of an m-DAG onto an m-ADMG maintains the acylicity property as well as the two additional restrictions of m-DAGs: No directed edges of the form $R_i \rightarrow X_j^{(1)}$  or $R_i \rightarrow O_j$ (though bidirected edges between these variables are permitted in m-ADMGs), and each observed proxy $X_k$ has only two parents $X_k^{(1)}$ and $R_k$.
From a causal perspective, a directed edge $V_i \rightarrow V_j$ in an m-ADMG maintains the same interpretation as before; a bidirected edge $V_i \leftrightarrow V_j$ can be interpreted (wlog) as the existence of one or more unobserved confounders $V_i \leftarrow H_k \rightarrow V_j$ between the two observed or partially observed variables \citep{evans2018margins}. An example of an m-ADMG is shown in Figure~\ref{fig:m-graphs}(b).

One of the main benefits of m-ADMGs is that they enable modeling of the (partially) observed margin $p(V)$ without explicitly considering parameterizations of latent variable models. The margin $p(V)$ satisfies a recursive factorization with respect to $\G$, known as the nested Markov factorization  \citep{Richardson.Evans.ea.2023.NestedMarkov}, that is significantly more complex than the DAG factorization. For the purpose of full law identification and imputation, however, it is again sufficient to focus on a small set of independences similar to an ordinary local Markov property for ADMGs. The extension of Markov blankets from DAGs to ADMGs is shown in Table~\ref{table:mb-admgs}.\footnote{The definition of boundary in Table~\ref{table:mb-admgs} is the same as the definition of an \emph{ordered} Markov blanket that was used to define an ordered local Markov property in \cite{richardson2003markov}.} Given an m-ADMG $\G(V, E)$, this modified definition of the Markov blanket yields a set of  independences with the same form as those in \eqref{eq:local-property} for each partially observed variable $V_i \in V$.

\begin{table}[t]
\centering
\begin{tabular}{|l|l|l|}
\hline
\textbf{Set name} & \textbf{Notation} & \textbf{Definition} \\
\hline
District & $\dis_\G(V_i)$ & $\{V_j \ |\  V_j \text{ has a bidirected path to } V_i \text{ in } \G \}  \cup \{V_i\}$ \\
Boundary & $\bd_\G(V_i)$ & $\pa_\G(V_i) \cup \dis_\G(V_i) \cup \bigg( \bigcup_{D_i \in \dis_\G(V_i)} \pa_\G(D_i) \bigg)$ \\
Markov blanket & $\mb_\G(V_i)$ & $\bd_\G(V_i) \cup \ch_\G(V_i) \cup \bigg( \bigcup_{C_i \in \ch_\G(V_i)} \bd_\G(C_i) \bigg) \setminus \{V_i\}$\\
\hline
\end{tabular}
\caption{Definition of a Markov blanket in ADMGs.}
\label{table:mb-admgs}
\end{table}


\subsection*{Missing Data CGs}

A chain graph (CG) $\G(V, E)$ is a mixed graph consisting of directed ($\rightarrow$) and undirected ($-$) edges such that there are no partially directed cycles (a series of edges from $V_i$ back to itself such that at least one directed edge is present and every edge is either undirected or a directed edge pointing away from $V_i$.) Chain graphs have been  useful in modeling interference and dependent data problems \citep{ogburn2020causal, pena2020unifying, bhattacharya2020causal, tchetgen2021auto}. From a causal perspective, directed edges in a CG maintain the same interpretation as before, while an undirected edge $V_i - V_j$ can be interpreted as a potential causal relationship involving feedback between the variables $V_i$ and $V_j$ \citep{lauritzen1996graphical}.

Missing data CGs (m-CGs) have become an important tool for analyzing the limits of non-parametric identification in MNAR graphical models. In particular, \cite{shpitser2016consistent} and \cite{malinsky2021semiparametric} describe an m-CG model\footnote{\cite{sadinle2016itemwise} independently described the same model without an m-CG representation.}, known as the \emph{no self-censoring model}, where each missing variable $X_i^{(1)}$ is permitted to cause the missingness of any other variable except itself, and all missingness indicators are pairwise connected via undirected edges. That is, the only contraints arising in the missingness mechanism of the model are due to the absence of self-censoring edges $X_i^{(1)} \rightarrow R_i$. Soon after, \cite{nabi2020mid} showed that all m-DAGs and m-ADMGs with $K$ missing variables and identified full data laws are in fact statistical sub-models of a no self-censoring m-CG defined on as many variables, i.e., they imply the same or more independence constraints as the corresponding m-CG. A three variable no self-censoring model is shown in Figure~\ref{fig:m-graphs}(c). Though they have very different causal implications, the result in \cite{nabi2020mid} implies that the m-DAG and m-ADMG in Figures~\ref{fig:m-graphs}(a, b) are statistical sub-models of the m-CG in Figure~\ref{fig:m-graphs}(c). In the following sections, we rely on this observation quite frequently to unify the imputation theory in our paper, so that it applies equally well to any missing data model that factorizes according to an m-DAG, m-ADMG, or m-CG with an identified full data law.

We end by noting relevant statistical properties of CGs. The statistical model of a CG $\G(V, E)$ is the set of distributions $p(V)$ that satisfy a two-level factorization involving undirected connected components of $\G$ \citep{lauritzen1996graphical}.\footnote{A more general version of the factorization is provided in \cite{shpitser2023lauritzen}.} Once again, in our paper, we rely only on a local Markov property of m-CGs defined using an appropriately modified definition of the Markov blanket shown in Table~\ref{table:mb-cgs}. Given an m-CG $\G(V, E)$, applying this definition provides a list of constraints for each $V_i \in V$ of the form shown in~\eqref{eq:local-property}.
\begin{table}[t]
\centering
\begin{tabular}{|l|l|l|}
\hline
\textbf{Set name} & \textbf{Notation} & \textbf{Definition} \\
\hline
Neighbors & $\nb_\G(V_i)$ & $\{V_j \ |\  V_j - V_i \text{ in } \G \}$ \\
Co-parents & $\text{co-pa}_\G(V_i)$ & $\{V_j \mid V_j \rightarrow \circ \leftarrow V_i \text{ in } \G \text{ where } \circ$ \text{is one or more } \\
& & $\text{ vertices connected by an undirected path} \}$ \\
Markov blanket & $\mb_\G(V_i)$ & $\pa_\G(V_i) \cup \nb_\G(V_i) \cup \ch_\G(V_i) \cup  \text{co-pa}_\G(V_i)$\\
\hline
\end{tabular}
\caption{Definition of a Markov blanket in CGs.}
\label{table:mb-cgs}
\end{table}

\section{New Constructive Identification Method For Strictly Positive Full Laws}
\label{sec:constructive_proof}

In this section, we present a new identification method for the full law of missing data models based on m-graphs. To motivate the need for this new method, we briefly summarize the identification strategy and proof used in prior work by \cite{nabi2020mid} and \cite{malinsky2021semiparametric}. While this previous method is sound and complete for identification of the full law in m-DAGs and m-ADMGs\footnote{It has not been explicitly shown to be complete for m-CGs, but rather it was shown by \cite{nabi2020mid} that all m-DAGs and m-ADMGs with identified full laws are sub-models of a no self-censoring m-CG model studied in \cite{shpitser2016consistent}, \cite{sadinle2016itemwise}, and \cite{malinsky2021semiparametric}.}, we show that it is not constructive, and thus difficult to use in practice. In contrast, we show that our method is constructive, and it remains sound and complete, allowing for the design of an imputation algorithm in Section~\ref{sec:imputation_algorithm}. Further, this constructive argument is easily extended to scenarios with positivity violations, as we show in Section~\ref{sec:positivity-violations}. 

The key difference between the two methods lies in the factorizations used. \cite{nabi2020mid} and \cite{malinsky2021semiparametric} use an odds ratio factorization \citep{chen2007semiparametric} of the missingness mechanism, while we use a pattern mixture factorization \citep{little1993pattern} of the entire full law. Henceforth, for brevity, we will refer to the former as the OR-ID (odds ratio identification) method and the latter as the PM-ID (pattern mixture identification) method. As will be demonstrated, PM-ID allows both identification and imputation to be formulated in terms of extrapolation densities that can be derived explicitly in terms of the observed law.

Both OR-ID and the new PM-ID algorithm we present in this section require strict positivity of the full law. We relax this assumption in Section~\ref{sec:positivity-violations}, where we discuss certain common positivity violations in missing data problems.

\subsection{Graphical Missing Data Models with Identified Full Laws}
\label{sec:constructive_proof:model}

As mentioned earlier, we focus on missing data models with identified full laws, and in this section, will focus on ones that have strictly positive full laws. More formally,
consider a missing data model $p(X^{(1)}, O, R, X)$ that factorizes according to an m-graph $\G(V, E)$, where $V=X^{(1)} \cup O \cup R \cup X$. We make the following positivity assumptions that have been standard in prior work on missing data graphs. For some small constant $\epsilon > 0$, we have
\begin{enumerate}
    \setlength\itemindent{1em}
    \item[\textbf{(P1)}] Positivity of the missingness mechanism: $p(R \mid  X^{(1)}, O) \geq \epsilon$.
    \item[\textbf{(P2)}] Positivity of the target law: $p(X^{(1)}, O) \geq \epsilon$.
\end{enumerate}
Note by our convention, these conditions hold for all values of $R, X^{(1)}, O$, and (P1, P2) together imply strict positivity of the full law $p(X^{(1)}, O, R)$.

From prior work in \cite{bhattacharya2019mid}, \cite{nabi2020mid}, and \cite{malinsky2021semiparametric}, we know that under positivity assumptions (P1, P2), an m-graph $\G$ implies identification of the full law $p(X^{(1)}, O, R)$ if the following structural assumption\footnote{(S1) here is a succinct and equivalent description of previous graphical criteria posed in \cite{nabi2020mid} phrased in terms of absence of self-censoring edges and \emph{colluders} (colliders of the form $X_i^{(1)} \rightarrow R_j \leftarrow R_i$) for m-DAGs, and absence of self-censoring edges and \emph{colluding paths} (paths between $X_i^{(1)}$ and $R_i$ where every intermediate vertex is a collider $\rightarrow \circ \leftarrow$, $\leftrightarrow \circ \leftarrow$, or $\rightarrow \circ \leftrightarrow$) for m-ADMGs.} is satisfied,
\begin{enumerate}
    \setlength\itemindent{1em}
    \setlength\itemsep{0em}
    \item[\textbf{(S1)}] No missingness indicator lies in the Markov blanket of the variable it indicates.
    \item[] That is, for all $i \in [K]$, we have $R_i \notin \mb_{\G}(X_i^{(1)})$,
\end{enumerate}
where the definition of $\mb_\G(R_i)$ is inferred from the appropriate Tables~\ref{table:mb-dags},~\ref{table:mb-admgs}, or~\ref{table:mb-cgs} depending on whether $\G$ is an m-DAG, m-ADMG, or m-CG respectively.

For m-DAG and m-ADMG models satisfying strict positivity, \cite{nabi2020mid} further show that the graphical criterion (S1) is sound and complete for identification of the full law: If $\G$ satisfies (S1), then the full law is provably identified; if $\G$ does not satisfy (S1), then the full law is provably \emph{not} identified. 

It is easy to confirm that all three m-graphs in Figure~\ref{fig:m-graphs} satisfy (S1), and thus imply identification of the full law. Particularly, the following set of conditional independences can be derived as a direct consequence of (S1) and the local Markov properties of m-graphs described in the previous section: Since each missingness indicator does not appear in the Markov blanket of the corresponding variable it indicates,
\begin{align}
    \text{for all } i \in [K], \text{ we have } X_i^{(1)}  \Perp  R_i \mid X^{(1)}_{-i}, R_{-i}, O.
    \label{eq:ri_indep_xi}
\end{align}
These independences play a key role in non-parametric identification of the full law in all three models. We  demonstrate this first through a non-constructive identification method (OR-ID) from prior work, and then with a novel constructive identification method (PM-ID).

\subsection{Non-constructive method: OR-ID \citep{nabi2020mid, malinsky2021semiparametric}}
\label{sec:constructive_proof:or-id}

OR-ID uses an odds ratio factorization \citep{chen2007semiparametric} of $p(R\mid X^{(1)}, O)$ to prove identifiability of the full law whenever (P1, P2, S1) hold.
First, by Bayes rule the full law factorizes as,
\begin{align*}
p(X^{(1)}, O, R) = \frac{p(X^{(1)}, O, R=1)}{p(R=1 \mid O, X^{(1)})} \cdot p(R \mid X^{(1)}, O).
\end{align*}
The numerator in the above fraction is a function of the observed law by consistency, $p(X^{(1)}, O, R=1) = p(X, O, R=1)$. Thus, identification of the full law can be reduced to identification of the missingness mechanism $p(R\mid X^{(1)}, O)$ for all possible values of $R, X^{(1)}, O$. 
For any arbitrary choice of reference values for random variables $A$ and $B$, the odds ratio factorization of a joint conditional distribution $p(A, B \mid C)$ is \citep{chen2007semiparametric},
\begin{align*}
p(A, B \mid C) &= \frac{1}{Z(C)} \cdot p(A \mid B=b_0, C) \cdot p(B \mid A=a_0, C) \cdot \odds(A, B \mid C),
\end{align*}
where the odds ratio function $\odds(A, B \mid C)$ is defined as,
\begin{align*}
\odds(A, B \mid C) &= \frac{p(A\mid B, C)}{p(A=a_0 \mid B, C)} \cdot \frac{p(A=a_0 \mid B=b_0, C)}{p(A\mid B=b_0, C)}, 
\end{align*}
and $Z(C) = \sum_{A, B} p(A \mid B=b_0, C) \cdot p(B \mid A=a_0, C) \cdot \odds(A, B \mid C)$ is a normalizing term.

When the odds ratio factorization is applied recursively to the missingness mechanism with reference values of $1$ for each $R_k$, it gives us the following factorization of $p(R\mid X^{(1)}, O)$,
{\small
\begin{align*}
    p(R \mid X^{(1)}, O) = \frac{1}{Z(X^{(1)}, O)} \cdot \prod_{k=1}^{K} p(R_k \mid R_{-k}=1, X^{(1)}, O) \cdot \prod_{k=2}^K \odds(R_k, R_{\prec k} \mid R_{\succ k}=1, X^{(1)}, O), \label{eq:or_factorization}
\end{align*}}
where $Z(X^{(1)}, O)$ is a normalizing term as before and the sets $R_{-k}, R_{\prec k}, R_{\succ k}$ are defined as $R \setminus \{R_k\},  \{R_1, \dots, R_{k-1}\}$, and $\{R_{k+1}, \dots, R_K\}$ respectively.

\cite{nabi2020mid} and \cite{malinsky2021semiparametric} provide an argument for why each of the terms in this factorization are functions of observed data. Rather than describing the identification argument in full generality, we show a specific example of it applied to Figures~\ref{fig:m-graphs}(a, b, c), and demonstrate why it is not constructive. This makes  OR-ID difficult to use in practical settings, as directly fitting the observed data likelihood of missing data models can be computationally intractable even for small problems (due to the normalizing function and the need to marginalize over the missing variables in each row of data). \cite{malinsky2021semiparametric} provide semiparametric estimators of the missingness mechanism, but these estimators are also not scalable and require fairly involved estimation subroutines that are not easily tailored for different missing data graphs. Moreover, the non-parametric influence function may not be efficient in the working m-graph model in cases where it is a submodel of the corresponding no self-censoring model studied by them. 

\subsubsection{An example of identification using OR-ID}
\label{sec:constructive_proof:or-id:example}

\begin{example}

We apply OR-ID to Figures~\ref{fig:m-graphs}(a, b, c) as an example of how it works and why it does not provide an explicit functional for estimation. The arguments below apply to all three m-graphs. The odds ratio factorization of the missingness mechanism $p(R\mid X^{(1)})$ is,
\begin{equation*}
    \small
    \begin{aligned}
        p(R \mid X^{(1)}) &= \frac{1}{Z(X^{(1)})} \cdot
        p(R_1 \mid R_{23}=1, X^{(1)})
        p(R_2 \mid R_{13}=1, X^{(1)})
        p(R_3 \mid R_{12}=1, X^{(1)})
        \\
        & \times \operatorname{OR}(R_2, R_1 \mid R_3=1, X^{(1)})
        \operatorname{OR}(R_{12}, R_3 \mid X^{(1)}).
    \end{aligned}
\end{equation*}
Further decomposition of the term $\operatorname{OR}(R_{12}, R_3 \mid X^{(1)})$ gives
\begin{equation*}
    \small
    \begin{aligned}
        p(R \mid X^{(1)}) &= \frac{1}{Z(X^{(1)})} \cdot
        p(R_1 \mid R_{23}=1, X^{(1)})
        p(R_2 \mid R_{13}=1, X^{(1)})
        p(R_3 \mid R_{12}=1, X^{(1)})
        \\
        & \times
        \operatorname{OR}(R_{2}, R_1 \mid R_3=1, X^{(1)})
        \operatorname{OR}(R_{2}, R_3 \mid R_1=1, X^{(1)})
        \operatorname{OR}(R_{1}, R_3 \mid R_2=1, X^{(1)})
        \\
        & \times
        \frac{\operatorname{OR}(R_{1}, R_2 \mid R_3, X^{(1)})}{\operatorname{OR}(R_{1}, R_2 \mid R_3=1, X^{(1)})}.
    \end{aligned}
\end{equation*}

First, we identify the univariate conditional factors. We have,
\begin{align*}
p(R_1 \mid R_{23}=1, X^{(1)}) = p(R_1 \mid R_{23}=1, X_{23}^{(1)}) = p(R_1 \mid R_{23}=1, X_{23}),
\end{align*}
where the first equality follows from the independence $R_1^{(1)} \Perp X_1 \mid X^{(1)}_{23}, R_{23}$ appearing in \eqref{eq:ri_indep_xi}, and the second equality follows from consistency. A similar argument yields the identification of $p(R_2 \mid R_{13}=1, X^{(1)})$ and $p(R_3 \mid R_{12}=1, X^{(1)})$.

Second, to identify the odds ratios, we use their symmetry property and write them in two ways. For example,  consider $\operatorname{OR}(R_{1}, R_2 \mid R_3=1, X^{(1)})$ written in two ways,
\begin{equation*}
    \small
    \begin{aligned}
        \operatorname{OR}(R_{1}, R_2 \mid R_3=1, X^{(1)})
        &=
        \frac{p(R_1 \mid R_{2}, R_3=1, X^{(1)})}{p(R_1=1 \mid R_{2}, R_3=1, X^{(1)})} \cdot
        \frac{p(R_1=1 \mid R_{2}=1, R_3=1, X^{(1)})}{p(R_1 \mid R_{2}=1, R_3=1, X^{(1)})} \\
        &=\frac{p(R_2 \mid R_{1}, R_3=1, X^{(1)})}{p(R_2=1 \mid R_{1}, R_3=1, X^{(1)})} \cdot
        \frac{p(R_2=1 \mid R_{1}=1, R_3=1, X^{(1)})}{p(R_2 \mid R_{1}=1, R_3=1, X^{(1)})}.
    \end{aligned}
\end{equation*}
From the first equality we can infer that this odds ratio is not a function of $X_1^{(1)}$, as $R_1 \Perp X_1^{(1)} \mid X^{(1)}_{23}, R_{23}$ due to \eqref{eq:ri_indep_xi}. From the second equality, we can infer that the odds ratio is not a function of $X_2^{(1)}$, as $R_2 \Perp X_2^{(1)} \mid X^{(1)}_{23}, R_{23}$. Together with the previous claim, this implies that the odds ratio depends only on the joint distribution $p(R_{1}, R_2, X_3^{(1)}, R_3=1)$, which is identified by consistency. Similar arguments lead to the identification of the other two odds ratios $\operatorname{OR}(R_{2}, R_3 \mid R_1=1, X^{(1)})$ and $\operatorname{OR}(R_{1}, R_3 \mid R_2=1, X^{(1)})$.

Third, the 3-way interaction term can be written in three ways, due to symmetry,
\begin{equation*}
    \small
    \begin{aligned}
        \frac{\operatorname{OR}(R_{1}, R_2 \mid R_3, X^{(1)})}{\operatorname{OR}(R_{1}, R_2 \mid R_3=1, X^{(1)})}
        =
        \frac{\operatorname{OR}(R_{2}, R_3 \mid R_1, X^{(1)})}{\operatorname{OR}(R_{2}, R_3 \mid R_1=1, X^{(1)})}
        =
        \frac{\operatorname{OR}(R_{1}, R_3 \mid R_2, X^{(1)})}{\operatorname{OR}(R_{1}, R_3 \mid R_2=1, X^{(1)})}.
    \end{aligned}
\end{equation*}
By the independences in \eqref{eq:ri_indep_xi}, the term on the left does not depend on $X^{(1)}_{12}$, the middle term does not depend on $X^{(1)}_{23}$, and the right one does not depend on $X^{(1)}_{13}$. Thus, the interaction term does not depend on any of the missing variables $X^{(1)}_{123}$, and is identified.

Finally, identification of all these terms implies identification of the normalization function $Z(X^{(1)})$, and thus the propensity score $p(R \mid X^{(1)})$ and the full law $p(X^{(1)}, R)$.

\end{example}

\subsubsection{OR-ID is non-constructive}

Although OR-ID establishes identification for all terms in the factorization, it does not provide explicit expressions for the odds ratios and the higher order interaction terms in terms of the observed law. For example, despite the fact that OR-ID establishes that $\operatorname{OR}(R_{1}, R_2 \mid R_3=1, X^{(1)})$ is a function of the observed joint distribution $p(R_{1}, R_{2}, R_{3}=1, X^{(1)}_3) = p(R_{1}, R_{2}, R_{3}=1, X_3)$, we have the following inequality:
\begin{equation*}
    \small
    \begin{aligned}
        \operatorname{OR}(R_{1}, R_2 \mid R_3=1, X^{(1)})
        & \neq
        \frac{p(R_1 \mid R_{2}, R_{3}=1, X^{(1)}_3)}{p(R_1=1 \mid R_{2}, R_{3}=1, X^{(1)}_3)} \cdot
        \frac{p(R_1=1 \mid R_{2}=1, R_{3}=1, X^{(1)}_3)}{p(R_1 \mid R_{2}=1, R_{3}=1, X^{(1)}_3)},
    \end{aligned}
\end{equation*}
In other words, OR-ID establishes that this odds ratio can be expressed as some function of observed data $f(p(R_1, R_2, R_3=1, X_3))$, however, it does not tell us exactly what this function $f$ is. Further, the above inequality demonstrates this function $f$ could be quite complicated in general, as the odds ratios often do not ``collapse'' into simpler atomic expressions involving only observed data; see \cite{didelez2010graphical} and \cite{didelez2022logic} for some special cases where odds ratios are collapsible in the presence of selection bias. The same non-collapsibility issue is true for the other odds ratio and interaction terms appearing in the OR factorization. This motivates our new constructive identification method.

\subsection{PM-ID: A constructive identification method}
\label{sec:constructive_proof:pm-id}

PM-ID is an identification method that uses a pattern mixture factorization \citep{little1993pattern} of the full law. In what follows, we call an ordered tuple $r=(r_1, \dots, r_K$) a \textit{missingness pattern}, and use $\mathcal{R} = \{0,1\}^K$ to denote the set of all possible missingness patterns. For each missingness pattern $r \in \mathcal{R}$, the set $\mathbb{M}(r) := \{i \in [K] \mid  r_i = 0\}$ contains the indices of all the missing variables that are unobserved in the pattern $r$, while the set $\mathbb{O}(r) = [K] \setminus \mathbb{M}(r)$ contains the indices of all the missing variables that are observed in the pattern. 



For any missingness pattern $r \in \mathcal{R}$, the pattern mixture factorization (PM factorization) of the full law at $r$ is given by, 
\begin{equation}
    \label{eq:pm_factorization}
    \small
    \begin{aligned}
    p(X^{(1)}, O, R=r)
    &=
    p(X^{(1)}_{\mathbb{M}(r)} \mid X_{\mathbb{O}(r)}^{(1)}, O, R=r)
    \cdot p(X_{\mathbb{O}(r)}^{(1)} \mid O, R=r) \cdot p(O, R=r)
    &(\text{chain rule})
    \\
    &=
    \underbrace{
        p(X^{(1)}_{\mathbb{M}(r)} \mid X_{\mathbb{O}(r)}, O, R=r)
    }_{\text{extrapolation density}}
    \cdot
    \underbrace{
        p(X_{\mathbb{O}(r)} \mid O, R=r)
    }_{\text{interpolation density}}
    \cdot p(O, R=r).
    &(\text{consistency})
    \end{aligned}
\end{equation}
The PM factorization is well-defined if $p(X_{\mathbb{O}(r)}, O, R=r)$ is positive, which is guaranteed by positivity assumptions (P1, P2).

The term $p(X_{\mathbb{O}(r)} \mid O, R=r)$ in \eqref{eq:pm_factorization} is known as the \textit{interpolation density}  for pattern $r$, while the term $p(X^{(1)}_{\mathbb{M}(r)} \mid X_{\mathbb{O}(r)}^{(1)}, O, R=r)$ is known as the \emph{extrapolation density}. Since the interpolation density always consists of variables that are observed in the given pattern, it is always identified by consistency. Identifiability of the extrapolation density, however, must be established based on model assumptions, and the full joint distribution $p(X^{(1)}, O, R=r)$ for a certain pattern $r$ is identified if and only if the corresponding extrapolation density is identified. Thus, another way to establish identifiability of the full law $p(X^{(1)}, O, R)$ that is distinct from the strategy used by OR-ID, is to argue that the extrapolation density $p(X^{(1)}_{\mathbb{M}(r)} \mid X_{\mathbb{O}(r)}^{(1)}, O, R=r)$ is identified for all possible missingness patterns $r \in {\cal R}$.

This is the premise for the PM-ID algorithm proposed here. We also establish in Section~\ref{sec:imputation_algorithm} that knowing the extrapolation density for a certain missingness pattern $r$ is sufficient for imputation of data rows matching the pattern $r$. Hence, the extrapolation density plays a key role in our identification and imputation strategies. The following example makes the intuition behind using the PM factorization for identification more concrete, and leads to a general algorithm in the next subsection.

\subsubsection{An example of constructive identification with PM factorization}
\label{sec:constructive_proof:pm-id:example}

\begin{example}

Consider the m-DAG in Figures~\ref{fig:m-graphs}(a, b, c). We   compute explicit identifying functionals of $p(X^{(1)}, R=r)$ for every pattern $r \in \{0,1\}^3$ according to a partial order---starting with the complete case pattern $R=111$, followed by any patterns where only one variable is unobserved (e.g., $R=011$),  then patterns where only two variables are unobserved (e.g, $R=001$), and finally the pattern where no variable is observed $R=000$. 
\begin{enumerate}[leftmargin=*]
    \item The complete case pattern $R=111$:
        By consistency, $p(X^{(1)}, R=111) = p(X, R=111)$.
    \item Patterns with only one unobserved variable: Say we first process the pattern $R=011$ (the ordering between these patterns is not important). The PM factorization gives us,
        \begin{equation*}
            \begin{aligned}
                p(X^{(1)}, R=011) &= p(X^{(1)}_1 \mid X^{(1)}_{23}, R=011) \cdot p(X_{23} \mid R=011) \cdot p(R=011).
            \end{aligned}
        \end{equation*}
        The extrapolation density is identified as,
        \begin{equation*}
            \begin{aligned}
                p(X^{(1)}_1 \mid X^{(1)}_{23}, R=011) &= p(X^{(1)}_1 \mid X_{23}^{(1)}, R=111) = p(X_1 \mid X_{23}, R=111),
            \end{aligned}
        \end{equation*}
    where the first equality follows from $X_1^{(1)} \Perp R_1 \mid X^{(1)}_{23}, R_{23}$ and the second follows from consistency. The identification arguments for $R=101$ and $R=110$ are similar, using the independences $X_2^{(1)} \Perp R_2 \mid X^{(1)}_{13}, R_{13}$ and $X_3^{(1)} \Perp R_3 \mid X^{(1)}_{12}, R_{12}$ respectively. For example, for the pattern $R=101$ we have,
    \begin{equation*}
        \begin{aligned}
            p(X^{(1)}, R=101)
            &= p(X_2^{(1)} \mid X_{13}^{(1)}, R=101) \cdot p(X_{13} \mid R=101) \cdot p(R=101), \\
            &= p(X_2 \mid X_{13}, R=111) \cdot p(X_{13} \mid R=101) \cdot p(R=101).
        \end{aligned}
    \end{equation*}
        
    That is, identification of the joint distribution of patterns with one unobserved variable follows by re-expressing them as functions of the complete case pattern. In the next step of PM-ID, we express the patterns with two unobserved variables as functions of the (now identified patterns) with only one unobserved variable.
        
    \item Patterns with two unobserved variables: Consider the pattern $R=001$ and its corresponding PM factorization,
        \begin{equation*}
            \begin{aligned}
                p(X^{(1)}, R=001) &= p(X^{(1)}_{12} \mid X^{(1)}_{3}, R=001) \cdot p(X_{3} \mid R=001) \cdot p(R=001).
            \end{aligned}
        \end{equation*}
        A key to identifying such patterns is that the extrapolation density is identified if  $p(X^{(1)}_{1} \mid X^{(1)}_{23}, R=001)$ and $p(X^{(1)}_{2} \mid X^{(1)}_{13}, R=001)$ are identified, as these conditional distributions are sufficient for Gibbs sampling from the joint distribution \citep{geman1984stochastic}. Henceforth, we refer to such conditionals as Gibbs factors. We identify the Gibbs factors for this extrapolation density  using different conditional independences as follows,
        \begin{equation*}
            \begin{aligned}
                p(X^{(1)}_{1} \mid X^{(1)}_{23}, R=001)
                &= p(X^{(1)}_{1} \mid X^{(1)}_{23}, R=101),
                &(X_1^{(1)} \Perp R_1 \mid X^{(1)}_{23}, R_{23})
                \\
                p(X^{(1)}_{2} \mid X^{(1)}_{13}, R=001)
                &= p(X^{(1)}_{2} \mid X^{(1)}_{13}, R=011).
                &(X^{(1)}_2 \Perp R_2 \mid X^{(1)}_{13}, R_{13})
            \end{aligned}
        \end{equation*}
        The conditional distributions on the right hand side of these equations are identified, as they are functions of the joint distributions $p(X^{(1)}, R=101)$ and $p(X^{(1)}, R=011)$ that were identified in the previous step.
        
        We can repeat this argument for pattern $R=010$:
        The extrapolation density $p(X^{(1)}_{13} \mid X^{(1)}_{2}, R=010)$ also consists of two Gibbs factors identified as follows,
        \begin{equation*}
            \begin{aligned}
                p(X^{(1)}_{1} \mid X^{(1)}_{23}, R=010)
                &= p(X^{(1)}_{1} \mid X^{(1)}_{23}, R=110),
                &(X_1^{(1)} \Perp R_1 \mid X^{(1)}_{23}, R_{23})
                \\
                p(X^{(1)}_{3} \mid X^{(1)}_{12}, R=010)
                &= p(X^{(1)}_{3} \mid X^{(1)}_{12}, R=011),
                &(X^{(1)}_3 \Perp R_3 \mid X^{(1)}_{12}, R_{12})
            \end{aligned}
        \end{equation*}
        where the conditionals on the right hand side are again functions of joint distributions involving only one unobserved variable that were identified in the previous step.
    
        Finally for pattern $R=100$, the Gibbs factors for the extrapolation density $p(X^{(1)}_{23} \mid X^{(1)}_{1}, R=100)$ can also be identified as,
        \begin{equation*}
            \begin{aligned}
                p(X^{(1)}_{2} \mid X^{(1)}_{13}, R=100)
                &= p(X^{(1)}_{2} \mid X^{(1)}_{13}, R=110),
                &(X^{(1)}_2 \Perp R_2 \mid X^{(1)}_{13}, R_{13})
                \\
                p(X^{(1)}_{3} \mid X^{(1)}_{12}, R=100)
                &= p(X^{(1)}_{3} \mid X^{(1)}_{12}, R=101).
                &(X^{(1)}_3 \Perp R_3 \mid X^{(1)}_{12}, R_{12})
            \end{aligned}
        \end{equation*}
    \item For the final pattern $R=000$,
        the extrapolation density is $p(X^{(1)} \mid R=000)$. This is identified because all three Gibbs factors can be expressed as functions of of $p(X^{(1)}, R=100)$, $p(X^{(1)}, R=010)$ and $p(X^{(1)}, R=001)$ that were identified in the previous step.
        \begin{equation*}
            \begin{aligned}
                p(X^{(1)}_{1} \mid X^{(1)}_{23}, R=000)
                &= p(X^{(1)}_{1} \mid X^{(1)}_{23}, R=100),
                &(X_1^{(1)} \Perp R_1 \mid X^{(1)}_{23}, R_{23})
                \\
                p(X^{(1)}_{2} \mid X^{(1)}_{13}, R=000)
                &= p(X^{(1)}_{2} \mid X^{(1)}_{13}, R=010),
                &(X^{(1)}_2 \Perp R_2 \mid X^{(1)}_{13}, R_{13})
                \\
                p(X^{(1)}_{3} \mid X^{(1)}_{12}, R=000)
                &= p(X^{(1)}_{3} \mid X^{(1)}_{12}, R=001).
                &(X^{(1)}_3 \Perp R_3 \mid X^{(1)}_{12}, R_{12})
            \end{aligned}
        \end{equation*}
\end{enumerate}
Since $p(X^{(1)}, R=r)$ is identified for every $r \in \mathcal{R}$, the full law is identified.

\end{example}

\subsubsection{The PM-ID Algorithm}
\label{sec:constructive_proof:theorem}

The example in the previous subsection provides the following insights that we use to phrase a general identification algorithm based on the pattern mixture factorization:
\begin{enumerate}
    \item The distribution $p(X^{(1)}, O, R=r)$ for a specific pattern $r \in {\cal R}$ is identified if and only if the corresponding extrapolation density $p(X^{(1)}_{\mathbb{M}(r)} \mid X_{\mathbb{O}(r)}^{(1)}, O, R=r)$ is identified.
    \item To identify an extrapolation density $p(X^{(1)}_{\mathbb{M}(r)} \mid X_{\mathbb{O}(r)}^{(1)}, O, R=r)$, it is sufficient to identify the Gibbs factors $p(X^{(1)}_{i} \mid X^{(1)}_{-i}, O, R=r)$ for each index $i \in \mathbb{M}(r)$.
    \item Identification of each Gibbs factor for a pattern $r$ can be established by using one of the independences in \eqref{eq:ri_indep_xi} to show that it is equal to the Gibbs factors of another pattern $\widetilde{r}$ whose joint $p(X^{(1)}, O, R=\widetilde{r})$ is already known to be identified. That is, one way to set up a recursive identification procedure is to process patterns according to a partial order---starting with the complete case pattern---where $\widetilde{r} \prec r$ if identification of one of the Gibbs factors of the extrapolation density of $r$ relies on identifying $p(X^{(1)}, O, R=\widetilde{r})$, and the patterns are incomparable otherwise.
\end{enumerate}

These core ideas of the PM-ID algorithm---in particular the partial order on patterns and borrowing information from earlier patterns to identify individual Gibbs factors of the current pattern---can be encoded succinctly using a \emph{pattern DAG} defined as follows.\footnote{Strictly speaking, per Definition~\ref{def:pattern_dag}, the term pattern DAG here refers to multigraphs and not just simple DAGs. The interpretation of the pattern DAGs we use here is also different than the pattern graphs introduced in \cite{Chen.2022.PatternGraphs}. We elaborate on these differences in Section~\ref{sec:comparison}. }

\begin{definition}[Pattern DAG]
    \label{def:pattern_dag}
    Given an m-graph $\G(V, E)$, a pattern DAG $\mathcal{H}(\widetilde{V}, \widetilde{E})$ is a directed acyclic multigraph (i.e., a directed acyclic graph with the possibility of multiple edges between the same vertices) satisfying the following three properties: (i) The vertex set $\widetilde{V}$ is a subset of all possible missingness patterns, i.e., $\widetilde{V} \subseteq {\cal R}$. (ii) The edge set $\widetilde{E}$ consists of  labeled edges of the form $r \rightarrow^{i} \widetilde{r}$, where $i \in [K]$ is an index and $r, \widetilde{r} \in {\cal R}$ are two distinct missingness patterns. (iii) If there exists an edge $r \rightarrow^{i} \widetilde{r}$ in the pattern DAG ${\cal H}$ then it must be the case that for any distribution $p$ that factorizes according to the m-graph $\G$,
        \begin{equation}
            \label{eq:gibbs_factor_id_pattern_dag}
            p(X_{i}^{(1)} \mid X_{-i}^{(1)}, O, R=r) = p(X_{i}^{(1)} \mid X_{-i}^{(1)}, O, R=\widetilde{r}).
        \end{equation}
\end{definition}
 
Property (iii) tries to link independences in the m-graph $\G$ to identification of Gibbs factors in the extrapolation density of $\widetilde{r}$.  While the equality in \eqref{eq:gibbs_factor_id_pattern_dag} is symmetric, the asymmetry of the edge $r \rightarrow^i \widetilde{r}$ is important: The directed edges in ${\cal H}$ are used to form a partial order of identification ensuring the pattern $r$ is identified before processing $\widetilde{r}$. This ordering then justifies borrowing information from $p(X^{(1)}, O, R=r)$ in order to identify the $i^{th}$ Gibbs factor in the extrapolation density of $p(X^{(1)}, O, R=\widetilde{r})$ based on \eqref{eq:gibbs_factor_id_pattern_dag}.

Given an m-graph $\G$, there are several ways to create a pattern DAG satisfying Definition~\ref{def:pattern_dag}. However, an arbitrary pattern DAG may not necessarily aid in identification using PM factorization. 
We consider a pattern DAG ${\cal H}(\widetilde{V}, \widetilde{E})$ to be \emph{PM-ID} compatible if,
\begin{itemize}
        \item[\textbf{(I1)}] The complete case pattern is present in ${\cal H}$ as a root node (has no incoming edges).
        \item[\textbf{(I2)}] For every $r \in \widetilde{V}$ and every $i \in \mathbb{M}(r)$, there is at least one directed path from the complete case pattern to $r$ such that the last edge on the path is of the form $\circ\rightarrow^i r$.
\end{itemize}
Figures~\ref{fig:id-non-id-pattern-dags}(b) and (c) show  examples of  pattern DAGs that satisfy the properties listed in Definition~\ref{def:pattern_dag} with respect to the m-graph in Figure~\ref{fig:id-non-id-pattern-dags}(a). The pattern DAG in Figure~\ref{fig:id-non-id-pattern-dags}(b) is also PM-ID compatible, as it satisfies (I1, I2). In contrast, the pattern DAG in Figure~\ref{fig:id-non-id-pattern-dags}(c) does not satisfy (I2), as there is no directed path from the complete case pattern to the pattern $00$ that ends with an edge labeled with the missing index $2$.

Intuitively, pattern DAGs that are PM-ID compatible set up a recursive identification argument by linking Gibbs factors of each pattern to ones found in previously identified patterns, with the complete case pattern serving as the base case that is identified simply by consistency. The following theorem formalizes this intuition.\footnote{This result will also prove useful for the extension of PM-ID with positivity violations, where $\widetilde{V} \not= {\cal R}$. We note that we are not the first to provide a constructive argument for full law identification based on PM factorization---\cite{sadinle2016itemwise} provide one for full law identification of the no self-censoring model. However, Theorem~\ref{thm:pattern-dag-id} is more general in the sense that it can be used for constructing the full law using only a subset of  patterns with positive support, as we show in Section~\ref{sec:positivity-violations}.}

\begin{figure}[t]
    \centering
    \subcaptionbox{}{
        \scalebox{0.9}{
        \begin{tikzpicture}
            \def\d{1.5cm}
            \begin{scope}[>=stealth, node distance=2cm]
                \path[->, very thick]
                node[] (x11) {$X^{(1)}_1$}
                node[right of=x11] (x21) {$X^{(1)}_2$}
                
                node[below of=x11] (r1) {$R_1$}
                node[below of=x21] (r2) {$R_2$}
                
                (x21) edge[blue] (r1)
                (x11) edge[blue] (r2)
                
                (x11) edge[red, <->, bend left] (x21)
                (x11) edge[blue, ->] (x21)
                ;
            \end{scope}
        \end{tikzpicture}}
    }
    \hspace{0.1\linewidth}
    \subcaptionbox{}{
        \scalebox{0.8}{
        \begin{tikzpicture}
            \def\d{2.0cm}
            \begin{scope}[>=stealth, node distance=2.0cm]
                \path[->, very thick]
                node[] (11) {$(11)$}
                node[below left of=11] (01) {$(01)$}
                node[below right of=11] (10) {$(10)$}
                node[below right of=01] (00) {$(00)$}

                (11) edge[blue] node [midway, right]{2} (10)
                (11) edge[blue] node [midway, left]{1} (01)
                (01) edge[blue] node [midway, left]{2} (00)
                (10) edge[blue] node [midway, right]{1} (00)
                
                ;
            \end{scope}
        \end{tikzpicture}
        }
    }
    \hspace{0.1\linewidth}
        \subcaptionbox{}{
        \scalebox{0.8}{
        \begin{tikzpicture}
            \def\d{2.0cm}
            \begin{scope}[>=stealth, node distance=2.0cm]
                \path[->, very thick]
                node[] (11) {$(11)$}
                node[below left of=11] (01) {$(01)$}
                node[below right of=11] (10) {$(10)$}
                node[below right of=01] (00) {$(00)$}

                (11) edge[blue] node [midway, right]{2} (10)
                (11) edge[blue] node [midway, left]{1} (01)
                (10) edge[blue] node [midway, right]{1} (00)
                
                ;
            \end{scope}
        \end{tikzpicture}
        }
    }
    \caption{
        Example demonstrating PM-ID compatibility of pattern DAGs: (a) An m-graph $\G$; (b) A pattern DAG defined wrt $\G$ that is PM-ID compatible; (c) A pattern DAG defined wrt $\G$ this is not PM-ID compatible due to the absence of a directed path from the complete case pattern $11$ to the pattern $00$ ending with an edge labeled with the index $2$.
    }
    \label{fig:id-non-id-pattern-dags}
\end{figure}
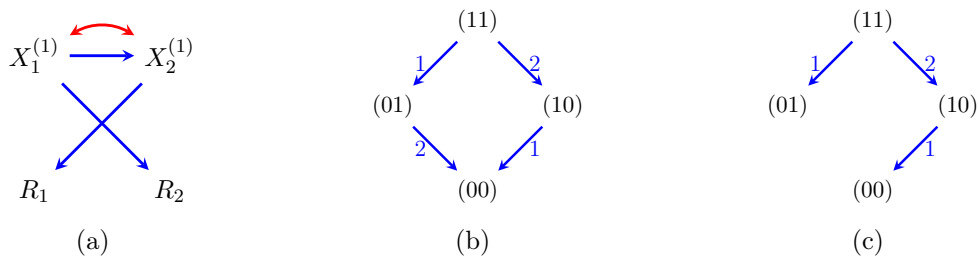

\begin{theorem}[PM-ID Algorithm Using A Pattern DAG] 
    \label{thm:pattern-dag-id}
    Given a missing data model $p(X^{(1)}, O, R, X)$ that satisfies (P1, P2) and factorizes according to an m-graph $\G(V, E)$, let ${\cal H}(\widetilde{V}, \widetilde{E})$ be any PM-ID compatible pattern DAG with respect to $\G$.
    Then the joint distribution $p(X^{(1)}, O, R=r)$ for all patterns $r \in \widetilde{V}$ are identified by processing them according to any valid topological ordering $\prec_{\cal H}$ of ${\cal H}$ as follows.
    \begin{itemize}
        \setlength\itemsep{0em}
        \item The first pattern in $\prec_{\cal H}$ is the complete case pattern, which is identified by consistency as $p(X^{(1)}, O, R=1) = p(X, O, R=1)$.
        \item For subsequent patterns $r\in \prec_{{\cal H}}$, the extrapolation density $p(X_{\mathbb{M}(r)}^{(1)} \mid X_{\mathbb{O}(r)}^{(1)}, O, R=r)$, and hence the joint $p(X^{(1)}, O, R=r)$, is identified as for each $i \in \mathbb{M}(r)$ we have, 
        \begin{equation}
        \label{eq:pm-id-equality}
        p(X^{(1)}_{i} \mid X^{(1)}_{\mathbb{M}(r) \setminus \{i\}}, X^{(1)}_{\mathbb{O}(r)}, O, R=r) 
        =
        p(X^{(1)}_{i} \mid X^{(1)}_{\mathbb{M}(r) \setminus \{i\}}, X^{(1)}_{\mathbb{O}(r)}, O, R=\widetilde{r}),
        \end{equation}
        where $\widetilde{r}$ is any parent of $r$ such that $\widetilde{r} \rightarrow^i r$ exists in ${\cal H}$.
    \end{itemize}
\end{theorem}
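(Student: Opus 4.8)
The plan is to prove the theorem by induction along the topological ordering $\prec_{\cal H}$, combining three ingredients that the excerpt has already set up: the pattern mixture factorization \eqref{eq:pm_factorization}, the fact that Gibbs factors suffice to recover a joint distribution, and property (iii) of pattern DAGs together with the compatibility conditions (I1, I2). The base case is immediate: by (I1) the complete case pattern $R=1$ is a root of ${\cal H}$, so it is first in any valid topological order, and $p(X^{(1)}, O, R=1) = p(X, O, R=1)$ by consistency, which is a function of the observed law. For the inductive step, fix a non-root pattern $r \in \widetilde{V}$ and assume that $p(X^{(1)}, O, R=\widetilde{r})$ is identified for every pattern $\widetilde{r}$ that precedes $r$ in $\prec_{\cal H}$; in particular this holds for every parent of $r$ in ${\cal H}$, since edges go from earlier to later patterns in a topological order.

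First I would invoke the PM factorization \eqref{eq:pm_factorization} at $r$: since (P1, P2) hold, $p(X_{\mathbb{O}(r)}, O, R=r)$ is positive and the factorization is well-defined, so
\begin{equation*}
p(X^{(1)}, O, R=r) = \underbrace{p(X^{(1)}_{\mathbb{M}(r)} \mid X_{\mathbb{O}(r)}, O, R=r)}_{\text{extrapolation density}} \cdot \underbrace{p(X_{\mathbb{O}(r)} \mid O, R=r)}_{\text{interpolation density}} \cdot\, p(O, R=r).
\end{equation*}
The interpolation density and $p(O, R=r)$ are functions of the observed law by consistency, so it remains to identify the extrapolation density $p(X^{(1)}_{\mathbb{M}(r)} \mid X_{\mathbb{O}(r)}^{(1)}, O, R=r)$ (writing $X_{\mathbb{O}(r)}$ as $X^{(1)}_{\mathbb{O}(r)}$ in the conditioning set is legitimate by consistency under $R=r$). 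This is a conditional joint distribution over the variables $\{X^{(1)}_i : i \in \mathbb{M}(r)\}$, strictly positive by (P1, P2), so by the standard Gibbs/Hammersley–Clifford-type fact cited via \cite{geman1984stochastic}, it is uniquely determined by its full conditionals $p(X^{(1)}_i \mid X^{(1)}_{\mathbb{M}(r)\setminus\{i\}}, X^{(1)}_{\mathbb{O}(r)}, O, R=r)$ for $i \in \mathbb{M}(r)$ — the Gibbs factors. Hence it suffices to identify each Gibbs factor.

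Next I would fix $i \in \mathbb{M}(r)$. By (I2) there is a directed path from the complete case pattern to $r$ whose last edge has the form $\widetilde{r} \rightarrow^i r$ for some parent $\widetilde{r}$; by the inductive hypothesis $p(X^{(1)}, O, R=\widetilde{r})$ is already identified. Property (iii) of the pattern DAG (Definition~\ref{def:pattern_dag}, equation \eqref{eq:gibbs_factor_id_pattern_dag}) applied to this edge gives exactly
\begin{equation*}
p(X^{(1)}_i \mid X^{(1)}_{-i}, O, R=r) = p(X^{(1)}_i \mid X^{(1)}_{-i}, O, R=\widetilde{r}),
\end{equation*}
which after conditioning is the displayed equality \eqref{eq:pm-id-equality}. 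The right-hand side is a conditional of the already-identified joint $p(X^{(1)}, O, R=\widetilde{r})$ — well-defined because that joint is strictly positive — so it is identified, and therefore so is the left-hand Gibbs factor. Ranging over all $i \in \mathbb{M}(r)$ identifies the extrapolation density, hence the joint at $r$; the induction then covers every $r \in \widetilde{V}$, and when $\widetilde{V} = {\cal R}$ this is the full law. I would close by remarking that the argument is genuinely constructive: unwinding the recursion yields an explicit functional (Gibbs-sample from the recursively obtained conditionals, then multiply by the observed interpolation density and $p(O, R=r)$).

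The main obstacle is the step asserting that the extrapolation density is pinned down by its Gibbs factors: one must check that the conditioning set $X^{(1)}_{\mathbb{O}(r)}, O$ and the event $R=r$ are held fixed throughout, so that we are genuinely looking at a single strictly positive joint distribution over $\{X^{(1)}_i : i\in\mathbb{M}(r)\}$ to which the Gibbs-conditionals-determine-the-joint principle applies, and that strict positivity (guaranteed by (P1, P2)) rules out the pathologies where full conditionals fail to determine the joint. A secondary point worth stating carefully is that property (iii) of the pattern DAG is what does the real graphical work — it is where the independences \eqref{eq:ri_indep_xi} implied by (S1) get used — but since Definition~\ref{def:pattern_dag} already builds \eqref{eq:gibbs_factor_id_pattern_dag} into the hypothesis, the proof of the theorem itself only needs to invoke it, not re-derive it; the construction of a PM-ID compatible ${\cal H}$ for a given m-graph (and thus the verification that (S1) suffices) is a separate matter handled elsewhere.
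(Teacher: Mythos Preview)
Your proposal is correct and follows essentially the same inductive argument as the paper: base case via (I1) and consistency, inductive step via (I2) and property (iii) of Definition~\ref{def:pattern_dag} to reduce each Gibbs factor to an already-identified pattern. Your write-up is in fact more explicit than the paper's own proof about the PM factorization and the Gibbs-factors-determine-the-joint step, but the structure and key ideas are the same.
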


\begin{proof}
We prove this result using induction.

\paragraph{Base case} The complete case pattern is indeed the first pattern processed in any valid topological ordering $\prec_{\cal H}$ due to the requirement (I1) and (I2) for ${\cal H}$ to be PM-ID compatible. Further, $p(X^{(1)}, O, R=1)$ is indeed identified by consistency.

\paragraph{Induction hypothesis and induction step} For any subsequent pattern in $r \in \prec_{\cal H}$,  suppose all previous patterns in the ordering have been identified. By requirement (I2) of PM-ID compatibility we know for every Gibbs factor in the extrapolation density $p(X^{(1)}_{\mathbb{M}(r)} \mid X^{(1)}_{\mathbb{O}(r)}, O, R=r)$, there exists at least one parent pattern $\widetilde{r}$ of $r$ such that~\eqref{eq:pm-id-equality} holds. The positivity assumptions (P1, P2) guarantee that these Gibbs factors are well defined. Further, because every $\widetilde{r}$ is a parent of $r$, we also know $\widetilde{r} \prec_{\cal H} r$. That is, by the induction hypothesis and (I2), all Gibbs factors of the extrapolation density of pattern $r$ can be expressed as functions of previously identified patterns. Since all Gibbs factors of the extrapolation density of pattern $r$ are identified, we can conclude that $p(X^{(1)}, O, R=r)$ is identified. The correctness of the base case and this induction step concludes identifiability of all patterns $r \in \widetilde{V}$.
\end{proof}

An immediate corollary of Theorem~\ref{thm:pattern-dag-id} is that given an m-graph $\G(V, E)$, if one can construct a PM-ID compatible pattern DAG ${\cal H}(\widetilde{V}, \widetilde{E})$ that contains all missingness patterns, i.e., $\widetilde{V} = {\cal R}$, then the associated full law $p(X^{(1)}, O, R)$ is identified. There are many ways to construct such a pattern DAG for a given m-graph $\G$. From Section~\ref{sec:constructive_proof:or-id} we have seen that independences of the form \eqref{eq:ri_indep_xi} are sufficient for identifying strictly positive full laws. Thus, a simple approach to pattern DAG construction would focus on using just these constraints.

\begin{algorithm}[t]
    \caption{\textsc{PM-ID Construction}}
    \label{alg:pmid}
    
    \begin{algorithmic}[1]
        \Require m-graph $\G$, set of all possible missingness patterns ${\cal R}$
        \State Initialize a pattern DAG ${\cal H}(\widetilde{V}, \widetilde{E})$ with vertex set $\widetilde{V} = {\cal R}$ and edge set $\widetilde{E} = \emptyset$
        \For {$r \in \mathcal{R}$}
            \State \Comment{For every unobserved variable, check if the Gibbs factors are functions of patterns with one less zero: $p(X_i ^{(1)} \mid X_{-i}^{(1)}, O, R=r)=p(X_i ^{(1)} \mid X_{-i}^{(1)}, O, R_i=1, R_{-i}=r_{-i})$ }
            \For {$i \in \mathbb{M}(r)$}
                \If {$R_i \not\in \mb_\G(X_i^{(1)})$}
                \State Add an edge $\widetilde{r} \rightarrow^{i} r$ to ${\cal H}$, where  $\mathbb{M}(r) \setminus \mathbb{M}(\widetilde{r}) = \{i\}$
                \Else
                \State \Return ``fail''
                \EndIf
            \EndFor
        \EndFor
        \State \Return $\mathcal{H}(\widetilde{V}, \widetilde{E})$
    \end{algorithmic}
\end{algorithm}

Algorithm~\ref{alg:pmid} presents one such method. The PM-ID construction algorithm initializes an empty pattern DAG with vertices $\widetilde{V} = {\cal R}$ and then for every pattern $r \in {\cal R}$ and every index $i\in \mathbb{M}(r)$ attempts to add an edge $\widetilde{r} \rightarrow^i r$ by checking if $\G$ implies $X_i^{(1)} \Perp R_i \mid X_{-i}^{(1)}, R_{-i}, O$, where $\widetilde{r}$ is the pattern in ${\cal R} \setminus \{r\}$ where $i$ is $1$ instead of $0$, i.e., a pattern with one less zero such that  $\mathbb{M}(r) \setminus \mathbb{M}(\widetilde{r}) = \{i\}$. When all these edge additions are successful, the pattern DAG ${\cal H}$ returned by the PM-ID construction algorithm encodes a recursive identification argument via Theorem~\ref{thm:pattern-dag-id} for each pattern $r$ of the full law in terms of the set of all patterns $\widetilde{r}$ that contain precisely one less zero than $r$.

It is easy to check that the PM-ID construction algorithm is successful if and only if the m-graph $\G$ satisfies the structural constraint (S1) due to the check in line~5. Further, when the algorithm returns a pattern DAG ${\cal H}$, it is indeed PM-ID compatible: For (I1), the complete case pattern is guaranteed to be a root node of ${\cal H}$ as a requirement for adding an edge $\widetilde{r} \rightarrow^i r$ in Algorithm~\ref{alg:pmid} is that $\mathbb{M}(r)\setminus \mathbb{M}(\widetilde{r}) = \{i\}$. This is never true for the complete case pattern since $\mathbb{M}(r) = \emptyset$ when $r$ is the complete case pattern. For (I2), let $\prec_{\cal H}$ be any valid topological ordering of the patterns in ${\cal H}$. By (I1), we know the complete case pattern is the first element of $\prec_{\cal H}$ and is the base case pattern that trivially satisfies the criterion in (I2). For any subsequent pattern $r\in \prec_{\cal H}$, suppose all previous patterns satisfy the criterion in (I2). We know from line~6 of Algorithm~\ref{alg:pmid} that an edge  $\widetilde{r} \rightarrow^i r$ is added for every $i \in \mathbb{M}(r)$, where each $\widetilde{r}$ is a pattern with one less zero. Further, it must be the case that $\widetilde{r} \prec_{\cal H} r$ due to these edges. By the induction hypothesis, we conclude that (I2) also holds for $r$, as each edge $\widetilde{r} \rightarrow^i r$ concludes a directed path from the complete case pattern to $r$ for each index $i \in \mathbb{M}(r)$ through a pattern $\widetilde{r}$ that already satisfies (I2). These observations lead to the following results about the PM-ID algorithm.

\begin{corollary}[Soundness of PM-ID]
    \label{cor:soundness-pm-id}
    Suppose a missing data model $p(X^{(1)}, O, R, X)$ satisfies the positivity conditions (P1, P2) and factorizes according to an m-graph $\G(V, E)$ that satisfies (S1). Then the full law $p(X^{(1)}, O, R)$ is identified by applying the PM-ID algorithm described in Theorem~\ref{thm:pattern-dag-id} to the pattern DAG ${\cal H}(\widetilde{V}, \widetilde{E})$ returned by Algorithm~\ref{alg:pmid}.
\end{corollary}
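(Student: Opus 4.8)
The plan is to instantiate Theorem~\ref{thm:pattern-dag-id} with the particular pattern DAG ${\cal H}(\widetilde{V}, \widetilde{E})$ output by Algorithm~\ref{alg:pmid}. To do this I would establish three facts: (i) under (S1), Algorithm~\ref{alg:pmid} never returns ``fail'' and hence produces a multigraph with vertex set $\widetilde{V} = {\cal R}$; (ii) this multigraph is a bona fide pattern DAG in the sense of Definition~\ref{def:pattern_dag}; and (iii) it is PM-ID compatible, i.e., satisfies (I1) and (I2). Granting these three facts, Theorem~\ref{thm:pattern-dag-id} identifies $p(X^{(1)}, O, R=r)$ for every $r \in {\cal R}$, and since the full law $p(X^{(1)}, O, R)$ is nothing but the collection of these pattern-specific joints, the full law is identified.

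For fact (i), the only place Algorithm~\ref{alg:pmid} can abort is line~5, which triggers precisely when $R_i \in \mb_\G(X_i^{(1)})$ for some pattern $r$ and some index $i \in \mathbb{M}(r)$; assumption (S1) forbids exactly this, so the loop runs to completion and returns ${\cal H}$ with $\widetilde{V} = {\cal R}$ and with an edge $\widetilde{r} \rightarrow^{i} r$ added for each pattern $r$ and each index $i \in \mathbb{M}(r)$, where $\widetilde{r}$ is the unique pattern with $\mathbb{M}(r) \setminus \mathbb{M}(\widetilde{r}) = \{i\}$. For fact (iii), compatibility follows from the argument already sketched in the paragraph preceding the corollary: (I1) holds because the complete case pattern has $\mathbb{M}(r) = \emptyset$ and thus can never receive an incoming edge from line~6, and (I2) follows by induction along any topological order of ${\cal H}$---each edge $\widetilde{r} \rightarrow^{i} r$ added in line~6 extends a directed path into $\widetilde{r}$ guaranteed by the inductive hypothesis to a directed path into $r$ whose final edge carries the label $i$. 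Acyclicity, needed so that a topological order exists, is immediate since every edge strictly decreases the number of zeros in the pattern.

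The one step that needs genuine care is fact (ii), and within it property~(iii) of Definition~\ref{def:pattern_dag}: I must verify that every edge $\widetilde{r} \rightarrow^{i} r$ that Algorithm~\ref{alg:pmid} adds actually satisfies the Gibbs-factor equality~\eqref{eq:gibbs_factor_id_pattern_dag}. The key observations are that $r$ and $\widetilde{r}$ agree in every coordinate except $i$, with $r_i = 0$ and $\widetilde{r}_i = 1$, and that this edge was added only because (S1) holds at the pair $(r, i)$, which by the local Markov property of the relevant m-graph class (Tables~\ref{table:mb-dags}--\ref{table:mb-cgs}) yields the independence $X_i^{(1)} \Perp R_i \mid X_{-i}^{(1)}, R_{-i}, O$ of~\eqref{eq:ri_indep_xi}. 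Instantiating this independence with $R_{-i}$ fixed to the common value $r_{-i}$---which is legitimate because (P1, P2) make every conditioning event strictly positive, so all the relevant conditional densities are well-defined---shows that the conditional law of $X_i^{(1)}$ given $X_{-i}^{(1)}, O$ is unchanged when $R_i$ is set to $0$ rather than $1$, i.e., $p(X_i^{(1)} \mid X_{-i}^{(1)}, O, R=r) = p(X_i^{(1)} \mid X_{-i}^{(1)}, O, R=\widetilde{r})$, which is exactly~\eqref{eq:gibbs_factor_id_pattern_dag}. Properties (i) and (ii) of Definition~\ref{def:pattern_dag} are immediate from the construction of ${\cal H}$. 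With all hypotheses of Theorem~\ref{thm:pattern-dag-id} thereby verified, the corollary follows.
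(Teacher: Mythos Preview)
Your proof is correct and follows the same approach as the paper: the paper's one-line proof simply invokes that ${\cal H}$ is PM-ID compatible with $\widetilde{V} = {\cal R}$, deferring to the paragraph immediately preceding the corollary for why this holds, and your argument reconstructs exactly that paragraph---(S1) prevents failure, (I1) holds since the complete-case pattern has empty $\mathbb{M}(r)$, (I2) follows by induction along the topological order---while additionally making explicit the verification of property~(iii) of Definition~\ref{def:pattern_dag} via the local Markov independence~\eqref{eq:ri_indep_xi}. One tiny slip in wording: you write that every edge ``strictly decreases the number of zeros,'' but edges $\widetilde{r} \rightarrow^{i} r$ point from fewer zeros to more zeros; the monotonicity argument for acyclicity is of course unaffected.
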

\begin{proof}
    Follows from the fact that ${\cal H}$ is PM-ID compatible and $\widetilde{V} = {\cal R}$.
\end{proof}

\begin{corollary}[Completeness of PM-ID for m-DAGs and m-ADMGs]
    \label{cor:completeness-pm-id}
    Given an m-DAG or m-ADMG $\G(V, E)$, if the pattern DAG construction procedure in Algorithm~\ref{alg:pmid} returns ``fail'', then there exist missing data models $p(X^{(1)}, O, R, X)$ satisfying $(P1, P2)$ and factorizing according to $\G$ whose full laws are provably not identified.
\end{corollary}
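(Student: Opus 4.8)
The plan is to establish the contrapositive direction of completeness by appealing to the completeness result already proven in prior work, namely the characterization by \cite{nabi2020mid} that, for m-DAGs and m-ADMGs satisfying strict positivity, the structural criterion (S1) is both necessary and sufficient for identification of the full law. Concretely, the argument will proceed in two short steps. First, I would observe that the only way Algorithm~\ref{alg:pmid} returns ``fail'' is through the check in line~5: there exists some pattern $r \in {\cal R}$ and some index $i \in \mathbb{M}(r)$ for which $R_i \in \mb_\G(X_i^{(1)})$. Since line~5 is the sole source of failure and it is checked for every pattern with $i$ unobserved, failure of the algorithm is logically equivalent to the existence of an index $i \in [K]$ with $R_i \in \mb_\G(X_i^{(1)})$ --- that is, equivalent to $\G$ violating (S1).

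The second step is then simply to invoke the completeness half of the result in \cite{nabi2020mid}: if $\G$ is an m-DAG or m-ADMG that does \emph{not} satisfy (S1), then there exists a missing data model $p(X^{(1)}, O, R, X)$ satisfying (P1, P2) and factorizing according to $\G$ whose full law is provably not identified. Chaining these two observations gives exactly the statement of the corollary: if Algorithm~\ref{alg:pmid} returns ``fail'', then (S1) is violated, hence such a non-identified model exists. I should be careful to phrase the first step as a clean biconditional --- ``Algorithm~\ref{alg:pmid} fails $\iff$ $\G$ violates (S1)'' --- which is already essentially asserted in the paragraph preceding Corollary~\ref{cor:soundness-pm-id} (``the PM-ID construction algorithm is successful if and only if the m-graph $\G$ satisfies the structural constraint (S1) due to the check in line~5''), so it can be cited rather than re-derived in detail.

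The main obstacle --- really the only substantive point --- is to make sure the equivalence between ``line~5 triggers a failure for some $(r, i)$'' and ``$R_i \in \mb_\G(X_i^{(1)})$ for some $i$'' is airtight: one must note that the condition $R_i \notin \mb_\G(X_i^{(1)})$ does not depend on the particular pattern $r$ in which $i$ is unobserved (it is a purely graphical property of $\G$), so a single offending index $i$ causes failure on the first pattern $r$ with $i \in \mathbb{M}(r)$ encountered in the loop, and conversely if every index passes the check the algorithm never returns ``fail''. Once this is pinned down, the corollary is immediate from the cited completeness theorem, so the proof will be just a couple of lines.
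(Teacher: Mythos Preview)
Your proposal is correct and follows essentially the same approach as the paper's own proof: the paper simply notes that Algorithm~\ref{alg:pmid} returns ``fail'' if and only if $\G$ does not satisfy (S1), and then invokes the necessity of (S1) for full law identification in m-DAGs and m-ADMGs from \cite{bhattacharya2019mid, nabi2020mid}. Your extra care in justifying the biconditional (observing that the line~5 check is purely graphical and independent of the pattern $r$) is a reasonable elaboration but not strictly required, since the paper already states this equivalence in the paragraph preceding Corollary~\ref{cor:soundness-pm-id}.
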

\begin{proof}
    Follows from the fact that Algorithm~\ref{alg:pmid} returns ``fail'' if and only if $\G$ does not satisfy (S1), and (S1) is known to be a necessary and sufficient condition for full law identification in m-DAGs and m-ADMGs satisfying (P1, P2) \citep{bhattacharya2019mid, nabi2020mid}.
\end{proof}

Corollaries~\ref{cor:soundness-pm-id} and~\ref{cor:completeness-pm-id} together imply that PM-ID is sound and complete for full law identification in missing data models of m-DAGs and m-ADMGs satisfying positivity, similar to OR-ID. However, in contrast to OR-ID, the PM-ID algorithm also provides explicit identifying functionals. That is, PM-ID has the same identification capabilites as OR-ID, but is also constructive. To demonstrate this, we revisit the previous example, but this time we use the pattern DAG construction in Algorithm~\ref{alg:pmid} and apply PM-ID to it.

\subsubsection{An example of identification using PM-ID With Pattern DAGs}
\label{sec:constructive_proof:pm-id:pattern_dag_example}

\begin{example}

\begin{figure}[t]
    \centering
    \scalebox{0.75}{
        \begin{tikzpicture}
            \def\d{2.0cm}
            \begin{scope}[>=stealth, node distance=2.0cm]
                \path[->, very thick]
                node[] (111) {$(111)$}
                node[below of=111] (101) {$(101)$}
                node[right of=101] (011) {$(011)$}
                node[left of=101] (110) {$(110)$}
                
                node[below of=110] (100) {$(100)$}
                node[below of=101] (010) {$(010)$}
                node[below of=011] (001) {$(001)$}
                node[below of=010] (000) {$(000)$}
                
                (111) edge[blue] node [midway, right]{1} (011)
                (111) edge[blue] node [midway, left]{2} (101)
                (111) edge[blue] node [midway, left]{3} (110)
                (110) edge[blue] node [midway, left]{2} (100)
                (101) edge[blue] node [midway, left]{3} (100)
                (110) edge[blue] node [midway, right]{1} (010)
                (011) edge[blue] node [midway, left]{3} (010)
                (011) edge[blue] node [midway, right]{2} (001)
                (101) edge[blue] node [midway, right]{1} (001)
                (100) edge[blue] node [midway, left]{1} (000)
                (010) edge[blue] node [midway, left]{2} (000)
                (001) edge[blue] node [midway, right]{3} (000)
                ;
            \end{scope}
        \end{tikzpicture}
    }
    \caption{
        The pattern DAG $\mathcal{H}$ obtained by applying the PM-ID pattern DAG construction method in Algorithm~\ref{alg:pmid} to any of the m-graphs shown in Figure~\ref{fig:m-graphs}.
    }
    \label{fig:pattern_dag_strict_positive_example}
\end{figure}
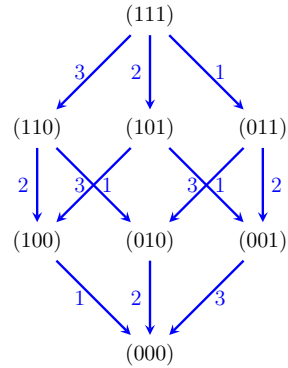

We revisit the m-graphs in Figures~\ref{fig:m-graphs}(a, b, c). Applying Algorithm~\ref{alg:pmid} to any of these m-graphs gives us the pattern DAG in  Figure~\ref{fig:pattern_dag_strict_positive_example}. Explicit identifying functionals encoded by this pattern DAG are obtained according to a topological order $\prec_{\cal H}$ per the PM-ID algorithm in Theorem~\ref{thm:pattern-dag-id} and are shown in display \eqref{eq:pm_id_strict_positive_example}.

\begin{equation}
    \label{eq:pm_id_strict_positive_example}
    \footnotesize
    \begin{aligned}
    p(X^{(1)}, r=111) &= p(X, r=111)
    \\
    p(X^{(1)}, r=110) &= p(X_3^{(1)} \mid X_{12}^{(1)}, r=110) p(X_{12}, r=110)
    \\
    \text{where: } & (111) \rightarrow^3 (110) \; \text{ means } \;
        p(X_3^{(1)} \mid X_{12}^{(1)}, r=110) = p(X_3 \mid X_{12}, r=111)
    \\
    p(X^{(1)}, r=101) &= p(X_2^{(1)} \mid X_{13}^{(1)}, r=101) p(X_{13}, r=101)
    \\
    \text{where: } & (111) \rightarrow^2 (101) \; \text{ means } \;
        p(X_2^{(1)} \mid X_{13}^{(1)}, r=101) = p(X_2 \mid X_{13}, r=111)
    \\
    p(X^{(1)}, r=011) &= p(X_1^{(1)} \mid X_{23}^{(1)}, r=011) p(X_{23}, r=011)
    \\
    \text{where: } & (111) \rightarrow^1 (011) \; \text{ means } \;
        p(X_1^{(1)} \mid X_{23}^{(1)}, r=011) = p(X_1 \mid X_{23}, r=111)
    \\
    p(X^{(1)}, r=100) &= p(X_{23}^{(1)} \mid X_{1}^{(1)}, r=100) p(X_{1}, r=100)
    \\
    \text{where: } &
    \begin{cases}
        (101) \rightarrow^2 (100) \; \text{ means } \;
            p(X_2^{(1)} \mid X_{13}^{(1)}, r=100) = p(X_2^{(1)} \mid X_{13}^{(1)}, r=101)
        \\
        (110) \rightarrow^3 (100) \; \text{ means } \;
            p(X_3^{(1)} \mid X_{12}^{(1)}, r=100) = p(X_3^{(1)} \mid X_{12}^{(1)}, r=110)
        \\
    \end{cases}
    \\
    p(X^{(1)}, r=010) &= p(X_{13}^{(1)} \mid X_{2}^{(1)}, r=010) p(X_{2}, r=010)
    \\
    \text{where: } &
    \begin{cases}
        (110) \rightarrow^1 (010) \; \text{ means } \;
            p(X_1^{(1)} \mid X_{23}^{(1)}, r=010) = p(X_1^{(1)} \mid X_{23}^{(1)}, r=110)
        \\
        (011) \rightarrow^3 (010) \; \text{ means } \;
            p(X_3^{(1)} \mid X_{12}^{(1)}, r=010) = p(X_3^{(1)} \mid X_{12}^{(1)}, r=011)
        \\
    \end{cases}
    \\
    p(X^{(1)}, r=001) &= p(X_{12}^{(1)} \mid X_{3}^{(1)}, r=001) p(X_{3}, r=001)
    \\
    \text{where: } &
    \begin{cases}
        (110) \rightarrow^1 (001) \; \text{ means } \;
            p(X_1^{(1)} \mid X_{23}^{(1)}, r=001) = p(X_1^{(1)} \mid X_{23}^{(1)}, r=110)
        \\
        (011) \rightarrow^2 (001) \; \text{ means } \;
            p(X_2^{(1)} \mid X_{13}^{(1)}, r=001) = p(X_2^{(1)} \mid X_{13}^{(1)}, r=011)
        \\
    \end{cases}
    \\
    p(X^{(1)}, r=000) &:
    \\
    \text{where: } &
    \begin{cases}
        (100) \rightarrow^1 (000) \; \text{ means } \;
            p(X_1^{(1)} \mid X_{23}^{(1)}, r=000) = p(X_1^{(1)} \mid X_{23}^{(1)}, r=100)
        \\
        (010) \rightarrow^2 (000) \; \text{ means } \;
            p(X_2^{(1)} \mid X_{13}^{(1)}, r=000) = p(X_2^{(1)} \mid X_{13}^{(1)}, r=010)
        \\
        (001) \rightarrow^2 (000) \; \text{ means } \;
            p(X_3^{(1)} \mid X_{12}^{(1)}, r=000) = p(X_3^{(1)} \mid X_{12}^{(1)}, r=001)
        \\
    \end{cases}
    \end{aligned}
\end{equation}

\end{example}

\section{Full Law Identification With Positivity Violations}
\label{sec:positivity-violations}

In the previous section we presented a new constructive identification algorithm that works under strict positivity of the missingness mechanism and target law---assumptions (P1) and (P2) respectively. However, in real-world missing data problems, positivity violations are a common practical concern, especially when there are several variables that exhibit missingness. This is due to the exponential increase in the number of potential missingness patterns: with $K$ missing variables, there are $2^K$ possible configurations of missingness indicators. As a result, it is common for many of these patterns to be entirely unobserved in the dataset. In such situations, it becomes unclear whether the full law is still identified and previous identification results for graphical models of missing data do not apply. In this section we extend PM-ID to handle certain kinds of positivity violations. 

\subsection{Formalizing The Relaxed Positivity Assumptions}
\label{sec:positivity-violations:model}

We first formalize the relaxed positivity assumptions.
Most often, data analysts are concerned about certain patterns of missingness not being available in their data. For example, in a three variable problem, an analyst might be concerned that the pattern $101$ never appears in their data, which may correspond to the following positivity violation: $p(R_1=1, R_2=0, R_3=1)=0$.

Such scenarios were not allowed in models satisfying (P1, P2) in the previous section. We relax this as follows. We still require positivity of the target law, but allow certain patterns to be unobserved. That is, we leave (P2) unchanged, and replace (P1) with a new positivity condition that requires positivity on the complete case pattern, but allows other patterns to be have zero support. That is, we introduce a new positivity condition,
\begin{itemize}
    \item[\textbf{(P3)}] The subset of patterns ${\cal R}^+ \subseteq {\cal R}$ that have positive support includes the complete case pattern, and $p(R=r) \geq \epsilon$ if $r \in {\cal R}^+$ and $p(R=r) = 0$ otherwise. Further, ${\cal R}^+$ is compatible with the factorization of $p(X^{(1)}, O, R, X)$ according to $\G$.
\end{itemize}

Before explaining the compatibility clause in (P3), we first confirm that (P1, P3) is a more relaxed set of positivity assumptions than (P1, P2) when ${\cal R}^+$ is a strict subset of ${\cal R}$. To see this, note that when $p(R=r) = 0$ then the joint distribution $p(X^{(1)}, O, R=r) = 0$ for all possible values of $X^{(1)}$ and $O$. This is because $p(R=r) = \sum_{X^{(1)}, O} p(X^{(1)}, O, R=r)$ and since none of the terms in the summation can be negative, they must all be zero when $p(R=r) = 0$. Thus, when some patterns $r$ do not have positive support, i.e., ${\cal R}^+ \subset {\cal R}$, the full law is clearly no longer strictly positive as was the case with (P1, P2). On the other hand, when ${\cal R}^+ = {\cal R}$, the two sets of assumptions are equivalent as expected.

We now explain the compatibility of patterns with the factorization of $\G$ through an example. Consider the m-graph $\G$ in Figure~\ref{fig:id-non-id-pattern-dags}(a). By the factorization of the graph,
\begin{align*}
p(X^{(1)}, R=r) = p(X^{(1)}) \cdot p(R_1 = r_1 \mid X_2^{(1)}) \cdot p(R_2 = r_2 \mid X_1^{(1)}).
\end{align*}
Suppose the pattern $R=10$ does not have positive support, so  $p(X^{(1)}, R=10) = 0$.
Since the target law $p(X^{(1)})$ is required to be positive by (P1), the above factorization then implies that at least one of $p(R_1 = 1 \mid X_2^{(1)})$ or $p(R_2 = 0 \mid X_1^{(1)})$ is equal to zero. If the former is true, then the pattern $R=11$  should also have zero support since $p(R=1 \mid X_2^{(1)})$ also appears in the factorization of $p(X^{(1)}, R=11)$. If the latter is true, then the pattern $R=00$ should also have zero support. Thus, the pattern $R=10$ cannot have zero support in isolation if $p(X^{(1)}, O, R, X)$ factorizes according to the m-graph $\G$.

In the above example there are only two subsets of ${\cal R}$ that satisfy the compatibility requirement of (P3) while also including the complete case pattern: ${\cal R}^+ = \{11, 10\}$ and ${\cal R}^+ = \{11, 01\}$. As negative examples, ${\cal R}^+ = \{10, 00\}$ is compatible but does not contain the complete case, while ${\cal R}^+ = \{11, 00\}$ is not compatible. Moving forward we assume that the observed set of patterns ${\cal R}^+$ is indeed compatible with the given m-graph $\G$, as in our work we assume the data are drawn from a missing data model that factorizes according to $\G$, and $\G$ is correctly specified. Designing tests of model correctness based on compatibility of the observed patterns is an interesting direction of future work; this would extend tests in  \cite{mohan2014testability}, \cite{nabi2022testability}, and \cite{chen2023causal} that are based solely on conditional independence and generalized equality constraints.

\subsection{PM-ID+ For Models With Positivity Violations}
\label{sec:positivity-violations:pm-id-v2}

Under the new positivity assumptions (P1, P3), when ${\cal R}^+ = {\cal R}$ it suffices to use the PM-ID pattern DAG construction in Algorithm~\ref{alg:pmid} along with Theorem~\ref{thm:pattern-dag-id} to identify the full law. However, in general, when some patterns do not have positive support, the identification methods presented in the previous section do not work. This is due to the fact that when $p(R=\widetilde{r}) = 0$ for a certain pattern $\widetilde{r} \in {\cal R}$, then the PM factorization and the Gibbs factors of $p(X^{(1)}, O, R=\widetilde{r})$ are not defined. Thus, any edge $\widetilde{r} \rightarrow^i r$ in the pattern DAG that suggests borrowing information from a pattern $\widetilde{r}$ with no support is also ill-defined.

Similar to our explanation of the PM-ID algorithm, we  present a motivating example that provides concrete intuition for how we can modify the pattern DAG construction of PM-ID to handle positivity violations of the kind allowed by assumptions (P1, P3).

\subsubsection{Motivating example for PM-ID+}
\label{sec:positivity-violations:pm-id-v2:example}

\begin{figure}[t]
    \centering
    \subcaptionbox{}{
        \begin{tikzpicture}
            \def\d{1.5cm}
            \begin{scope}[>=stealth, node distance=1.5cm]
                \path[->, very thick]
                node[] (x11) {$X^{(1)}_1$}
                node[right of=x11] (x21) {$X^{(1)}_2$}
                node[right of=x21] (x31) {$X^{(1)}_3$}
                
                node[below of=x11] (r1) {$R_1$}
                node[below of=x21] (r2) {$R_2$}
                node[below of=x31] (r3) {$R_3$}
                
                (r1) edge[blue] (r2)
                (x21) edge[blue] (r1)
                (x31) edge[blue] (r1)
                (x31) edge[blue] (r2)
                (x11) edge[blue] (r3)
                (x21) edge[blue] (r3)
                
                (x11) edge[blue] (x21)
                (x21) edge[blue] (x31)
                (x11) edge[blue, bend left] (x31)
                
                ;
            \end{scope}
        \end{tikzpicture}
    }
    \hspace{0.05\linewidth}
    \subcaptionbox{}{
        \scalebox{0.75}{
        \begin{tikzpicture}
            \def\d{2.0cm}
            \begin{scope}[>=stealth, node distance=2.0cm]
                \path[->, very thick]
                node[] (111) {$(111)$}
                node[below of=111] (101) {${(101)}$}
                node[right of=101] (011) {$(011)$}
                node[left of=101] (110) {$(110)$}
                
                node[below of=110] (100) {${(100)}$}
                node[below of=101] (010) {$(010)$}
                node[below of=011] (001) {$(001)$}
                node[below of=010] (000) {$(000)$}
                
                (111) edge[blue] node [midway, right]{1} (011)
                (111) edge[blue] node [midway, left]{2} (101)
                (111) edge[blue] node [midway, left]{3} (110)
                (110) edge[blue] node [midway, left]{2} (100)
                (101) edge[blue] node [midway, left]{3} (100)
                (110) edge[blue] node [midway, right]{1} (010)
                (011) edge[blue] node [midway, left]{3} (010)
                (011) edge[blue] node [midway, right]{2} (001)
                (101) edge[blue] node [midway, right]{1} (001)
                (100) edge[blue] node [midway, left]{1} (000)
                (010) edge[blue] node [midway, left]{2} (000)
                (001) edge[blue] node [midway, right]{3} (000)
                
                ;
                \draw[red, very thick] ([xshift=-2pt,yshift=0pt]101.west) -- ([xshift=2pt,yshift=0pt]101.east);
                \draw[red, very thick] ([xshift=-2pt,yshift=0pt]100.west) -- ([xshift=2pt,yshift=0pt]100.east);
            \end{scope}
        \end{tikzpicture}
    }}
    \hspace{0.05\linewidth}
    \subcaptionbox{}{
        \scalebox{0.75}{
        \begin{tikzpicture}
            \def\d{2.0cm}
            \begin{scope}[>=stealth, node distance=2.0cm]
                \path[->, very thick]
                node[] (111) {$(111)$}
                node[below of=111] (101) {}
                node[right of=101] (011) {$(011)$}
                node[left of=101] (110) {$(110)$}

                node[below of=101] (010) {$(010)$}
                node[below of=011] (001) {$(001)$}
                node[below of=010] (000) {$(000)$}
                
                (111) edge[blue] node [midway, right]{1} (011)
                (111) edge[blue] node [midway, left]{3} (110)
                (110) edge[blue] node [midway, right]{1} (010)
                (011) edge[blue] node [midway, left]{3} (010)
                (011) edge[blue] node [midway, left]{1} (001)
                (011) edge[blue, bend left] node [midway, right]{2} (001)
                (111) edge[blue] node [midway, left]{1} (001)
                (010) edge[blue] node [midway, left]{1} (000)
                (010) edge[blue, bend left] node [midway, right]{2} (000)
                (110) edge[blue] node [midway, left]{1} (000)
                (001) edge[blue] node [midway, right]{3} (000)
                
                ;
            \end{scope}
        \end{tikzpicture}
        }
    }
    \caption{
        (a) An m-graph $\G$ used to illustrate pattern mixture identification with positivity violations on the patterns $R=101$ and $R=100$. (b) Pattern DAG obtained by applying the PM-ID construction algorithm. The red strikethrough indicates patterns with no support. (c) Pattern DAG obtained by applying the PM-ID+ construction algorithm.
    }
    \label{fig:pattern_dag_example_non_positive_1}
\end{figure}
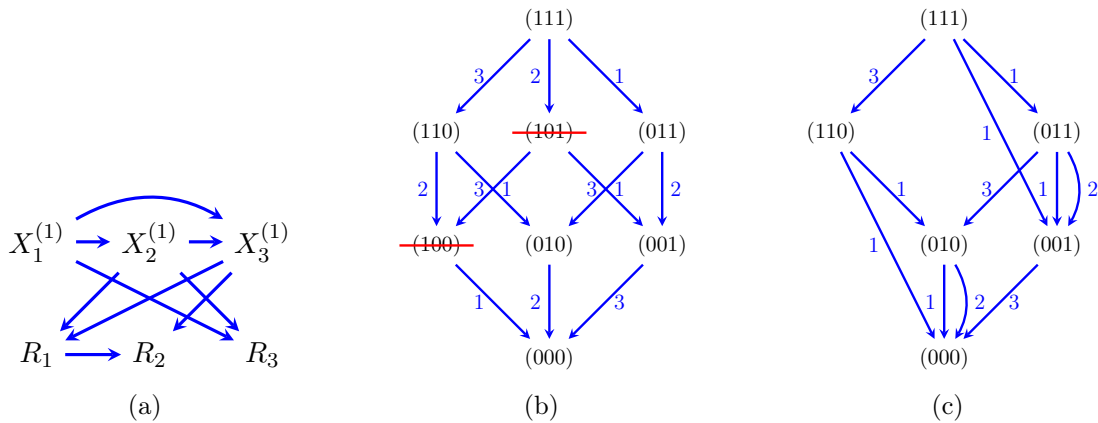

\begin{example}

Let $p(X^{(1)}, O, R, X)$ be a missing data model that factorizes according to the m-graph $\G$  in Figure~\ref{fig:pattern_dag_example_non_positive_1}(a) and has support for patterns ${\cal R}^+ = \{111, 011, 110, 010, 001, 000\}$. That is, the patterns $\{101, 100\}$ have no support, so $p(R=101) = 0$ and $p(R=100) = 0$. Clearly this model satisfies assumptions (P1, P3)\footnote{We manually checked that this set of patterns is also compatible with the factorization of $\G$.} but not (P1, P2).

Say we naively apply the PM-ID pattern DAG construction algorithm to $\G$; we would then obtain the pattern DAG shown in Figure~\ref{fig:pattern_dag_example_non_positive_1}(b). The red strikethroughs in the figure are used to denote the patterns that have no support in the model.

\begin{enumerate}
    \item Based on this pattern DAG, we are still able to use the same arguments as before to identify the patterns $111, 110, 011$, and $010$, as the identification of these patterns do not rely on borrowing information from the patterns with no support.
    \item Identification for patterns with no support is also simple: since $p(R=101)=0$ and $p(R=100)=0$, we have that $p(X^{(1)}, O, R=101) = 0$ and $p(X^{(1)}, O, R=100) = 0$.
    \item However, the patterns $001$ and $000$, whose identification arguments use information from patterns with no support, present a problem. Consider $R=001$ for example: The edges $(101)\rightarrow^1 (001)$ and $(011) \rightarrow^2 (001)$ in Figure~\ref{fig:pattern_dag_example_non_positive_1}(b) suggest identification of the Gibbs factors of the extrapolation density of $p(X^{(1)}, O, R=001)$ via the two  equalities,
    \begin{align*}
         p(X^{(1)}_{1} \mid X^{(1)}_{23}, R=001) &= p(X^{(1)}_{1} \mid X^{(1)}_{23}, R=101), \text{ and } \\
         p(X^{(1)}_{2} \mid X^{(1)}_{13}, R=001) &= p(X^{(1)}_{2} \mid X^{(1)}_{13}, R=011).
    \end{align*}
    The problem lies in the fact that $p(X^{(1)}_{1} \mid X^{(1)}_{23}, R=101)$ is undefined, since the full law at $101$ is zero, so the first equality cannot be used for identification. A similar problem occurs for $R=000$ when attempting to use the edge $(100) \rightarrow^1 (000)$ for identification.
    \item Recall that while the PM-ID pattern DAG construction only uses constraints of the form $X_i^{(1)} \Perp R_i \mid \mb_\G(X_i^{(1)})$, the m-graph $\G$ may imply additional constraints based on the local Markov property~\eqref{eq:local-property}.  Thus, in order to patch the issue of the edge $(101)\rightarrow^1 (001)$ being ill-defined due to a positivity violation in the pattern $101$, we may employ additional constraints to tether the first Gibbs factor to a different well-defined conditional. By the local Markov property, $\G$ also implies $X_1^{(1)} \Perp R_{12} \mid X^{(1)}_{23}, R_{3}$, which leads to
        \begin{equation*}
            p(X^{(1)}_{1} \mid X^{(1)}_{23}, R=001) = p(X^{(1)}_{1} \mid X^{(1)}_{23}, R=111).
        \end{equation*}
        The right hand side is a conditional of the complete case joint distribution, and hence is well-defined due to the positive support on $111$. Moreover, this conditional is also identified by consistency. This concludes that the full law at $001$ is identified by replacing the ill-defined edge $(101 )\rightarrow^1 (001)$ with $(111) \rightarrow^1 (001)$ instead. Note the local Markov property also allows us to add the edge $(011) \rightarrow^1 (001)$, offering multiple ways of identifying this Gibbs factor and possibly increasing data efficiency in estimation.

    \item We can perform a similar patch for the pattern $R=000$. By the local Markov property of $\G$ we have,
        \begin{equation*}
            \begin{aligned}
                p(X^{(1)}_{1} \mid X^{(1)}_{23}, R=000) &= p(X^{(1)}_{1} \mid X^{(1)}_{23}, R=110), \: \: \text{and}
                \\
                p(X^{(1)}_{1} \mid X^{(1)}_{23}, R=000) &= p(X^{(1)}_{1} \mid X^{(1)}_{23}, R=010).
            \end{aligned}
        \end{equation*}
        That is, the ill-defined edge $(100)\rightarrow^1 (000)$ can be replaced by the edges $(110)\rightarrow^1 (000)$ and $(010) \rightarrow^1 (000)$, i.e., edges from patterns with positive support and that can be identified prior to $000$. Thus, we conclude that $p(X^{(1)}, O, R=000)$ is also identified.
    \item The full pattern DAG obtained by replacing ill-defined edges due to positivity violations to edges that only use patterns with positive support is shown in Figure~\ref{fig:pattern_dag_example_non_positive_1}(c). The goal of PM-ID+ in the next subsection will be to produce such pattern DAGs.
\end{enumerate}

\end{example}

\subsubsection{The PM-ID+ Algorithm}
\label{sec:positivity-violations:pm-id-v2:theorem}

The example in the previous subsection provides a few key insights that we use to phrase an extension of the PM-ID algorithm for models with positivity violations.
\begin{enumerate}
    \item Identification of patterns with no support is trivial as $p(X^{(1)}, O, R=r) = 0$ for any pattern $r \in {\cal R}\setminus {\cal R}^+$. So we only need to focus on identification of patterns in ${\cal R}^+$.
    \item For patterns $r \in {\cal R}^+$ with positive support, the main recursive identification argument from Theorem~\ref{thm:pattern-dag-id} can be preserved if we are able to produce a PM-ID compatible pattern DAG ${\cal H}(\widetilde{V}, \widetilde{E})$---a pattern DAG that satisfies (I1, I2)---such that $\widetilde{V} = {\cal R}^+$.
    \item The construction of such a pattern DAG should use additional independences  implied by the local Markov property of the m-graph $\G$. In particular, independences of the form $X_i^{(1)} \Perp R_j \mid \mb_\G(X_i^{(1)})$ for $j\not=i$, in addition to the independences $X_i^{(1)} \Perp R_i \mid \mb_\G(X_i^{(1)})$ used in the PM-ID construction algorithm in Algorithm~\ref{alg:pmid}.
   
\end{enumerate}

Based on these insights, we propose the PM-ID+ pattern DAG construction strategy in Algorithm~\ref{alg:pmid-plus}. We use  $\text{ID} \subseteq {\cal R}^+$ to denote the set of patterns without positive support that have been successfully identified at any given step of the algorithm. $\text{ID}$ is initialized to be $\emptyset$, and at each iteration of the identification process, this set is potentially expanded to include newly identified patterns. By (P3), the complete case is in ${\cal R}^+$. Thus on the first iteration of the algorithm we are guaranteed to get identification of the complete case pattern  by consistency, and so the complete case pattern is immediately added to $\text{ID}$. This also guarantees that if a pattern DAG is successfully produced by the PM-ID+ construction, it satisfies the condition (I1) that the complete case pattern is a root node of ${\cal H}$.

At each iteration, the algorithm loops through all patterns with positive support whose identifiability has not yet been determined, i.e., each $r \in {\cal R}^+ \setminus \text{ID}$, and tries to establish their identifiability based on previously identified patterns with positive support.
Concretely, for every index corresponding to actually missing variable $i \in \mathbb{M}(r)$, we list all other patterns $\widetilde{r} \in \text{ID}$ such that the equality in \eqref{eq:gibbs_factor_id_pattern_dag} holds, implying that the $i^{th}$ Gibbs factor of the extrapolation density of pattern $r$ can be re-expressed as the $i^{th}$ Gibbs factor of an already identified pattern $\widetilde{r}$. By the local Markov property, the set of all such patterns is computed in lines 9 and 10 of Algorithm~\ref{alg:pmid-plus} as,
\begin{equation}
    {\cal Z}(r, i) = \{\widetilde{r} \in \text{ID} \mid R_j \not\in \mb_\G(X_i^{(1)}) \text{ for all } j \in \diff(r, \widetilde{r} )\},
\end{equation}
where $\diff(r, \widetilde{r}) \coloneqq \{i \in [K] \mid r_i \not= \widetilde{r}_i\}$. If we are able to link every Gibbs factor of the extrapolation density of pattern $r$ to at least one already identified pattern, then we know the joint $p(X^{(1)}, O, R=r)$ is identifiable and we can append edges of the form $\widetilde{r} \rightarrow^i r$ to the pattern DAG that satisfy the PM-ID compatibility criterion (I2) for pattern $r$ by induction. This addition of edges occurs in line 13 of Algorithm~\ref{alg:pmid-plus}.

Algorithm~\ref{alg:pmid-plus} iteratively updates the set of identified patterns in this manner, repeating the procedure until no further updates are possible. This can occur when $\text{ID} \subset {\cal R}^+$ is not identified, but no new pattern can be identified by our procedure. In this case the algorithm returns ``fail''. The other case is when $\text{ID} = {\cal R}^+$, so all patterns with positive support have been identified and the PM-ID+ construction returns a PM-ID compatible pattern DAG ${\cal H}(\widetilde{V}, \widetilde{E})$ with $\widetilde{V} = {\cal R}^+$. As an example of success, the PM-ID+ construction algorithm yields the pattern DAG in Figure~\ref{fig:pattern_dag_example_non_positive_1}(c) for the m-graph in Figure~\ref{fig:pattern_dag_example_non_positive_1}(a) with positivity violations on patterns $101$ and $100$. The following  result formalizes the soundness of PM-ID+.

\begin{algorithm}[t]
    \caption{\textsc{PM-ID$^+$ Construction}}
    \label{alg:pmid-plus}
    
    \begin{algorithmic}[1]
    
        \Require m-graph $\G(V, E)$, patterns with positive support ${\cal R}^+ = \{r \in {\cal R} \mid p(R=r) > 0\} $
        \State Initialize a pattern DAG ${\cal H}(\widetilde{V}, \widetilde{E})$ with vertex set $\widetilde{V} = {\cal R}^+$ and edge set $\widetilde{E} = \emptyset$
        \State \Comment{Create variables to keep track of all patterns identified and new patterns identified in an iteration (initialize the latter with the complete case pattern)}
        \State $\text{ID} \gets \emptyset$ and $\Delta \gets \{r=1\}$ 
        \While {$\Delta \not= \emptyset$}
        \State $\text{ID} \gets \text{ID} \cup \Delta$ and $\Delta \gets \emptyset$
        \For {$r \in {\cal R}^+ \setminus \text{ID} $} \label{alg:iter-patterns}
            
            \State \Comment{Check if Gibbs factors $p(X_i ^{(1)} \mid X_{-i}^{(1)}, O, R=r)$ of every unobserved variable can be expressed as a function of at least one pattern that has been identified so far}
            \For {$i \in \mathbb{M}(r)$}
            \State Let $\diff(r, \widetilde{r}) \coloneqq \{i \in [K] \mid r_i \not= \widetilde{r}_i\}$
            \State ${\cal Z}(r, i) \gets \{\widetilde{r} \in \text{ID} \mid R_j \not\in \mb_\G(X_i^{(1)}) \text{ for all } j \in \diff(r, \widetilde{r} )\}$
            \If {${\cal Z}(r, i) =\emptyset$} {go to the next pattern in line~\ref{alg:iter-patterns}}
            \EndIf
            \EndFor
            \State \Comment{Here we know that every Gibbs factor can be identified as a function of one or more identified patterns; add edges for this and mark the pattern as identified}
            \State {\textbf{for} all $i \in \mathbb{M}(r)$ and \textbf{for} all $\widetilde{r} \in {\cal Z}(r, i)$} add an edge $\widetilde{r} \to^i r$ to ${\cal H}$
            \State $\Delta \gets \Delta \cup \{r\}$
        \EndFor
        \If {$\Delta = \emptyset$}
        \State \Return ``fail'' \EndIf
        \EndWhile
        \State \Return $\mathcal{H}(\widetilde{V}, \widetilde{E})$

    \end{algorithmic}
\end{algorithm}

\begin{corollary}[Soundness of PM-ID+]
    \label{cor:soundness-pm-id-plus}
    Suppose a missing data model $p(X^{(1)}, O, R, X)$ satisfies the positivity conditions (P1, P3) and factorizes according to an m-graph $\G(V, E)$. Then the full law $p(X^{(1)}, O, R)$ is identified if the PM-ID+ pattern DAG construction in Algorithm~\ref{alg:pmid-plus} succeeds in returning a pattern DAG ${\cal H}(\widetilde{V}, \widetilde{E})$, and is identified as follows.
    \begin{itemize}
        \item For patterns $r \in {\cal R}\setminus {\cal R}^+$, we have that $p(X^{(1)}, O, R=r) = 0$.
        \item For patterns $r\in {\cal R}^+$, we apply the PM-ID algorithm in Theorem~\ref{thm:pattern-dag-id} to ${\cal H}(\widetilde{V}, \widetilde{E})$.
    \end{itemize}
\end{corollary}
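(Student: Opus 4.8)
The plan is to show that the multigraph $\mathcal{H}(\widetilde{V},\widetilde{E})$ returned by Algorithm~\ref{alg:pmid-plus} is a PM-ID compatible pattern DAG in the sense of Definition~\ref{def:pattern_dag} with $\widetilde{V}={\cal R}^+$, and then to apply Theorem~\ref{thm:pattern-dag-id} to it. The patterns $r\in{\cal R}\setminus{\cal R}^+$ need nothing new: as noted in Section~\ref{sec:positivity-violations:model}, $p(R=r)=0$ forces $p(X^{(1)},O,R=r)=0$ for every value of $(X^{(1)},O)$, so those parts of the full law are trivially identified as zero.

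First I would verify Property~(iii) of Definition~\ref{def:pattern_dag} for every edge the algorithm commits. An edge $\widetilde{r}\rightarrow^i r$ is added only when $\widetilde{r}\in{\cal Z}(r,i)$, that is, when $R_j\notin\mb_\G(X_i^{(1)})$ for every $j\in\diff(r,\widetilde{r})$. Setting $S=\diff(r,\widetilde{r})$, the local Markov property~\eqref{eq:local-property} for $X_i^{(1)}$ together with the decomposition and weak-union rules for conditional independence yields $X_i^{(1)}\Perp R_S\mid X^{(1)}_{-i},R_{[K]\setminus S},O$. Since $r$ and $\widetilde{r}$ agree off $S$, conditioning on $R_{[K]\setminus S}=r_{[K]\setminus S}$ and switching $R_S$ from $r_S$ to $\widetilde{r}_S$ leaves the conditional law of $X_i^{(1)}$ unchanged, which is exactly~\eqref{eq:gibbs_factor_id_pattern_dag}. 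I would also record that $\mathcal{H}$ is acyclic: every edge runs from a pattern that already lies in $\mathrm{ID}$ at the start of a while-loop iteration to a pattern that enters $\mathrm{ID}$ in that iteration or a later one, so ordering patterns by the iteration at which they are added to $\mathrm{ID}$ produces a valid topological order $\prec_{\mathcal{H}}$.

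Next I would establish PM-ID compatibility. For~(I1): the complete-case pattern has $\mathbb{M}(r)=\emptyset$, so line~13 never directs an edge into it; it lies in ${\cal R}^+$ by (P3) and is the initial content of $\Delta$, hence it is a root of $\mathcal{H}$ and first in $\prec_{\mathcal{H}}$. For~(I2) I would induct on the iteration at which a pattern enters $\mathrm{ID}$: the complete-case pattern is the base case, where (I2) is vacuous; for the step, when a pattern $r$ is marked identified the inner loop must have found ${\cal Z}(r,i)\neq\emptyset$ for every $i\in\mathbb{M}(r)$, so for each such $i$ there is an edge $\widetilde{r}\rightarrow^i r$ with $\widetilde{r}\in\mathrm{ID}$ coming from a strictly earlier iteration; if $\widetilde{r}$ is the complete case this edge itself is the required path, and otherwise the induction hypothesis supplies a directed path from the complete case to $\widetilde{r}$, which we extend by $\widetilde{r}\rightarrow^i r$. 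Because Algorithm~\ref{alg:pmid-plus} returns $\mathcal{H}$ only when $\mathrm{ID}={\cal R}^+$, every vertex of $\mathcal{H}$ satisfies (I2). Having shown $\mathcal{H}$ is a PM-ID compatible pattern DAG over ${\cal R}^+$ --- and observing that the only role of positivity in the proof of Theorem~\ref{thm:pattern-dag-id} is to make the Gibbs factors $p(X^{(1)}_i\mid X^{(1)}_{-i},O,R=r)$ of the processed patterns well defined, which here holds for each $r\in{\cal R}^+$ since the conditioning events have positive probability under the assumed positivity on ${\cal R}^+$ and on the target law --- I would invoke Theorem~\ref{thm:pattern-dag-id} to identify $p(X^{(1)},O,R=r)$ for all $r\in{\cal R}^+$, which together with the trivial case finishes the proof.

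I expect the main difficulty to be bookkeeping rather than anything conceptual. The induction for~(I2) hinges on the fact that, at the moment $r$ is processed, $\mathrm{ID}$ contains only patterns added in strictly earlier while-loop iterations, so that the path to $\widetilde{r}$ invoked by the hypothesis genuinely exists and $\mathcal{H}$ stays acyclic; and one must handle the ``go to the next pattern'' abort carefully, noting that edges for $r$ are committed only in line~13, after \emph{every} Gibbs factor of $r$ has been matched, so aborted attempts leave $\mathcal{H}$ unchanged. The one analytic point worth spelling out is why a conditional borrowed from an in-support pattern $\widetilde{r}\in{\cal R}^+$ is well defined under the relaxed assumptions --- this is exactly where (P3)'s control over in-support patterns, rather than merely nonzero marginal probabilities, is needed.
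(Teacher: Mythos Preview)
Your proposal is correct and follows essentially the same route as the paper: the paper's proof simply asserts that $\mathcal{H}$ is PM-ID compatible with $\widetilde{V}=\mathcal{R}^+$ and that (P1,\,P2) may be swapped for (P1,\,P3) since every pattern in $\mathcal{H}$ has positive support, deferring the verification of (I1), (I2), and Property~(iii) to the informal discussion preceding the corollary. You spell out these verifications carefully (the weak-union step for Property~(iii), the iteration-based acyclicity argument, the induction for~(I2), and the observation that edges are only committed in line~13), which is more detail than the paper gives but is exactly the argument being invoked.
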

\begin{proof}
    Identification of patterns $r\in {\cal R} \setminus {\cal R}^+$ is trivial. Identification of all patterns $r\in {\cal R}^+$ follows from the fact that ${\cal H}$ is PM-ID compatible with $\widetilde{V} = {\cal R}^+$ and the assumptions (P1, P2) can be swapped with (P1, P3) when all patterns in ${\cal H}$ have positive support.
\end{proof}

Note that we make no claim of completeness for PM-ID+. That is, there may exist missing data models that satisfy (P1, P3) and factorize according to an m-graph, such that the PM-ID+ construction returns ``fail'', but the full law is in fact identified. Nonetheless, PM-ID+ covers a large set of MNAR models that may exhibit positivity violations, and that to our knowledge, have not been considered in prior work on m-graphs.

We provide one more example of the PM-ID+ pattern DAG construction algorithm that demonstrates an interesting aspect of the algorithm---while the PM-ID construction makes it so that a pattern $r$ can only borrowing information from another pattern $\widetilde{r}$ with strictly fewer missing variables (i.e., $\mathbb{M}(r) \supset \mathbb{M}(\widetilde{r})$), the PM-ID+ construction only requires that $\widetilde{r}$ be identifiable prior to $r$, regardless of the relation between $\mathbb{M}(r)$ and $\mathbb{M}(\widetilde{r})$.


\begin{example}
Let $p(X^{(1)}, O, R, X)$ be a missing data model that factorizes according to the m-graph in Figure~\ref{fig:pattern_dag_example_non_positive_2}(a) and ${\cal R}^+ = \{111, 110, 011, 100, 001\}$. Thus, ${\cal R} \setminus {\cal R}^+ = \{101, 010, 000\}$ have no support. We now list each iteration of the PM-ID+ construction in Algorithm~\ref{alg:pmid-plus} that leads to the pattern DAG in Figure~\ref{fig:pattern_dag_example_non_positive_2}(b).
\begin{enumerate}
    \item $\text{ID} = \emptyset$ and the newly identified patterns $\Delta = \{111\}$
    \item $\text{ID} = \{111\}$ and the newly identified patterns $\Delta = \{110, 011\}$, as ${\cal Z}(110, 3) = \{111\}$ and ${\cal Z}(011, 1) = \{111\}$. That is, borrowing information from  just the complete pattern is sufficient for identifying these two patterns.
    \item $\text{ID} = \{111, 110, 011\}$ and the newly identified patterns $\Delta = \{001\}$, as ${\cal Z}(001, 1) = \{ 111, 011\}$ and ${\cal Z}(001, 2) = \{ 111, 011\}$. That is, we can identify the Gibbs factors in the extrapolation density of $p(X^{(1)}, O, R=r)$ by borrowing information from both the $111$ and $011$ patterns.
    \item $\text{ID} = \{111, 110, 011, 001\}$ and the newly identified patterns $\Delta = \{100\}$, as ${\cal Z}(100, 2) = \{ 110\}$ and ${\cal Z}(100, 3) = \{ 001\}$. This leads to the addition of an edge $(001) \rightarrow^3 (100)$ in ${\cal H}$ that represents an interesting case where, PM-ID+  borrows information from a pattern where the observed variables are not necessarily a strict superset of those in the current pattern.
    \item Finally $\text{ID} = {\cal R}^+$ so the full law is identified, and the algorithm returns the pattern DAG ${\cal H}$ shown in Figure~\ref{fig:pattern_dag_example_non_positive_2}(b).
\end{enumerate}

\begin{figure}[t]
    \centering
    \subcaptionbox{}{
        \begin{tikzpicture}
            \def\d{1.5cm}
            \begin{scope}[>=stealth, node distance=1.5cm]
                \path[->, very thick]
                node[] (x11) {$X^{(1)}_1$}
                node[right of=x11] (x21) {$X^{(1)}_2$}
                node[right of=x21] (x31) {$X^{(1)}_3$}
                
                node[below of=x11] (r1) {$R_1$}
                node[below of=x21] (r2) {$R_2$}
                node[below of=x31] (r3) {$R_3$}
                
                (r2) edge[blue] (r1)
                (r3) edge[blue, bend left] (r1)
                (x31) edge[blue] (r2)
                (x11) edge[blue] (r3)
                (x21) edge[blue] (r3)
                
                (x11) edge[blue] (x21)
                (x21) edge[blue] (x31)
                (x11) edge[blue, bend left] (x31)
                
                ;
            \end{scope}
        \end{tikzpicture}
    }
    \hspace{0.1\linewidth}
    \subcaptionbox{}{
        \scalebox{0.8}{
        \begin{tikzpicture}
            \def\d{2.0cm}
            \begin{scope}[>=stealth, node distance=2.0cm]
                \path[->, very thick]
                node[] (111) {$(111)$}
                node[below of=111] (101) {}
                node[right of=101] (011) {$(011)$}
                node[left of=101] (110) {$(110)$}
                
                node[below of=110] (100) {$(100)$}
                node[below of=101] (010) {}
                node[below of=011] (001) {$(001)$}
                
                (111) edge[blue] node [midway, right]{1} (011)
                (111) edge[blue] node [midway, left]{3} (110)
                (011) edge[blue] node [midway, right]{1} (001)
                (011) edge[blue, bend left] node [midway, right]{2} (001)
                (111) edge[blue, bend left=90] node [midway, right]{1} (001)
                (111) edge[blue, bend right] node [midway, right]{2} (001)
                (110) edge[blue] node [midway, left]{2} (100)
                (001) edge[blue] node [midway, above]{3} (100)
                
                ;
                
            \end{scope}
        \end{tikzpicture}
        }
    }
    \caption{
        (a) An m-graph used to demonstrate that PM-ID+ can borrow information from patterns where the sets $\mathbb{M}(r)$ and $\mathbb{M}(\widetilde{r})$ are incomparable. (b) The pattern DAG produced by the PM-ID+ construction, where the edge $001 \rightarrow^3 100$ illustates this point.
    }
    \label{fig:pattern_dag_example_non_positive_2}
\end{figure}
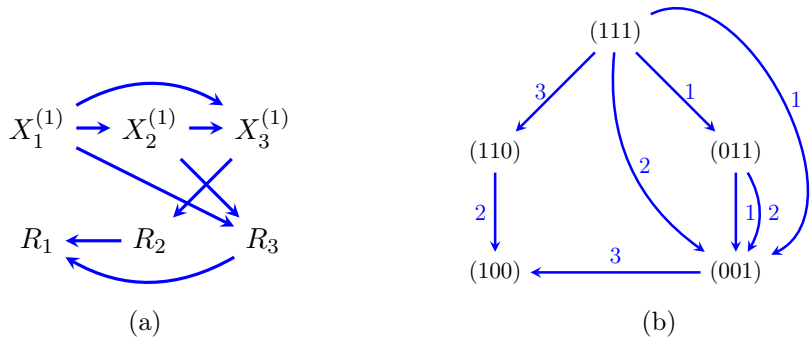

\end{example}

\subsection{Relative merits of PM-ID and PM-ID+}

\begin{figure}[t]
    \centering
    \subcaptionbox{}{
        \begin{tikzpicture}
            \def\d{1.5cm}
            \begin{scope}[>=stealth, node distance=1.5cm]
                \path[->, very thick]
                node[] (o) {$O$}
                node[right of=o] (x11) {$X^{(1)}_1$}
                node[right of=x11] (x21) {$X^{(1)}_2$}
                node[below of=x11] (r1) {$R_1$}
                node[below of=x21] (r2) {$R_2$}
                
                (o) edge[blue] (x11)
                (x11) edge[blue] (x21)
                (o) edge[blue, bend left] (x21)
                (o) edge[blue] (r1)
                (o) edge[blue] (r2)
                (r1) edge[-, brown] (r2)
                ;
            \end{scope}
        \end{tikzpicture}
    }
    \hspace{0.05\linewidth}
    \subcaptionbox{}{
        \scalebox{0.7}{
            \begin{tikzpicture}
                \def\d{2.0cm}
                \begin{scope}[>=stealth, node distance=2.0cm]
                    \path[->, very thick]
                    node[] (11) {$(11)$}
                    node[below of=11, xshift=-\d] (10) {$(10)$}
                    node[right of=10, xshift=\d] (01) {$(01)$}
                    node[below of=01, xshift=-\d] (00) {$(00)$}
                    
                    (11) edge[blue] node [midway, above]{2} (10)
                    (11) edge[blue] node [midway, above]{1} (01)
                    (10) edge[blue] node [midway, below]{1} (00)
                    (01) edge[blue] node [midway, below]{2} (00)
                    ;
                \end{scope}
            \end{tikzpicture}
        }
    }
    \hspace{0.05\linewidth} 
    \subcaptionbox{}{
        \scalebox{0.7}{
            \begin{tikzpicture}
                \def\d{2.0cm}
                \begin{scope}[>=stealth, node distance=2.0cm]
                    \path[->, very thick]
                    node[] (11) {$(11)$}
                    node[below of=11, xshift=-\d] (10) {$(10)$}
                    node[right of=10, xshift=\d] (01) {$(01)$}
                    node[below of=01, xshift=-\d] (00) {$(00)$}
                    
                    (11) edge[blue] node [midway, above]{2} (10)
                    (11) edge[blue] node [midway, above]{1} (01)
                    (10) edge[blue, bend left] node [midway, above]{2} (01)
                    (11) edge[blue, bend right=20] node [midway, right]{1} (00)
                    (11) edge[blue, bend left=20] node [midway, right]{2} (00)
                    (10) edge[blue] node [midway, below]{1} (00)
                    (10) edge[blue, bend right=25] node [midway, below]{2} (00)
                    (01) edge[blue, bend left=25] node [midway, below]{1} (00)
                    (01) edge[blue] node [midway, below]{2} (00)
                    ;
                \end{scope}
            \end{tikzpicture}
        }
    }
    \caption{
        (a) A sparse m-graph used to demonstrate the relative data efficiency of PM-ID+ and the robustness to graph misspecification of PM-ID. (b) The pattern DAG obtained using the PM-ID construction. (c) The pattern DAG obtained using the PM-ID+ construction.
    }
    \label{fig:pmid_pmidplus_compare}
\end{figure}
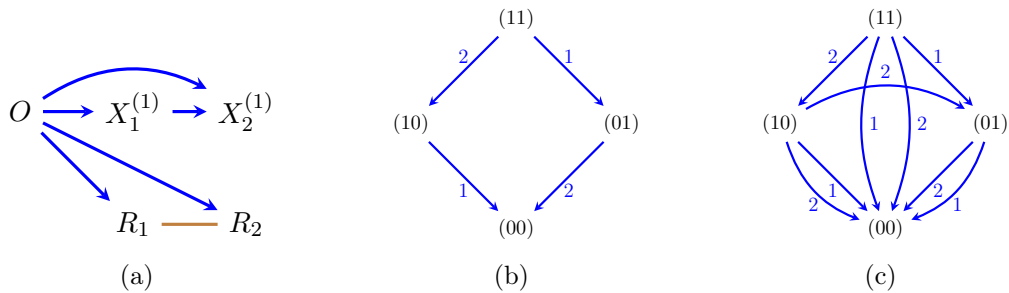

Before we present an imputation algorithm based on these identification results, we briefly comment on the relative tradeoffs between using PM-ID and PM-ID+ for the identification step preceding imputation of missing values. 

The obvious advantage of PM-ID+ is that it can identify missing data models under a weaker set of positivity assumptions. That is, the set of missing data models that can be identified by PM-ID+ is strictly larger than those identifiable by PM-ID. Further, since PM-ID+ uses additional constraints encoded by the m-graph, it can often lead to pattern DAGs that make use of more rows of data than PM-ID, even when the full law is strictly positive. As an example, consider the m-graph $\G$ in Figure~\ref{fig:pmid_pmidplus_compare}(a) and the  pattern DAGs produced in Figures~\ref{fig:pmid_pmidplus_compare}(b) and (c) produced by the PM-ID and PM-ID+ constructions respectively. The PM-ID+ pattern DAG makes use of more constraints encoded by the m-graph, e.g., when identifying the pattern $00$, it uses information from all of the other patterns, while PM-ID does not. This may affect the efficiency of downstream imputation by dictating what rows of data can be used to fit each Gibbs factor in the extrapolation density. For example, for the pattern $00$  with Gibbs factors $p(X_1^{(1)} \mid X_2^{(1)}, R=00)$ and $p(X_2^{(1)} \mid X_1^{(1)}, R=00)$, PM-ID+ proposes each of these Gibbs factors can be fit using data from any other row, while PM-ID proposes the first Gibbs factor can be fit only using rows of data with the pattern $10$ and the second can be fit only using rows with the pattern $01$. An interesting question for future is whether this also leads to greater precision in downstream estimates of target parameters.

Next, we provide a brief analysis on the computational complexity  PM-ID and PM-ID+, as a function of the number of missing variables $K$ and missing patterns with positive support $|{\cal R}|^+$. That is, for this analysis, we consider sets like the Markov blanket of variables to have been pre-computed and stored, as these operations are fast and can be performed in linear time in number of edges and vertices of the graph using depth first search.

For PM-ID, we require all possible patterns to have positive support. In the outer loop we iterate over these patterns and in the inner loop, we iterate through every missing index in these patterns. That is, the outer loop executes $|{\cal R}|^+ = 2^K$ times and the inner loop executes at most $K$ times (for the pattern with all zeros), where $K$ is the number of potentially missing variables and $\cal R^{+}$ is the number of positivity patterns. This gives us a total computational complexity of ${\cal O}(K2^K)$ for PM-ID. The following lemma provides the computational complexity of PM-ID+.
\begin{lemma}
    PM-ID+ has worst case time complexity $\mathcal{O}(K|\mathcal{R}^{+}|^3)$, where $K$ is the number of missing variables and $|{\cal R}|^+$ is the number of missingness patterns with support.
    \label{lem:complexity-pm-id-plus}
\end{lemma}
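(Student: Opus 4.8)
The plan is to bound the running time of Algorithm~\ref{alg:pmid-plus} by a product over the nesting of its loops, the crucial point being that the outer \texttt{while} loop can iterate only $O(|{\cal R}^+|)$ times. First I would establish that bound: the \texttt{while} guard forces $\Delta \neq \emptyset$ on every pass, and every pattern that was ever placed in $\Delta$ on a previous pass was, at the time it was added, a member of ${\cal R}^+ \setminus \text{ID}$; hence the assignment $\text{ID} \gets \text{ID} \cup \Delta$ strictly enlarges $\text{ID}$ on each pass. Since $\text{ID} \subseteq {\cal R}^+$ throughout, there are at most $|{\cal R}^+|$ such passes, plus at most one terminal pass on which $\Delta$ is empty and the algorithm returns. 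So the \texttt{while} loop contributes a factor $O(|{\cal R}^+|)$.

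Next I would bound the cost of a single \texttt{while} pass. The \texttt{for} loop over $r \in {\cal R}^+ \setminus \text{ID}$ runs at most $|{\cal R}^+|$ times, and for each such $r$ the inner \texttt{for} loop over $i \in \mathbb{M}(r)$ runs at most $K$ times. Fixing $(r,i)$, computing ${\cal Z}(r,i)$ scans the at most $|{\cal R}^+|$ candidate patterns $\widetilde r \in \text{ID}$ and, for each, tests whether $R_j \notin \mb_\G(X_i^{(1)})$ for every $j \in \diff(r,\widetilde r)$. Under the standing assumption of the lemma that sets such as $\mb_\G(X_i^{(1)})$ have been precomputed and stored (say, as $K$-bit indicator vectors over the $R$-indices), and since $\diff(r,\widetilde r)$ is the bitwise XOR of the two pattern strings and does not depend on $i$ (so it is formed once per pair $(r,\widetilde r)$ and reused across the $i$-loop), each such membership test costs a constant number of word operations. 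Hence computing ${\cal Z}(r,i)$ is $O(|{\cal R}^+|)$, and a single \texttt{while} pass costs $O(|{\cal R}^+| \cdot K \cdot |{\cal R}^+|) = O(K|{\cal R}^+|^2)$; the step that appends edges $\widetilde r \to^i r$ inserts at most $K|{\cal R}^+|$ edges per $r$ in $O(1)$ time each, which is dominated.

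Multiplying the $O(|{\cal R}^+|)$ passes by the $O(K|{\cal R}^+|^2)$ per-pass cost yields the claimed $O(K|{\cal R}^+|^3)$, the one-time precomputations (Markov blankets via depth-first search, pairwise difference masks) being dominated by this bound. The step I expect to require the most care is the accounting for ${\cal Z}(r,i)$: counted naively --- iterating over $\diff(r,\widetilde r)$ index by index and testing each against $\mb_\G(X_i^{(1)})$ --- one would incur an extra factor of $K$ and obtain only $O(K^2|{\cal R}^+|^3)$, so the argument must genuinely rely on the lemma's convention that operations on the precomputed $K$-element sets are unit cost (equivalently, on reusing $\diff(r,\widetilde r)$ across the $i$-loop and performing each containment test as a single masked comparison). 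The remaining bookkeeping is routine.
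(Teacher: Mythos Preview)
Your proposal is correct and follows essentially the same loop-nesting argument as the paper: at most $|{\cal R}^+|$ passes of the \texttt{while} loop, at most $|{\cal R}^+|$ patterns per pass, and $O(K|{\cal R}^+|)$ work per pattern, multiplying to $O(K|{\cal R}^+|^3)$. If anything, you are more careful than the paper's proof about the cost of evaluating ${\cal Z}(r,i)$: the paper silently treats the test ``$R_j \notin \mb_\G(X_i^{(1)})$ for all $j \in \diff(r,\widetilde r)$'' as unit cost under its precomputation convention, whereas you explicitly flag that a naive index-by-index scan would incur an extra factor of $K$ and explain how bitmask representations and reuse of $\diff(r,\widetilde r)$ across the $i$-loop recover the stated bound.
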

\begin{proof}
    In the worst case, we identify only one new pattern with positive support on each iteration of PM-ID+. That is, the outermost while loop executes at most $|{\cal R}^+|-1$ times. Similarly, the inner for loop on line 6 of the algorithm executes at most $|{\cal R}^+|-1$ on the first iteration when only the complete case pattern is known to be identified. The block consisting of lines 8-13 and the loops in line 13 both perform at most $K(|{\cal R}^+|-1)$ steps---when the pattern $r$ is the pattern with all zeros and every other pattern with positive support has been identified before it. This gives us the overall complexity of ${\cal O}(K|{\cal R}^+|^3)$.
\end{proof}
Note that when all patterns have positive support, i.e., $|{\cal R}^+| = 2^K$, Lemma~\ref{lem:complexity-pm-id-plus} implies that PM-ID+ has worse computational complexity than PM-ID, as in this case ${\cal O}(K|{\cal R}^+|^3) = {\cal O}(K8^K)$. 
However, in most practical applications it is likely to be the case that $|{\cal R}^+| \ll 2^K$ and the number of positive patterns is, in fact, some function of the number of rows of data $N$. For example, in a problem with twenty missing variables (i.e., $K=20$) there are over 1 million possible missingness patterns. However, most available datasets are likely to have number of rows $N$ much smaller than that.
Thus, the number of patterns with support $|\mathcal{R}^{+}|$ is typically an order of magnitude smaller than $2^K$. For example, $|\mathcal{R}^{+}|$ may only be a polynomial (or perhaps even logarithmic) function of $K$ and $N$. Therefore, in such cases, the time complexity of PM-ID+ is also polynomial, rather than exponential, with respect to the number of missing variables $K$ and number of rows $N$. We also see this empirically in our simulations in Section~\ref{sec:experiments}, where PM-ID+ and the imputation algorithm MISPR can finish running relatively quickly when the number of patterns with positive support grows slowly with respect to the size of the dataset.


While it might seem like PM-ID+ is better than PM-ID in most respects, the simplicity of the PM-ID construction does have one significant benefit---it can protect against model misspecification when there is significant uncertainty about the exact structure of the m-graph $\G$. If, for example, the m-graph in Figure~\ref{fig:pmid_pmidplus_compare}(a) is incorrect and there in fact exists an edge $X_2^{(1)}\rightarrow R_1$. Then the pattern DAG obtained from the PM-ID construction remains valid. However, the pattern DAG from PM-ID+   includes edges that are incorrect due to misspecification of the m-graph, e.g., the edge $(11)\rightarrow^2 (00)$  is incorrect as $p(X_2^{(1)} \mid X_1^{(2)}, R=00)\not= p(X_2^{(1)} \mid X_1^{(2)}, R=11)$ when $X_2^{(1)} \rightarrow R_1$ is present. Thus, in cases of model uncertainty, PM-ID might be preferrable to PM-ID+. In fact, the analyst need not even specify a full m-graph for PM-ID, but rather only justify the graphical condition (S1).

Note that neither PM-ID nor PM-ID+ support positivity violations on the complete case pattern. One way to handle this is to also incorporate independences of the form $X_i^{(1)} \Perp X_j^{(1)} \mid \mb_\G(X_i^{(1)})$ into the pattern DAG construction process, i.e., use constraints on the target law $p(X^{(1)}, O)$ implied by the local Markov property of the m-graph $\G$. We provide an example of how to do this in Appendix~\ref{app:no_complete_cases}. However, we do not expand on such techniques further in this paper, as it is often desirable to assume the least restrictive model possible for the target law to prevent bias in downstream analysis.

\section{MISPR: Multivariate Imputation via Supported Pattern Recursion}
\label{sec:imputation_algorithm}

Having described our identification algorithms, we are finally ready to describe our imputation algorithm that uses either PM-ID or PM-ID+ as a pre-processing step to ensure the full law is identified and to direct the imputation procedure itself based on the pattern DAG.

Data analysts often only have access to a finite data table $\mathbb{D}_{N\times(2K+J)}$, a table with $N$ rows of data where the $n^{th}$ row of data is a vector $[x_1^n, \dots, x_K^n, r_1^n, \dots, r_K^n, o_1^n, \dots, o_J^n]$, where the values $o_j^n$ are always observed, while the values $x_k^n$ could be missing (denoted as ``?'' or ``NA''), as indicated by the value of the corresponding missingness indicator $r_k^n$. The goal of this section is to provide an imputation algorithm which fills in missing values in such a data table. Fortunately, since PM-ID and PM-ID+ are constructive algorithms, it is relatively easy to translate the identification strategy that operates on a distribution level to estimation and imputation algorithms with finite data.

To see this, it is insightful to note that knowing the extrapolation density for pattern $r$ is sufficient to perform imputation for all data rows matching this missing pattern, independent of missing data model considered. Concretely, for each data row $D^{n}=(X_{\mathbb{M}(r^n)} = ?, X_{\mathbb{O}(r^n)} = x_{\mathbb{O}(r^n)}^{n}, O=o^n, R=r^n)$ such that $r^n$ matches a certain pattern $r$, the missing values for $X_{\mathbb{M}(r^n)}$ can be imputed by sampling values from the extrapolation density $x^{(1)}_{\mathbb{M}(r)} \sim p(X^{(1)}_{\mathbb{M}(r)} \mid X_{\mathbb{O}(r)} = x_{\mathbb{O}(r)}^{n}, O = o^{n}, R=r)$. Hence, knowing the extrapolation densities at all patterns present in the data table is sufficient for imputation of that table. Further, we can use PM-ID or PM-ID+, as appropriate, to obtain explicit expressions of the Gibbs factors of the extrapolation densities of all patterns with positive support. Hence, imputation for a data row with pattern $r$ can proceed by Gibbs sampling using the Gibbs factors for the extrapolation density for this pattern.

\begin{algorithm}[t]
    \caption{Mutivariate Imputation via Supported Pattern Recursion (MISPR)}
    \label{alg:imputation_via_pattern_dag}
    
    \begin{algorithmic}[1]
    
        \Require m-graph $\G$, Data $\mathbb{D}_{N \times (2K+J)}$ where $D^n = [x_1^n, \dots,x_K^n, r_1^n, \dots, r_K^n, o_1^n, \dots, o_J^n]$
        
        \State ${\cal R}^+ \gets \text{all unique missingness patterns } r \text{ in } \mathbb{D}$
        \If{ ${\cal R} = {\cal R}^+$}
        \State ${\cal H}(\widetilde{V}, \widetilde{E}) \gets \textsc{PM-ID Construction}(\G, {\cal R}^+)$
        \Else \State ${\cal H}(\widetilde{V}, \widetilde{E}) \gets \textsc{PM-ID+ Construction}(\G, {\cal R}^+)$
        \EndIf
        \If{${\cal H} = \text{``fail''}$ } {\textbf{return} ``fail''}
        \EndIf
        \vspace{0.5em}
        \For {$r \in \prec_{{\cal H}}$}
            \State \Comment{Fit Gibbs factors using data rows we can borrow information from according to ${\cal H}$}     
            \For {each $i \in \mathbb{M}(r)$}
                \State Let $\operatorname{pa}(r, i) := \{\widetilde{r} \in {\cal R}^+ \mid \widetilde{r} \rightarrow^{i} r \text{ in } {\cal H}\}$
                \State Fit $\widehat{p}(X_{i}^{(1)} \mid X_{-i}^{(1)}, O, R=r)$ using rows $\{D^n \mid r^n=\widetilde{r} \text{ for any } \widetilde{r} \in \operatorname{pa}(r, i)\}$
            \EndFor

            \State \Comment{Fill in missing values for all rows with pattern $r$ with Gibbs sampling using the fitted Gibbs factors}
            
            \For {$n=1, \ldots, N$}
                \If {$r^n = r$}
                    \State Gibbs sample $x^{(1)}_{\mathbb{M}(r)} \sim \widehat{p}(X^{(1)}_{\mathbb{M}(r)} \mid X_{\mathbb{O}(r)}=x^n_{\mathbb{O}(r)}, O=o^n, R=r)$\\ \hspace{1em} using Gibbs factors $\{\widehat{p}(X^{(1)}_{i} \mid X^{(1)}_{-i}, O, R=r) \mid \forall i \in \mathbb{M}(r)\}$
                    \State Update $\mathbb{D}$ by assigning $x^n_{\mathbb{M}(r)} \gets x^{(1)}_{\mathbb{M}(r)}$
                \EndIf
            \EndFor
        \EndFor
        \Return imputed data table $\mathbb{D}$
    \end{algorithmic}
\end{algorithm}

This forms the basis for MISPR (Multivariate Imputation via Supported Pattern Recursion) defined in
Algorithm~\ref{alg:imputation_via_pattern_dag}. The algorithm starts by determining identifiability of the full law based on the given m-graph $\G$ and the patterns with positive support in the data table $\mathbb{D}$. When  all patterns have positive support, we opt to use PM-ID as it provides some robustness to misspecification of the exact structure of the m-graph. Otherwise, we use PM-ID+.

If the identification step succeeds, we then process patterns according to any valid topological order of patterns $\prec_{\cal H}$, just as in Theorem~\ref{thm:pattern-dag-id}.
Suppose the algorithm is at some pattern $r$ with missing indices $\mathbb{M}(r)$. To estimate the $i^{th}$ Gibbs factor of the extrapolation density for $i\in \mathbb{M}(r)$, we find rows of data corresponding to patterns that we can borrow information from---this corresponds precisely to those patterns $\widetilde{r}$ such that $\widetilde{r} \rightarrow^i r$ exists in ${\cal H}$. By the induction hypothesis, all data rows corresponding to these patterns have been imputed in prior steps, and thus contain no missing values. Hence, the $i^{th}$ Gibbs factors can be fit using all data rows matching the patterns $\pa(r, i)$ computed in line 10 of the algorithm. Note these Gibbs factors can also be simplified according to the local Markov property, as $p(X_i^{(1)} \mid X_{-i}^{(1)}, O, R)=p(X_i^{(1)} \mid \mb_\G(X_i^{(1)})$, which makes estimation more feasible in high dimensional settings. Estimation can proceed either parametrically or using flexible machine learning models, such as random forests.\footnote{Though this can run the risk of leading to incongenial specifications of the joint extrapolation density.} The estimated Gibbs factors are then used to fill in the missing values for all data rows matching the current pattern $r$---this occurs in lines 13-16 of the algorithm. That is, in MISPR, imputation of data rows occurs according to a partial order on patterns with support, as outlined by the pattern DAG ${\cal H}$.

\section{Numerical Experiments}
\label{sec:experiments}

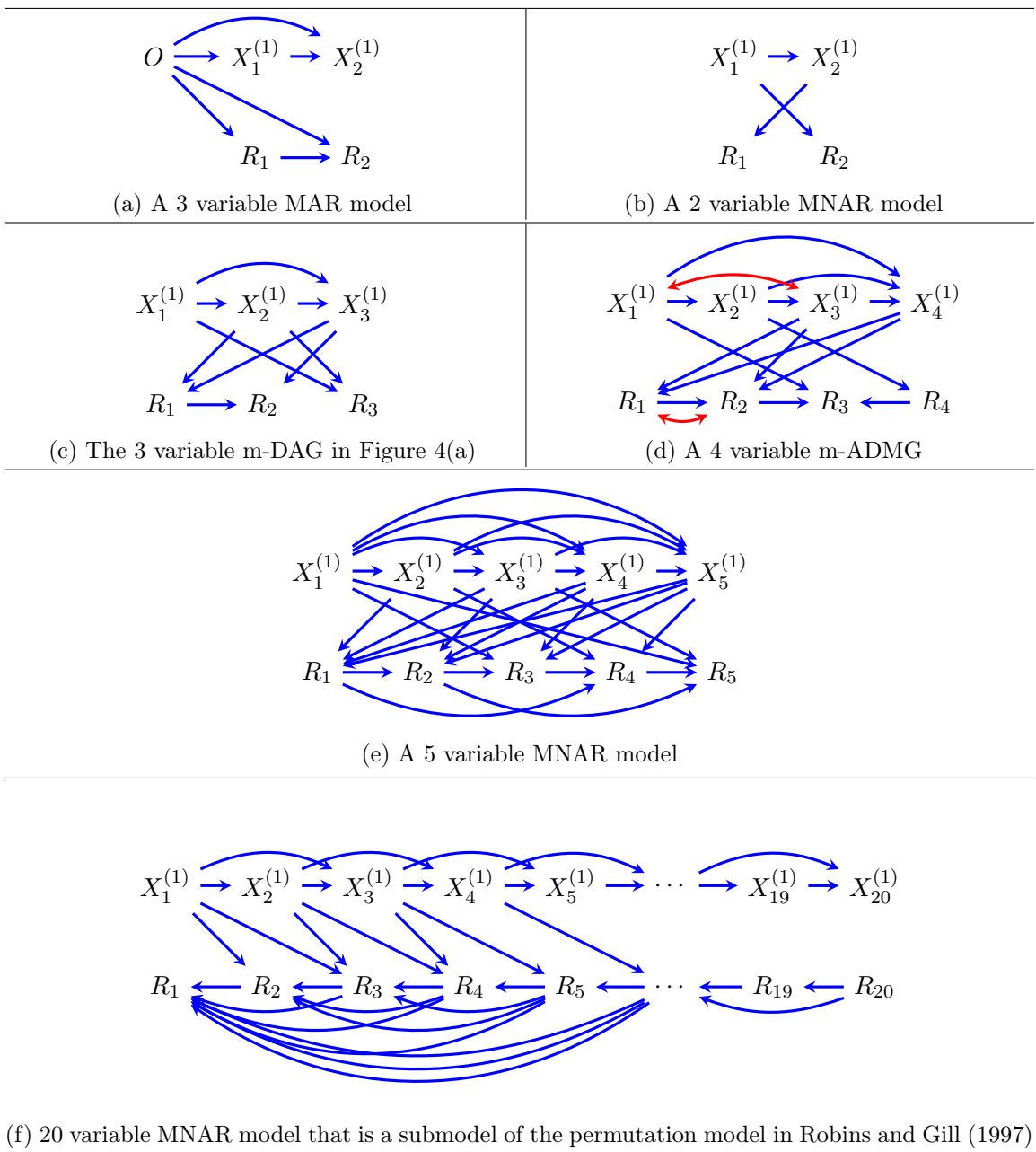
\begin{figure}
    \hrule
    \subcaptionbox{A 3 variable MAR model}[0.5\linewidth]{
        \begin{tikzpicture}[>=stealth, node distance=1.5cm]
            \def\d{1.5cm}
            \begin{scope}
                \path[->, very thick]
                node[] (o) {$O$}
                node[right of=o] (x11) {$X^{(1)}_1$}
                node[right of=x11] (x21) {$X^{(1)}_2$}
                node[below of=x11] (r1) {$R_1$}
                node[below of=x21] (r2) {$R_2$}
                
                (o) edge[blue] (x11)
                (x11) edge[blue] (x21)
                (o) edge[blue, bend left] (x21)
                (o) edge[blue] (r1)
                (o) edge[blue] (r2)
                (r1) edge[blue] (r2)
                ;
            \end{scope}
        \end{tikzpicture}
    }
    \vrule
    \subcaptionbox{A 2 variable MNAR model}[0.5\linewidth]{
        \begin{tikzpicture}[>=stealth, node distance=1.5cm]
            \def\d{1.5cm}
            \begin{scope}
                \path[->, very thick]
                node[] (x11) {$X^{(1)}_1$}
                node[right of=x11] (x21) {$X^{(1)}_2$}
                
                node[below of=x11] (r1) {$R_1$}
                node[below of=x21] (r2) {$R_2$}
                
                (x11) edge[blue] (r2)
                (x21) edge[blue] (r1)
                (x11) edge[blue] (x21)
                ;
            \end{scope}
        \end{tikzpicture}
    }
    \\
    \hrule
    \subcaptionbox{The 3 variable m-DAG in Figure~\ref{fig:pattern_dag_example_non_positive_1}(a)}[0.5\linewidth]{
        \begin{tikzpicture}[>=stealth, node distance=1.5cm]
            \def\d{1.5cm}
            \begin{scope}
                \path[->, very thick]
                node[] (x11) {$X^{(1)}_1$}
                node[right of=x11] (x21) {$X^{(1)}_2$}
                node[right of=x21] (x31) {$X^{(1)}_3$}
                
                node[below of=x11] (r1) {$R_1$}
                node[below of=x21] (r2) {$R_2$}
                node[below of=x31] (r3) {$R_3$}
                
                (r1) edge[blue] (r2)
                (x21) edge[blue] (r1)
                (x31) edge[blue] (r1)
                (x31) edge[blue] (r2)
                (x11) edge[blue] (r3)
                (x21) edge[blue] (r3)
                
                (x11) edge[blue] (x21)
                (x21) edge[blue] (x31)
                (x11) edge[blue, bend left] (x31)
                ;
            \end{scope}
        \end{tikzpicture}
    }
    \vrule
    \subcaptionbox{A 4 variable m-ADMG}[0.5\linewidth]{
        \begin{tikzpicture}[>=stealth, node distance=1.5cm]
            \def\d{1.5cm}
            \begin{scope}
                \path[->, very thick]
                node[] (x11) {$X^{(1)}_1$}
                node[right of=x11] (x21) {$X^{(1)}_2$}
                node[right of=x21] (x31) {$X^{(1)}_3$}
                node[right of=x31] (x41) {$X^{(1)}_4$}
                
                node[below of=x11] (r1) {$R_1$}
                node[below of=x21] (r2) {$R_2$}
                node[below of=x31] (r3) {$R_3$}
                node[below of=x41] (r4) {$R_4$}
                
                (r1) edge[blue] (r2)
                (r2) edge[blue] (r3)
                (r4) edge[blue] (r3)
                (x21) edge[blue] (r4)
                (x31) edge[blue] (r1)
                (x31) edge[blue] (r2)
                (x11) edge[blue] (r3)
                (x41) edge[blue] (r1)
                (x41) edge[blue] (r2)
                (r1) edge[red, <->, bend right=25] (r2)
                
                (x11) edge[blue] (x21)
                (x21) edge[blue] (x31)
                (x31) edge[blue] (x41)
                (x21) edge[blue, bend left=20] (x41)
                (x11) edge[blue, bend left = 35] (x41)
                (x11) edge[red, <->, bend left=20] (x31)
                ;
            \end{scope}
        \end{tikzpicture}
    }
    \\
    \hrule
    \subcaptionbox{A 5 variable MNAR model}[1.0\linewidth]{
        \centering
        \begin{tikzpicture}[>=stealth, node distance=1.5cm]
            \def\d{1.5cm}
            \begin{scope}
                \path[->, very thick]
                node[] (x11) {$X^{(1)}_1$}
                node[right of=x11] (x21) {$X^{(1)}_2$}
                node[right of=x21] (x31) {$X^{(1)}_3$}
                node[right of=x31] (x41) {$X^{(1)}_4$}
                node[right of=x41] (x51) {$X^{(1)}_5$}
                
                node[below of=x11] (r1) {$R_1$}
                node[below of=x21] (r2) {$R_2$}
                node[below of=x31] (r3) {$R_3$}
                node[below of=x41] (r4) {$R_4$}
                node[below of=x51] (r5) {$R_5$}
                
                (r1) edge[blue] (r2)
                (r1) edge[blue, bend right=25] (r4)
                (r2) edge[blue] (r3)
                (r2) edge[blue, bend right=25] (r5)
                (r3) edge[blue] (r4)
                (r4) edge[blue] (r5)
                
                (x11) edge[blue] (r3)
                (x11) edge[blue] (r5)
                (x21) edge[blue] (r1)
                (x21) edge[blue] (r4)
                (x31) edge[blue] (r1)
                (x31) edge[blue] (r2)
                (x31) edge[blue] (r5)
                (x41) edge[blue] (r1)
                (x41) edge[blue] (r2)
                (x41) edge[blue] (r3)
                (x51) edge[blue] (r1)
                (x51) edge[blue] (r2)
                (x51) edge[blue] (r3)
                (x51) edge[blue] (r4)
                
                (x11) edge[blue,] (x21)
                (x11) edge[blue, bend left=25] (x31)
                (x11) edge[blue, bend left=30] (x41)
                (x11) edge[blue, bend left=35] (x51)
                (x21) edge[blue,] (x31)
                (x21) edge[blue, bend left=25] (x41)
                (x21) edge[blue, bend left] (x51)
                (x31) edge[blue,] (x41)
                (x31) edge[blue, bend left=25] (x51)
                (x41) edge[blue,] (x51)
                ;
                \end{scope}
        \end{tikzpicture}
    }
    \\
    \vspace{0.2em}
    \hrule
    \vspace{1cm}
    \subcaptionbox{20 variable MNAR model that is a submodel of the permutation model in \cite{robins1997non}}[1.0\linewidth]{
        \centering
        \begin{tikzpicture}[>=stealth, node distance=1.5cm]
            \def\d{1.5cm}
            \begin{scope}
                \path[->, very thick]
                node[] (x11) {$X^{(1)}_1$}
                node[right of=x11] (x21) {$X^{(1)}_2$}
                node[right of=x21] (x31) {$X^{(1)}_3$}
                node[right of=x31] (x41) {$X^{(1)}_4$}
                node[right of=x41] (x51) {$X^{(1)}_5$}
                node[right of=x51] (xdd) {$\cdots$}
                node[right of=xdd] (x181) {$X^{(1)}_{19}$}
                node[right of=x181] (x191) {$X^{(1)}_{20}$}
                
                node[below of=x11] (r1) {$R_1$}
                node[below of=x21] (r2) {$R_2$}
                node[below of=x31] (r3) {$R_3$}
                node[below of=x41] (r4) {$R_4$}
                node[below of=x51] (r5) {$R_5$}
                node[below of=xdd] (rdd) {$\cdots$}
                node[below of=x181] (r18) {$R_{19}$}
                node[below of=x191] (r19) {$R_{20}$}
                
                (r2) edge[blue] (r1)
                (r3) edge[blue, bend left=20] (r1)
                (r4) edge[blue, bend left=25] (r1)
                (r5) edge[blue, bend left=30] (r1)
                (rdd) edge[blue, bend left=35] (r1)
                (r3) edge[blue] (r2)
                (r4) edge[blue, bend left=20] (r2)
                (r5) edge[blue, bend left=25] (r2)
                (rdd) edge[blue, bend left=30] (r1)
                (r4) edge[blue] (r3)
                (r5) edge[blue, bend left=20] (r3)
                (rdd) edge[blue, bend left=25] (r1)
                (r5) edge[blue] (r4)
                (rdd) edge[blue] (r5)
                (r18) edge[blue] (rdd)
                (r19) edge[blue, bend left=20] (rdd)
                (r19) edge[blue] (r18)
                
                (x11) edge[blue] (r2)
                (x11) edge[blue] (r3)
                (x21) edge[blue] (r3)
                (x21) edge[blue] (r4)
                (x31) edge[blue] (r4)
                (x31) edge[blue] (r5)
                (x41) edge[blue] (rdd)
                
                (x11) edge[blue,] (x21)
                (x11) edge[blue, bend left=25] (x31)
                (x21) edge[blue,] (x31)
                (x21) edge[blue, bend left=25] (x41)
                (x31) edge[blue,] (x41)
                (x31) edge[blue, bend left=25] (x51)
                (x41) edge[blue,] (x51)
                (x51) edge[blue,] (xdd)
                (x41) edge[blue, bend left=25] (xdd)
                (xdd) edge[blue,] (x181)
                (x181) edge[blue,] (x191)
                (xdd) edge[blue, bend left=25] (x191)
                ;
                \end{scope}
        \end{tikzpicture}
    }
    \vspace{0.2em}
    \hrule
    \caption{Missing data models used in our numerical experiments.}
    \label{fig:m_graphs_experiments}
\end{figure}

In this section we evaluate the performance of MISPR in comparison with the MICE package available in the \texttt{R} programming language \citep{van2011mice}, as it is currently the most popular off-the-shelf imputation software. To compare the asymptotic behavior of these methods, for each synthetic data generating process (DGP), we generate $N=10^5$ samples for the smaller examples in Figures~\ref{fig:m_graphs_experiments}(a-e) and generate $N=5\times 10^5$ for the high-dimensional example in Figure~\ref{fig:m_graphs_experiments}(f). Details of each DGP are provided in Appendix~\ref{app:dgps}.

For each experiment, we generate an observed dataset $\mathbb{D}$ with $N$ samples by sampling according to the true full law $p_0(X^{(1)}, O, R)$, i.e., we draw $N$ samples of $X^{(1)}, O, R \sim p_0$, and then set missing variables to be unobserved according to the rules of missing data consistency. The true full law $p_0$ for each experiment is selected randomly, so as not to pick any specific DGPs that may be favorable to one method or the other. Then, imputation is performed separately on the dataset with missing values $\mathbb{D}$ using MICE and MISPR with PM-ID+ and PM-ID in the identification step, when applicable. We use MICE with random forests to estimate each imputing distribution using the settings \texttt{mice(data, m = 7, method = `rf', seed = 123, maxit=100)}. For MISPR, we also fit the Gibbs factors using random forests, and use a burn-in period of 500 steps for Gibbs sampling. We then use the aggregated datasets to obtain estimates of the target law $\widehat{p}(X^{(1)}, O)$ using maximum likelihood. We obtain $7$ imputed datasets $\widehat{\mathbb{D}}_1, \dots, \widehat{\mathbb{D}}_7$ from each method using different random seeds in this manner, and aggregate them as $\widehat{\mathbb{D}} = \bigcup_{i=1}^7 \widehat{\mathbb{D}}_i$.  We then compare the accuracy of imputation using MISPR and MICE by computing the $L_2$ and $L_\infty$ distance between the fitted and ground truth target laws, i.e., $\|\hat{p}(X^{(1)}, O) - p_0(X^{(1)}, O)\|_{q}$ with $q=2$ and $q=\infty$. We also report timings for each method in producing the seven imputed datasets. Table~\ref{tab:mice_pm_compare} reports these distances and timings for both MISPR and MICE across all DGPs. Entries marked with a `+' indicate the DGP contains a positivity violation. We briefly summarize the main findings below, where each bold subheading highlights the main finding from the experiment.

\paragraph{Expr 1: Ignoring positivity violations can lead to bias, even in MAR scenarios} In this experiment, data are generated from two  full data laws that factorize according to the m-graph in Figure~\ref{fig:m_graphs_experiments}(a), where one $X^{(1)} \Perp R \mid O$ so that the model is MAR. The first full data law we sample from is strictly positive, but the second is not. In the former case, MISPR (using either PM-ID+ or PM-ID) and MICE achieve fairly comparable results, albeit MICE is much slower and the bias from MISPR is also an order of magnitude lower (first row of Table~\ref{tab:mice_pm_compare}). In the second case, however, the difference is much more noticeable, where despite the structural assumptions of the m-graph implying a MAR model, the correct handling of patterns with no support using PM-ID+ is essential. The second row of Table~\ref{tab:mice_pm_compare} shows how MISPR with PM-ID+ obtains much lower bias than MICE in this setting.

\paragraph{Expr 2: MISPR is able to handle both MNAR and sparse pattern support}
This experiment checks the performance of MISPR across a variety of MNAR full data laws, with and without positivity violations. This corresponds to full data laws drawn from Figures~\ref{fig:m_graphs_experiments}(b-d), with results reported in the corresponding rows of Table~\ref{tab:mice_pm_compare}. MISPR outperforms MICE across the board in timing and bias, with the exception of one scenario where MICE obtains comparable results. This corresponds to a full data law with positivity violations that factorizes according to Figure~\ref{fig:m_graphs_experiments}(b). On inspection of this DGP, we realized that the particular positivity violations and the structural assumptions of this model actually reduce to a MAR model, which is an exceptionally rare occurrence in general.

\paragraph{Expr 3: Imputation with several missing variables and sparse pattern support}
In this experiment, we generate data from full data laws that factorize according to the m-graphs in Figures~\ref{fig:m_graphs_experiments}(e) and (f) containing 5 and 20 variables respectively. In such high-dimensions support on all possible patterns is exceedingly unlikely---in our DGPs the full data law for Figure~\ref{fig:m_graphs_experiments}(e) has $24$ out of the $2^5$ possible patterns, while the one for Figure~\ref{fig:m_graphs_experiments}(e) has only $476$ out of the possible $2^{20}$ possible patterns (explicit details of these DGPs can be found in the Appendix~\ref{app:dgps}). For the 5 variable case we are able to produce comparisons to MICE that demonstrate that MISPR with PM-ID+ produces superior results. For the 20 variable case, the scaling of the MICE package did not allow us to obtain results for it---MICE took close to a day even for the 3-5 variable examples. Thus, we only run MISPR on the 20 variable example, and show that it produces estimates with very low bias, with reasonable running time on the 476 patterns. We suspect these running times can be improved even further with a better implementation of our method---our implementation is in Python, and does not use any parallelism to process patterns in the partial order that are incompatible (two patterns $r$ and $\widetilde{r}$ are considered incompatible if neither $r \prec_{\cal H} \widetilde{r}$ nor $\widetilde{r} \prec_{\cal H} r$).

\paragraph{Summary} MISPR outperforms MICE across a wide range of scenarios, including MAR models with positivity violations and MNAR models with and without positivity violations.

\def\arraystretch{1.5}
\begin{table}[t]
\begin{tabular}
{||p{2.5em}||p{2.5em}|p{2.5em}|p{3.5em}||p{2.5em}|p{2.5em}|p{3.5em}||p{2.5em}|p{2.5em}|p{3.5em}||}
    \hline
    \multirow{2}{4em}{DGP} & \multicolumn{3}{|c||}{\bf MICE} & \multicolumn{3}{|c||}{{\bf MISPR w/ PM-ID+}} & \multicolumn{3}{|c||}{{\bf MISPR w/ PM-ID}} \\
    \cline{2-10}
    & $\text{L}_{2}$ & $\text{L}_{\infty}$ & Time & $\text{L}_{2}$ & $\text{L}_{\infty}$ & Time & $\text{L}_{2}$ & $\text{L}_{\infty}$ & Time \\
    
    \hline
    \ref{fig:m_graphs_experiments}a
        & 0.0890 & 0.0568 & 6h26m
        & 0.0060 & 0.0034 & 1h13m
        & 0.0031 & 0.0020 & 1h18m
    \\
    \ref{fig:m_graphs_experiments}a +
        & 0.2329 & 0.1576 & 5h24m
        & 0.0030 & 0.0016 & 0h29m
        & NA & NA & NA
    \\
    
    \ref{fig:m_graphs_experiments}b
        & 0.2580 & 0.1913 & 3h16m
        & 0.0039 & 0.0030 & 0h50m
        & 0.0038 & 0.0028 & 0h50m
    \\
    \ref{fig:m_graphs_experiments}b +
        & 0.0028 & 0.0023 & 9h17m
        & 0.0026 & 0.0021 & 11 sec
        & NA & NA & NA
    \\
    
    
    \ref{fig:m_graphs_experiments}c
        & 0.0854 & 0.0568 & 18h
        & 0.0075 & 0.0060 & 1h04m
        & 0.0078 & 0.0061 & 1h23m
    \\
    \ref{fig:m_graphs_experiments}c +
        & 0.0197 & 0.0138 & 23h36m
        & 0.0045 & 0.0034 & 0h44m
        & NA & NA & NA
    \\

    \ref{fig:m_graphs_experiments}d
        & 0.0678 & 0.0457 & 30h47m
        & 0.0074 & 0.0034 & 1h28m
        & 0.0057 & 0.0029 & 1h30m
    \\
    \ref{fig:m_graphs_experiments}d +
        & 0.0575 & 0.0406 & 42h53m
        & 0.0099 & 0.0058 & 1h41m
        & NA & NA & NA
    \\

    \ref{fig:m_graphs_experiments}e +
        & 0.1205 & 0.0630 & 18h30m
        & 0.0412 & 0.0324 & 3h
        & NA & NA & NA
    \\

    \ref{fig:m_graphs_experiments}f +
        & - & - & -
        & $2.4 \times 10^{-5}$ & $3.3 \times 10^{-7}$ & 28h38m
        & NA & NA & NA
    \\
    
    \hline
\end{tabular}
\caption{
    Comparison between MICE and MISPR on random full data laws that factorize according to m-graphs in Figure~\ref{fig:m_graphs_experiments}. Rows marked with `+' correspond to full data laws that have positivity violations. The $L_2$ and $L_{\infty}$ distance of the true and estimated target laws are reported to measure the accuracy of the imputation. PM-ID is not applicable when there are positivity violations, hence we report ``NA'' for the corresponding rows. We were unable to obtain results for MICE for the full data law \ref{fig:m_graphs_experiments}f + due to its long run times.
}
\label{tab:mice_pm_compare}
\end{table}



\section{Discussion And Related Work}
\label{sec:comparison}

In this paper, we have added to the identification and imputation literature on missing data models, specifically those that factorize according to missing data graphs.

From an identification standpoint, our work is most similar to the identification methods proposed by \cite{sadinle2016itemwise} and the pattern graph framework proposed by \cite{Chen.2022.PatternGraphs}. \cite{sadinle2016itemwise} provide a constructive argument for full law identification of the no self-censoring model (recall, a 3 variable version of this model is shown in Figure~\ref{fig:m-graphs}(c)) based on a log-linear parameterization of the full law. However, this constructive argument fails when there is positive support for only a subset of all missingness patterns, which we deal with in Section~\ref{sec:positivity-violations}. \cite{Chen.2022.PatternGraphs} proposes a pattern graph representation of assumptions in missing data models, where the patterns with positive support are represented as vertices on a graph. The author then provides a factorization of the model with respect to the pattern graph itself. In contrast, the pattern graphs we propose here are not used to define the model, but rather encode identification assumptions implied by the factorization of an associated m-graph. That is, our use of pattern graphs still allows the analyst to specify assumptions in an intuitive manner based on missing edges in an m-graph. Another key difference is the restriction on edges in the two types of pattern graphs---in \cite{Chen.2022.PatternGraphs}, if  $\widetilde{r} \rightarrow r$ exists in ${\cal H}$, then it must be the case that $\mathbb{M}(r) \supset \mathbb{M}(\widetilde{r})$. This is not the case for us, as demonstrated in Figure~\ref{fig:pattern_dag_example_non_positive_2}(b), where the edge $(001)\rightarrow^3(100)$ does not follow this rule.


From an imputation standpoint, our work is most similar to \cite{kyono2021miracle}, \cite{ren2023multiple}, and \cite{karvanen2024multiple}. In \cite{kyono2021miracle} and \cite{karvanen2024multiple}, the authors propose imputation methods for missing data models whose full laws are identified and can be represented as m-DAGs. Besides covering additional classes of m-graphs, our method is more general, even for m-DAGs. In \cite{kyono2021miracle}, the missingness indicators in the m-DAG are not allowed to have any outgoing edges, while \cite{karvanen2024multiple} focus on m-DAGs that can be identified via a sequential procedure, i.e., models that can be identified via a total ordering. However, as we have seen in Section~\ref{sec:constructive_proof}, partial orders play an important role for a sound and complete procedure, even for m-DAGs. \cite{ren2023multiple} propose a multiple imputation procedure for the no self-censoring model under certain parametric assumptions. Moreover, the identification step of their procedure requires full support on all possible missingness patterns, which we relax with PM-ID+ by leveraging other constraints that might be implied by the given m-graph. The authors do, however, provide more discussion on congenial specification of the imputing distributions in the no self-censoring model and ideas for sensitivity analysis to departures from the structural assumptions of the model. These contributions are complementary to our own work in this paper.

Finally we conclude with some brief remarks on how our algorithm differs from MICE \citep{raghunathan2001multivariate, van2007multiple}. As we have seen in Section~\ref{sec:experiments}, MISPR outperforms MICE when the data are MNAR. The main reason for this is that MICE (and other similar multiple imputation algorithms) operates under a Missing At Random (MAR) assumption. As mentioned earlier, MICE also employs a Gibbs sampling algorithm when filling in missing values. Due to the MAR assumption, however, rather than borrowing information from different patterns according to a set of constraints encoded by a pattern DAG, MICE allows any pattern to borrow information from any other pattern when fitting the Gibbs factors. This would be valid for MAR models, but can fail for even simple MNAR models, including some MNAR models where complete case analysis yields unbiased estimates of target parameters \citep{mathur2025pitfalls}.\footnote{See also \cite{mathur2024imputation} for graphical criteria for m-DAGs and m-ADMGs describing when MICE can work without modification.}

\section{Conclusion}
\label{sec:conclusion}

We have presented two constructive algorithms---PM-ID and PM-ID+---for identification of the full law in graphical models of missing data. PM-ID+ is able to identify full laws even when certain patterns of missingness have no support. An interesting open question is related to the completeness of PM-ID+: Is it complete for unrestricted target laws, and if not, what modifications are required for it be complete. Any improvements to these identification algorithms, including maximizing data efficiency when connecting patterns in the pattern DAG, would lead to immediate improvements for our imputation algorithm MISPR. Another interesting avenue for future research is to fully characterize the intersection between the pattern graph framework of \cite{Chen.2022.PatternGraphs} and the pattern graphs used in PM-ID and PM-ID+. This would lead to more intuitive representations of a wide class of pattern graphs as equivalent m-graphs.

\subsection*{Acknowledgements}
\label{sec:acks}

Rohit Bhattacharya acknowledges support from the NSF CRII grant 234828. Ilya Shpitser acknowledges support from grants ONR N000142412701, NIH R01HS027819, NSF CAREER 1942239. The content of the information does not necessarily reflect the position or the policy of the Government, and no official endorsement should be inferred.

\clearpage

\bibliography{references}

\begin{thebibliography}{56}
\providecommand{\natexlab}[1]{#1}
\providecommand{\url}[1]{\texttt{#1}}
\expandafter\ifx\csname urlstyle\endcsname\relax
  \providecommand{\doi}[1]{doi: #1}\else
  \providecommand{\doi}{doi: \begingroup \urlstyle{rm}\Url}\fi

\bibitem[Azur et~al.(2011)Azur, Stuart, Frangakis, and Leaf]{azur2011multiple}
M.~J. Azur, E.~A. Stuart, C.~Frangakis, and P.~J. Leaf.
\newblock Multiple imputation by chained equations: {W}hat is it and how does it work?
\newblock \emph{International Journal of Methods in Psychiatric Research}, 20\penalty0 (1):\penalty0 40--49, 2011.

\bibitem[Berchtold(2019)]{berchtold2019treatment}
A.~Berchtold.
\newblock Treatment and reporting of item-level missing data in social science research.
\newblock \emph{International Journal of Social Research Methodology}, 22\penalty0 (5):\penalty0 431--439, 2019.

\bibitem[Bhattacharya et~al.(2019)Bhattacharya, Nabi, Shpitser, and Robins]{bhattacharya2019mid}
R.~Bhattacharya, R.~Nabi, I.~Shpitser, and J.~M. Robins.
\newblock Identification in missing data models represented by directed acyclic graphs.
\newblock In \emph{Proceedings of the 35th Conference on Uncertainty in Artificial Intelligence}. AUAI Press, 2019.

\bibitem[Bhattacharya et~al.(2020)Bhattacharya, Malinsky, and Shpitser]{bhattacharya2020causal}
R.~Bhattacharya, D.~Malinsky, and I.~Shpitser.
\newblock Causal inference under interference and network uncertainty.
\newblock In \emph{Uncertainty in Artificial Intelligence}, pages 1028--1038. PMLR, 2020.

\bibitem[Chen(2007)]{chen2007semiparametric}
H.~Y. Chen.
\newblock A semiparametric odds ratio model for measuring association.
\newblock \emph{Biometrics}, 63:\penalty0 413--421, 2007.

\bibitem[Chen et~al.(2023)Chen, Malinsky, and Bhattacharya]{chen2023causal}
J.~M. Chen, D.~Malinsky, and R.~Bhattacharya.
\newblock Causal inference with outcome-dependent missingness and self-censoring.
\newblock In \emph{Uncertainty in Artificial Intelligence}, pages 358--368. PMLR, 2023.

\bibitem[Chen(2022)]{Chen.2022.PatternGraphs}
Y.-C. Chen.
\newblock Pattern graphs: {{A}} graphical approach to nonmonotone missing data.
\newblock \emph{The Annals of Statistics}, 50\penalty0 (1):\penalty0 129--146, Feb. 2022.
\newblock ISSN 0090-5364, 2168-8966.
\newblock \doi{10.1214/21-AOS2094}.

\bibitem[Daniel et~al.(2012)Daniel, Kenward, Cousens, and De~Stavola]{daniel2012using}
R.~M. Daniel, M.~G. Kenward, S.~N. Cousens, and B.~L. De~Stavola.
\newblock Using causal diagrams to guide analysis in missing data problems.
\newblock \emph{Statistical Methods in Medical Research}, 21\penalty0 (3):\penalty0 243--256, 2012.

\bibitem[Didelez and Stensrud(2022)]{didelez2022logic}
V.~Didelez and M.~J. Stensrud.
\newblock On the logic of collapsibility for causal effect measures.
\newblock \emph{Biometrical Journal}, 64\penalty0 (2):\penalty0 235--242, 2022.

\bibitem[Didelez et~al.(2010)Didelez, Kreiner, and Keiding]{didelez2010graphical}
V.~Didelez, S.~Kreiner, and N.~Keiding.
\newblock Graphical models for inference under outcome-dependent sampling.
\newblock 2010.

\bibitem[Eekhout et~al.(2012)Eekhout, de~Boer, Twisk, De~Vet, and Heymans]{eekhout2012missing}
I.~Eekhout, R.~M. de~Boer, J.~W. Twisk, H.~C. De~Vet, and M.~W. Heymans.
\newblock Missing data: {A} systematic review of how they are reported and handled.
\newblock \emph{Epidemiology}, 23\penalty0 (5):\penalty0 729--732, 2012.

\bibitem[Evans(2018)]{evans2018margins}
R.~J. Evans.
\newblock Margins of discrete {B}ayesian networks.
\newblock \emph{The Annals of Statistics}, 46\penalty0 (6A):\penalty0 2623--2656, 2018.

\bibitem[Gain and Shpitser(2018)]{gain2018structure}
A.~Gain and I.~Shpitser.
\newblock Structure learning under missing data.
\newblock In \emph{International Conference on Probabilistic Graphical Models}, pages 121--132. Pmlr, 2018.

\bibitem[Geman and Geman(1984)]{geman1984stochastic}
S.~Geman and D.~Geman.
\newblock Stochastic relaxation, {G}ibbs distributions, and the {B}ayesian restoration of images.
\newblock \emph{IEEE Transactions on pattern analysis and machine intelligence}, \penalty0 (6):\penalty0 721--741, 1984.

\bibitem[Honaker et~al.(2011)Honaker, King, and Blackwell]{honaker2011amelia}
J.~Honaker, G.~King, and M.~Blackwell.
\newblock {Amelia II}: {A} program for missing data.
\newblock \emph{Journal of Statistical Software}, 45:\penalty0 1--47, 2011.

\bibitem[Karvanen and Tikka(2024)]{karvanen2024multiple}
J.~Karvanen and S.~Tikka.
\newblock Multiple imputation and full law identifiability.
\newblock \emph{arXiv preprint arXiv:2410.18688}, 2024.

\bibitem[Kyono et~al.(2021)Kyono, Zhang, Bellot, and van~der Schaar]{kyono2021miracle}
T.~Kyono, Y.~Zhang, A.~Bellot, and M.~van~der Schaar.
\newblock {MIRACLE}: {C}ausally-aware imputation via learning missing data mechanisms.
\newblock \emph{Advances in Neural Information Processing Systems}, 34:\penalty0 23806--23817, 2021.

\bibitem[Lauritzen(1996)]{lauritzen1996graphical}
S.~L. Lauritzen.
\newblock \emph{Graphical Models}, volume~17.
\newblock Clarendon Press, 1996.

\bibitem[Little(1993)]{little1993pattern}
R.~J. Little.
\newblock Pattern-mixture models for multivariate incomplete data.
\newblock \emph{Journal of the American Statistical Association}, 88\penalty0 (421):\penalty0 125--134, 1993.

\bibitem[Liu and Constantinou(2022)]{liu2022greedy}
Y.~Liu and A.~C. Constantinou.
\newblock Greedy structure learning from data that contain systematic missing values.
\newblock \emph{Machine Learning}, 111\penalty0 (10):\penalty0 3867--3896, 2022.

\bibitem[Malinsky et~al.(2021)Malinsky, Shpitser, and Tchetgen~Tchetgen]{malinsky2021semiparametric}
D.~Malinsky, I.~Shpitser, and E.~J. Tchetgen~Tchetgen.
\newblock Semiparametric inference for nonmonotone missing-not-at-random data: the no self-censoring model.
\newblock \emph{Journal of the American Statistical Association}, pages 1--9, 2021.

\bibitem[Mathur et~al.(2025)Mathur, Vanderweele, and Shpitser]{mathur25common}
M.~Mathur, T.~Vanderweele, and I.~Shpitser.
\newblock A common-cause principle for eliminating selection bias in causal estimands through covariate adjustment.
\newblock \emph{Annals of Statistics}, 2025.

\bibitem[Mathur and Shpitser(2024)]{mathur2024imputation}
M.~B. Mathur and I.~Shpitser.
\newblock Imputation without {nightMARs}: {G}raphical criteria for valid imputation of missing data.
\newblock Technical report, Center for Open Science, 2024.

\bibitem[Mathur and Shpitser(2025)]{mathur2025pitfalls}
M.~B. Mathur and I.~Shpitser.
\newblock Pitfalls of imputing using incomplete auxiliary variables.
\newblock \emph{American Journal of Epidemiology}, 194\penalty0 (6):\penalty0 1801--1802, 2025.

\bibitem[McKnight et~al.(2007)McKnight, McKnight, Sidani, and Figueredo]{mcknight2007missing}
P.~E. McKnight, K.~M. McKnight, S.~Sidani, and A.~J. Figueredo.
\newblock \emph{{Missing Data: A Gentle Introduction}}.
\newblock Guilford Press, 2007.

\bibitem[Miao et~al.(2024)Miao, Liu, Li, Tchetgen~Tchetgen, and Geng]{miao2024identification}
W.~Miao, L.~Liu, Y.~Li, E.~J. Tchetgen~Tchetgen, and Z.~Geng.
\newblock Identification and semiparametric efficiency theory of nonignorable missing data with a shadow variable.
\newblock \emph{ACM/JMS Journal of Data Science}, 1\penalty0 (2):\penalty0 1--23, 2024.

\bibitem[Mohan and Pearl(2014)]{mohan2014testability}
K.~Mohan and J.~Pearl.
\newblock On the testability of models with missing data.
\newblock In \emph{Artificial Intelligence and Statistics}, pages 643--650. PMLR, 2014.

\bibitem[Mohan and Pearl(2021)]{mohan2021graphical}
K.~Mohan and J.~Pearl.
\newblock Graphical models for processing missing data.
\newblock \emph{Journal of the American Statistical Association}, 116\penalty0 (534):\penalty0 1023--1037, 2021.

\bibitem[Mohan et~al.(2013)Mohan, Pearl, and Tian]{mohan2013missing}
K.~Mohan, J.~Pearl, and J.~Tian.
\newblock Graphical models for inference with missing data.
\newblock \emph{Advances in Neural Information Processing Systems}, 26, 2013.

\bibitem[Nabi and Bhattacharya(2023)]{nabi2022testability}
R.~Nabi and R.~Bhattacharya.
\newblock On testability and goodness of fit tests in missing data models.
\newblock In \emph{Uncertainty in Artificial Intelligence}, pages 1467--1477. PMLR, 2023.

\bibitem[Nabi et~al.(2020)Nabi, Bhattacharya, and Shpitser]{nabi2020mid}
R.~Nabi, R.~Bhattacharya, and I.~Shpitser.
\newblock Full law identification in graphical models of missing data: Completeness results.
\newblock In \emph{Proceedings of the 37th International Conference on Machine Learning}, pages 7153--7163. PMLR, 2020.

\bibitem[Nabi et~al.(2022)Nabi, Bhattacharya, Shpitser, and Robins]{nabi2022causal}
R.~Nabi, R.~Bhattacharya, I.~Shpitser, and J.~Robins.
\newblock Causal and counterfactual views of missing data models.
\newblock \emph{arXiv preprint arXiv:2210.05558}, 2022.

\bibitem[Ogburn et~al.(2020)Ogburn, Shpitser, and Lee]{ogburn2020causal}
E.~L. Ogburn, I.~Shpitser, and Y.~Lee.
\newblock Causal inference, social networks and chain graphs.
\newblock \emph{Journal of the Royal Statistical Society Series A: Statistics in Society}, 183\penalty0 (4):\penalty0 1659--1676, 2020.

\bibitem[Pearl(1988)]{pearl1988probabilistic}
J.~Pearl.
\newblock \emph{Probabilistic Reasoning in Intelligent Systems: Networks of Plausible Inference}.
\newblock Morgan Kaufmann, 1988.

\bibitem[Pearl(2009)]{pearl2009causality}
J.~Pearl.
\newblock \emph{Causality}.
\newblock Cambridge University Press, 2009.

\bibitem[Pe{\~n}a(2020)]{pena2020unifying}
J.~M. Pe{\~n}a.
\newblock Unifying {Gaussian} {LWF} and {AMP} chain graphs to model interference.
\newblock \emph{Journal of Causal Inference}, 8\penalty0 (1):\penalty0 1--21, 2020.

\bibitem[Perkins et~al.(2018)Perkins, Cole, Harel, Tchetgen~Tchetgen, Sun, Mitchell, and Schisterman]{perkins2018principled}
N.~J. Perkins, S.~R. Cole, O.~Harel, E.~J. Tchetgen~Tchetgen, B.~Sun, E.~M. Mitchell, and E.~F. Schisterman.
\newblock Principled approaches to missing data in epidemiologic studies.
\newblock \emph{American Journal of Epidemiology}, 187\penalty0 (3):\penalty0 568--575, 2018.

\bibitem[Pigott(2001)]{pigott2001review}
T.~D. Pigott.
\newblock A review of methods for missing data.
\newblock \emph{Educational Research and Evaluation}, 7\penalty0 (4):\penalty0 353--383, 2001.

\bibitem[Raghunathan et~al.(2001)Raghunathan, Lepkowski, Van~Hoewyk, and Solenberger]{raghunathan2001multivariate}
T.~E. Raghunathan, J.~M. Lepkowski, J.~Van~Hoewyk, and P.~Solenberger.
\newblock A multivariate technique for multiply imputing missing values using a sequence of regression models.
\newblock \emph{Survey Methodology}, page~85, 2001.

\bibitem[Ren et~al.(2023)Ren, Lipsitz, Weiss, and Fitzmaurice]{ren2023multiple}
B.~Ren, S.~R. Lipsitz, R.~D. Weiss, and G.~M. Fitzmaurice.
\newblock Multiple imputation for non-monotone missing not at random data using the no self-censoring model.
\newblock \emph{Statistical Methods in Medical Research}, 32\penalty0 (10):\penalty0 1973--1993, 2023.

\bibitem[Richardson(2003)]{richardson2003markov}
T.~S. Richardson.
\newblock Markov properties for acyclic directed mixed graphs.
\newblock \emph{Scandinavian Journal of Statistics}, 30\penalty0 (1):\penalty0 145--157, 2003.

\bibitem[Richardson et~al.(2023)Richardson, Evans, Robins, and Shpitser]{Richardson.Evans.ea.2023.NestedMarkov}
T.~S. Richardson, R.~J. Evans, J.~M. Robins, and I.~Shpitser.
\newblock Nested {{Markov}} properties for acyclic directed mixed graphs.
\newblock \emph{The Annals of Statistics}, 51\penalty0 (1):\penalty0 334--361, Feb. 2023.
\newblock ISSN 0090-5364, 2168-8966.
\newblock \doi{10.1214/22-AOS2253}.

\bibitem[Robins and Gill(1997)]{robins1997non}
J.~M. Robins and R.~D. Gill.
\newblock Non-response models for the analysis of non-monotone ignorable missing data.
\newblock \emph{Statistics in Medicine}, 16\penalty0 (1):\penalty0 39--56, 1997.

\bibitem[Saadati and Tian(2019)]{saadati2019adjustment}
M.~Saadati and J.~Tian.
\newblock Adjustment criteria for recovering causal effects from missing data.
\newblock In \emph{Joint European Conference on Machine Learning and Knowledge Discovery in Databases}, pages 561--577. Springer, 2019.

\bibitem[Sadinle and Reiter(2017)]{sadinle2016itemwise}
M.~Sadinle and J.~P. Reiter.
\newblock Itemwise conditionally independent nonresponse modelling for incomplete multivariate data.
\newblock \emph{Biometrika}, 104\penalty0 (1):\penalty0 207--220, 2017.

\bibitem[Shpitser(2016)]{shpitser2016consistent}
I.~Shpitser.
\newblock Consistent estimation of functions of data missing non-monotonically and not at random.
\newblock \emph{Advances in Neural Information Processing Systems}, 29:\penalty0 3144--3152, 2016.

\bibitem[Shpitser(2023)]{shpitser2023lauritzen}
I.~Shpitser.
\newblock The {L}auritzen-{C}hen likelihood for graphical models.
\newblock In \emph{International Conference on Artificial Intelligence and Statistics}, pages 4181--4195. PMLR, 2023.

\bibitem[Sinharay et~al.(2001)Sinharay, Stern, and Russell]{sinharay2001use}
S.~Sinharay, H.~S. Stern, and D.~Russell.
\newblock The use of multiple imputation for the analysis of missing data.
\newblock \emph{Psychological methods}, 6\penalty0 (4):\penalty0 317, 2001.

\bibitem[Spirtes et~al.(2000)Spirtes, Glymour, and Scheines]{spirtes2000causation}
P.~L. Spirtes, C.~N. Glymour, and R.~Scheines.
\newblock \emph{Causation, Prediction, and Search}.
\newblock MIT Press, 2000.

\bibitem[Sterne et~al.(2009)Sterne, White, Carlin, Spratt, Royston, Kenward, Wood, and Carpenter]{sterne2009multiple}
J.~A. Sterne, I.~R. White, J.~B. Carlin, M.~Spratt, P.~Royston, M.~G. Kenward, A.~M. Wood, and J.~R. Carpenter.
\newblock Multiple imputation for missing data in epidemiological and clinical research: {P}otential and pitfalls.
\newblock \emph{Bmj}, 338, 2009.

\bibitem[Sun et~al.(2023)Sun, Li, Xu, Zhang, and Wang]{sun2023deep}
Y.~Sun, J.~Li, Y.~Xu, T.~Zhang, and X.~Wang.
\newblock Deep learning versus conventional methods for missing data imputation: {A} review and comparative study.
\newblock \emph{Expert Systems with Applications}, 227:\penalty0 120201, 2023.

\bibitem[Tchetgen~Tchetgen et~al.(2021)Tchetgen~Tchetgen, Fulcher, and Shpitser]{tchetgen2021auto}
E.~J. Tchetgen~Tchetgen, I.~R. Fulcher, and I.~Shpitser.
\newblock Auto-g-computation of causal effects on a network.
\newblock \emph{Journal of the American Statistical Association}, 116\penalty0 (534):\penalty0 833--844, 2021.

\bibitem[Tu et~al.(2019)Tu, Zhang, Ackermann, Mohan, Kjellstr{\"o}m, and Zhang]{tu2019causal}
R.~Tu, C.~Zhang, P.~Ackermann, K.~Mohan, H.~Kjellstr{\"o}m, and K.~Zhang.
\newblock Causal discovery in the presence of missing data.
\newblock In \emph{The 22nd International Conference on Artificial Intelligence and Statistics}, pages 1762--1770. PMLR, 2019.

\bibitem[Van~Buuren(2007)]{van2007multiple}
S.~Van~Buuren.
\newblock Multiple imputation of discrete and continuous data by fully conditional specification.
\newblock \emph{Statistical Methods in Medical Research}, 16\penalty0 (3):\penalty0 219--242, 2007.

\bibitem[Van~Buuren and Groothuis-Oudshoorn(2011)]{van2011mice}
S.~Van~Buuren and K.~Groothuis-Oudshoorn.
\newblock mice: {Multivariate imputation by chained equations in R}.
\newblock \emph{Journal of statistical software}, 45:\penalty0 1--67, 2011.

\bibitem[Verma and Pearl(1990)]{verma1990equivalence}
T.~Verma and J.~Pearl.
\newblock Equivalence and synthesis of causal models.
\newblock In \emph{Proceedings of the Sixth Annual Conference on Uncertainty in Artificial Intelligence}, pages 255--270, 1990.

\end{thebibliography}

\newpage
\appendix

\section{PM-ID When No Complete Cases Are Available}
\label{app:no_complete_cases}

\begin{figure}[H]
    \centering

    \subcaptionbox{}{
    \begin{tikzpicture}[>=stealth, node distance=1.5cm]
        \def\d{1.5cm}
        \tikzstyle{hid} = [draw, circle, red, inner sep=0]
        \begin{scope}
            \path[->, very thick]
            node[] (X1) {$X^{(1)}_1$}
            node[right of=X1] (X2) {$X^{(1)}_2$}
            node[right of=X2] (X3) {$X^{(1)}_3$}
            node[below of=X1] (R1) {$R_1$}
            node[below of=X2] (R2) {$R_2$}
            node[below of=X3] (R3) {$R_3$}
            
            (X2) edge[blue] (X3)
            (X1) edge[blue] (R3)
            (X3) edge[blue] (R1)
            (X3) edge[blue] (R2)
            
            (R1) edge[blue] (R2)
            ;
        \end{scope}
    \end{tikzpicture}
    }
    \hspace{0.05\linewidth}
    \subcaptionbox{}{
    \scalebox{0.75}{
    \begin{tikzpicture}
            \def\d{2.0cm}
            \begin{scope}[>=stealth, node distance=2.0cm]
                \path[->, very thick]
                node[] (111) {${(111)}$}
                node[below of=111] (101) {$(101)$}
                node[right of=101] (011) {$(011)$}
                node[left of=101] (110) {${(110)}$}
                node[right of=111] (1x1) {$(1 * 1)$}
                node[left of=111] (x11) {$(* 1 1)$}
                
                node[below of=110] (100) {$(100)$}
                node[below of=101] (010) {$(010)$}
                node[below of=011] (001) {$(001)$}
                node[below of=010] (000) {$(000)$}
                
                (1x1) edge[blue] node [midway, right]{1} (011)
                (x11) edge[blue] node [midway, left]{2} (101)
                (1x1) edge[blue, bend left] node [midway, right]{1} (001)
                (x11) edge[blue, bend left] node [midway, right]{2} (001)
                (x11) edge[blue, bend right] node [midway, left]{2} (100)
                (101) edge[blue] node [midway, left]{3} (100)
                
                (100) edge[blue] node [midway, above]{1} (010)
                (011) edge[blue] node [midway, left]{3} (010)
                  
                (100) edge[blue] node [midway, left]{1} (000)
                (010) edge[blue] node [midway, left]{2} (000)
                (001) edge[blue] node [midway, right]{3} (000)
                ;
                \draw[red, very thick] ([xshift=-2pt,yshift=0pt]111.west) -- ([xshift=2pt,yshift=0pt]111.east);
                \draw[red, very thick] ([xshift=-2pt,yshift=0pt]110.west) -- ([xshift=2pt,yshift=0pt]110.east);
            \end{scope}
        \end{tikzpicture}}
        }

    \caption{
        (a) An m-graph with a sparse structure for variables in $X^{(1)}$  used to demonstrate how identification can be achieved when there are no complete cases. For this example, we assume the patterns $\{ 111, 110 \}$ have no support. (b) Pattern DAG used to identify the full law without using the complete case pattern. The marginals $p(X^{(1)}_{23}, R_{2} = 1, R_{3}=1)$ and $p(X^{(1)}_{13}, R_{1} = 1, R_{3}=1)$ are denoted as $(1*1)$ and $(*11)$, respectively.
    }
    \label{fig:no_complete_case}
\end{figure}
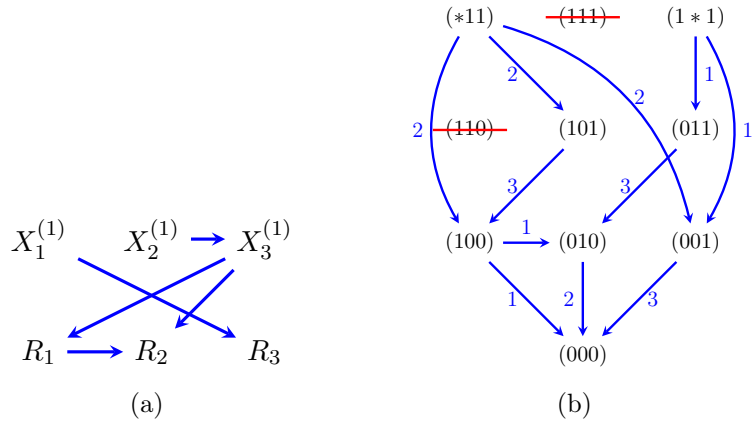

Consider a missing data model that factorizes according to the m-graph in Figure~\ref{fig:no_complete_case}(a) and has a positivity violation on one of the propensity scores: $p(R_{2} = 1 \mid R_{1}=1, X^{(1)}_{3}) = 0$. This gives us a model that has no support for the complete case pattern $111$ as well as the pattern $110$. Thus, ${\cal R}^+ = \{ 011, 101, 001, 010, 100, 000 \}$. Recall that PM-ID and PM-ID+ require the complete case pattern to be the root node of the pattern DAG, and serve as the base case for the identification  argument in Theorem~\ref{thm:pattern-dag-id}. Though the complete case pattern has no support here, we demonstrate that the recursive argument can still be applied by expanding the set of patterns considered to also include identified marginals of the full law.

In our problem, two marginals that are identified at the outset are,
\begin{align}
\label{eq:marginals}
& p(X^{(1)}_{13}, R_{1} = 1, R_{3}=1) \text{ and } p(X^{(1)}_{23}, R_{2} = 1, R_{3}=1).
\end{align}
The marginal $p(X^{(1)}_{12}, R_{1} = 1, R_{2}=1)$ involves a positivity violation as a direct consequence of there being no support on the patterns $\{ 111, 110 \}$, and thus cannot be used. Nonetheless, we can extend the the recursive arguments of PM-ID and PM-ID+ by treating the identified marginals in \eqref{eq:marginals} as two distinct root nodes of the pattern DAG.

Consider the pattern DAG in Figure~\ref{fig:no_complete_case}(b), where the two marginals in \eqref{eq:marginals} are set up as root nodes (the notation *  is used to indicate variables that are marginalized.) Now, to achieve identification of other patterns, we use conditional independences in the target law implied by the m-graph shown in Figure~\ref{fig:no_complete_case}(a). Recall that PM-ID used independences of the form $X_i^{(1)} \Perp R_i \mid \mb_\G(X_i^{(1)})$, while PM-ID+ extended this to also use independences of the form $X_i^{(1)} \Perp R_j \mid \mb_\G(X_i^{(1)})$ for $i \not= j$. An extension of these when there are no complete cases must also use independences of the form $X_i^{(1)} \Perp X_j^{(1)} \mid \mb_\G(X_i^{(1)})$ for $i \not= j$ that arise due to missing edges among variables in $X^{(1)}$.

We identify the Gibbs factors of the extrapolation densities of the rest of the patterns by arguments encoded by the pattern DAG shown in Figure~\ref{fig:no_complete_case}(b).
Notably this pattern DAG uses the marginal $p(X^{(1)}_{23}, R_{2} = 1, R_{3}=1)$ to identify Gibbs factors for the patterns $101, 100, 001$, and uses the marginal $p(X^{(1)}_{13}, R_{1} = 1, R_{3}=1)$ to identify the Gibbs factors for the pattern $011, 001$.
The equations in \eqref{eq:pm-id*} make the arguments encoded in the pattern DAG more explicit, and also list the independences that lead to the equalities that permit identification. Such an algorithm might be labeled PM-ID*, by analogy with PM-ID+ due to its usage of margins of the full data law that have support as root nodes.\footnote{In theory, PM-ID* exhausts the capability of the pattern mixture method employed in this paper. The other independence statements we have not used is the local Markov property relative to indicators $R$, $R_{i} \Perp V \setminus \mb_{\mathcal{G}}(R_{i}) \mid \mb_{\mathcal{G}}(R_{i})$. We expect that a sound and complete identification theory for the class of models we explore under positivity violation will make use of both $R_{i}$ and $X^{(1)}_i$ local Markov property.}

{\small
\begin{equation}
\label{eq:pm-id*}
\begin{aligned}
(011): &
\\
& & p(X^{(1)}_{1} \mid X^{(1)}_{23}, R=011)
    &=
    p(X^{(1)}_{1} \mid R_{1}=1, R_{3}=1)
    &
    (X^{(1)}_{1} \Perp R_{12}, X^{(1)}_{23} \mid R_{3})
    \\
(101): &
\\
& & p(X^{(1)}_{2} \mid X^{(1)}_{13}, R=101)
    &=
    p(X^{(1)}_{2} \mid X^{(1)}_{3}, R_{2}=1, R_{3}=1)
    &
    (X^{(1)}_{2} \Perp R, X^{(1)}_{1} \mid X^{(1)}_{3})
    \\
(001): &
\\
& & p(X^{(1)}_{1} \mid X^{(1)}_{23}, R=001)
    &=
    p(X^{(1)}_{1} \mid R_{1}=1, R_{3}=1)
    &
    (X^{(1)}_{1} \Perp R_{12}, X^{(1)}_{23} \mid R_{3})
    \\
& & p(X^{(1)}_{2} \mid X^{(1)}_{13}, R=001)
    &=
    p(X^{(1)}_{2} \mid X^{(1)}_{3}, R_{2}=1, R_{3}=1)
    &
    (X^{(1)}_{2} \Perp R, X^{(1)}_{1} \mid X^{(1)}_{3})
    \\
(100): &
\\
& & p(X^{(1)}_{2} \mid X^{(1)}_{13}, R=100)
    &=
    p(X^{(1)}_{2} \mid X^{(1)}_{3}, R_{2}=1, R_{3}=1)
    &
    (X^{(1)}_{2} \Perp R, X^{(1)}_{1} \mid X^{(1)}_{3})
    \\
& & p(X^{(1)}_{3} \mid X^{(1)}_{12}, R=100)
    &=
    p(X^{(1)}_{3} \mid X^{(1)}_{12}, R=101)
    &
    (X^{(1)}_{3} \Perp R_{3} \mid X^{(1)}_{12}, R_{12})
    \\
(010): &
\\
& & p(X^{(1)}_{1} \mid X^{(1)}_{23}, R=010)
    &=
    p(X^{(1)}_{1} \mid X^{(1)}_{23}, R=100)
    &
    (X^{(1)}_{1} \Perp R_{12} \mid X^{(1)}_{23}, R_{3})
    \\
& & p(X^{(1)}_{3} \mid X^{(1)}_{12}, R=010)
    &=
    p(X^{(1)}_{3} \mid X^{(1)}_{12}, R=011)
    &
    (X^{(1)}_{3} \Perp R_{3} \mid X^{(1)}_{12}, R_{12})
    \\
(000): &
\\
& & p(X^{(1)}_{1} \mid X^{(1)}_{23}, R=000)
    &=
    p(X^{(1)}_{1} \mid X^{(1)}_{23}, R=100)
    &
    (X^{(1)}_{1} \Perp R_{1} \mid X^{(1)}_{23}, R_{23})
    \\
& & p(X^{(1)}_{2} \mid X^{(1)}_{13}, R=000)
    &=
    p(X^{(1)}_{2} \mid X^{(1)}_{13}, R=010)
    &
    (X^{(1)}_{2} \Perp R_{2} \mid X^{(1)}_{13}, R_{13})
    \\
& & p(X^{(1)}_{3} \mid X^{(1)}_{12}, R=000)
    &=
    p(X^{(1)}_{3} \mid X^{(1)}_{12}, R=001)
    &
    (X^{(1)}_{3} \Perp R_{3} \mid X^{(1)}_{12}, R_{12})
    \\
\end{aligned}
\end{equation}}

In conclusion, the full law is identified.

\newpage

\section{Data Generating Processes For Experiments}
\label{app:dgps}

In our experiment, all variables are binary. The following tables give the values of $p(V=0 \mid W)$, where $V=0$ is the name of the last column, and $W$ are all the variables whose values are given in the other columns.

\paragraph{DGPs for graph (a)}

\begin{itemize}
    \item With positivity
\end{itemize}

\begin{equation*}
\tiny
\begin{array}{r}
    p(O = 0) = 0.660453
    \\
    \\
    \begin{array}{| c | l |}
    \hline
        &     X1=0 \\
    \hline
    O   &          \\
    0   & 0.786893 \\
    1   & 0.637722 \\
    \hline
    \end{array}
\end{array}
\quad \quad
\begin{array}{| c | c | l |}
\hline
   &    &     X2=0 \\
\hline
X1 & O  &          \\
0  & 0  & 0.513861 \\
   & 1  & 0.429143 \\
1  & 0  & 0.664446 \\
   & 1  & 0.149863 \\
\hline
\end{array}
\quad \quad
\begin{array}{| c | l |}
\hline
    &     R1=0 \\
\hline
O   &          \\
0   & 0.883374 \\
1   & 0.481486 \\
\hline
\end{array}
\quad \quad
\begin{array}{| c | c | l |}
\hline
   &    &     R2=0 \\
\hline
O  & R1 &          \\
0  & 0  & 0.891517 \\
   & 1  & 0.310271 \\
1  & 0  & 0.530085 \\
   & 1  & 0.470104 \\
\hline
\end{array}
\end{equation*}

\begin{itemize}
    \item With positivity violation
\end{itemize}

\begin{equation*}
\tiny
\begin{array}{r}
    p(O = 0) = 0.512816
    \\
    \\
    \begin{array}{| c | l |}
    \hline
        &     X1=0 \\
    \hline
    O   &          \\
    0   & 0.883374 \\
    1   & 0.481486 \\
    \hline
    \end{array}
\end{array}
\quad \quad
\begin{array}{| c | c | l |}
\hline
   &    &     X2=0 \\
\hline
X1 & O  &          \\
0  & 0  & 0.891517 \\
   & 1  & 0.310271 \\
1  & 0  & 0.530085 \\
   & 1  & 0.470104 \\
\hline
\end{array}
\quad \quad
\begin{array}{| c | l |}
\hline
    &     R1=0 \\
\hline
O   &          \\
0   & 0.970209 \\
1   & 0.049576 \\
\hline
\end{array}
\quad \quad
\begin{array}{| c | c | l |}
\hline
   &    &     R2=0 \\
\hline
O  & R1 &          \\
0  & 0  & 0.430232 \\
   & 1  & 0        \\
1  & 0  & 0.713360 \\
   & 1  & 0        \\
\hline
\end{array}
\end{equation*}

\paragraph{DGPs for graph (b)}

\begin{itemize}
    \item With positivity
\end{itemize}

\begin{equation*}
\tiny
\begin{array}{r}
    p(X1=0) = 0.660453
    \\
    \begin{array}{| c | l |}
    \hline
       & X2=0     \\
    \hline
    X1 &          \\
    0  & 0.786893 \\
    1  & 0.637722 \\
    \hline
    \end{array}
\end{array}
\quad \quad 
\begin{array}{| c | l |}
\hline
   & R1=0     \\
\hline
X2 &          \\
0  & 0.436589 \\
1  & 0.920788 \\
\hline
\end{array}
\quad \quad 
\begin{array}{| c | l |}
\hline
   & R2=0     \\
\hline
X1 &          \\
0  & 0.883374 \\
1  & 0.481486 \\
\hline
\end{array}
\end{equation*}

\begin{itemize}
    \item With positivity violation
\end{itemize}

\begin{equation*}
\tiny
\begin{array}{r}
    p(X1=0) = 0.512816
    \\
    \begin{array}{| c | l |}
    \hline
        & X2=0     \\
    \hline
    X1  &          \\
    0   & 0.883374 \\
    1   & 0.481486 \\
    \hline
    \end{array}
\end{array}
\quad \quad 
\begin{array}{| c | l |}
\hline
    & R1=0     \\
\hline
X2  &          \\
0   & 0.474077 \\
1   & 0.386253 \\
\hline
\end{array}
\quad \quad 
\begin{array}{| c | l |}
\hline
    & R2=0     \\
\hline
X1  &          \\
0   & 0.0      \\
1   & 0.0      \\
\hline
\end{array}
\end{equation*}

\newpage

\paragraph{DGPs for graph (c)}

\begin{itemize}
    \item With positivity
\end{itemize}

\begin{equation*}
\tiny
\begin{array}{r}
    p(X^{(1)}_1 = 0) = 0.512816
    \\
    \\
    \begin{array}{| c | l |}
        \hline
           &  X2=0     \\
        \hline
        X1 &           \\
        0  &  0.883374 \\
        1  &  0.481486 \\
        \hline
    \end{array}
\end{array}
\quad \quad
\begin{array}{| c | c | l |}
\hline
   &     & X3=0     \\
\hline
X1 & X2  &          \\
0  & 0   & 0.891517 \\
   & 1   & 0.310271 \\
1  & 0   & 0.530085 \\
   & 1   & 0.470104 \\
\hline
\end{array}
\quad \quad
\begin{array}{| c | c | l |}
\hline
   &     & R1=0     \\
\hline
X2 & X3  &          \\
0  & 0   & 0.526284 \\
   & 1   & 0.163048 \\
1  & 0   & 0.025951 \\
   & 1   & 0.760911 \\
\hline
\end{array}
\quad \quad
\begin{array}{| c | c | l |}
\hline
   &     & R2=0     \\
\hline
X3 & R1  &          \\
0  & 0   & 0.430232 \\
   & 1   & 0.231407 \\
1  & 0   & 0.713360 \\
   & 1   & 0.741450 \\
\hline
\end{array}
\quad \quad
\begin{array}{| c | c | l |}
\hline
   &     & R3=0     \\
\hline
X1 & X2  &          \\
0  & 0   & 0.529031 \\
   & 1   & 0.404097 \\
1  & 0   & 0.489594 \\
   & 1   & 0.577123 \\
\hline
\end{array}
\end{equation*}

\begin{itemize}
\item With positivity violation
\end{itemize}

\begin{equation*}
\tiny
\begin{array}{r}
    p(X^{(1)}_1=0) = 0.937111
    \\
    \\
    \begin{array}{| c | l |}
    \hline
        & X2=0     \\
    \hline
    X1  &          \\
    0   & 0.231984 \\
    1   & 0.210310 \\
    \hline
    \end{array}
\end{array}
\quad \quad 
\begin{array}{| c | c | l |}
\hline
   &     & X3=0     \\
\hline
X1 & X2  &          \\
0  & 0   & 0.529031 \\
   & 1   & 0.404097 \\
1  & 0   & 0.489594 \\
   & 1   & 0.577123 \\
\hline
\end{array}
\quad \quad 
\begin{array}{| c | c | l |}
\hline
   &     & R1=0     \\
\hline
X2 & X3  &          \\
0  & 0   & 0.492275 \\
   & 1   & 0.476198 \\
1  & 0   & 0.302251 \\
   & 1   & 0.444998 \\
\hline
\end{array}
\quad \quad 
\begin{array}{| c | c | l |}
\hline
   &     & R2=0     \\
\hline
X3 & R1  &          \\
0  & 0   & 0.434144 \\
   & 1   & 0        \\
1  & 0   & 0.698981 \\
   & 1   & 0        \\
\hline
\end{array}
\quad \quad 
\begin{array}{| c | c | l |}
\hline
   &     & R3=0     \\
\hline
X1 & X2  &          \\
0  & 0   & 0.372718 \\
   & 1   & 0.408374 \\
1  & 0   & 0.465549 \\
   & 1   & 0.568108 \\
\hline
\end{array}
\end{equation*}

\newpage

\paragraph{DGPs for graph (d)}

\begin{itemize}
    \item With positivity
\end{itemize}

\begin{equation*}
\tiny
\begin{array}{r}
    p(U_{1} = 0) = 0.412899 \\
    p(U_{2} = 0) = 0.340442 \\
    \begin{array}{| c | l |}
        \hline
            & X^{(1)}_1       \\
        \hline
        U_1  &          \\
        0   & 0.351195 \\
        1   & 0.818716 \\
        \hline
    \end{array}
    \\ \\
    \begin{array}{| c | l |}
        \hline
            & X^{(1)}_2       \\
        \hline
        X^{(1)}_1  &          \\
        0   & 0.589324 \\
        1   & 0.652324 \\
        \hline
    \end{array}
    \\ \\
    \begin{array}{| c | c | l |}
    \hline
       &     & X^{(1)}_3       \\
    \hline
    X^{(1)}_2 & U_1  &          \\
    0  & 0   & 0.296391 \\
       & 1   & 0.484488 \\
    1  & 0   & 0.620866 \\
       & 1   & 0.559600 \\
    \hline
    \end{array}
\end{array}
\quad \quad
\begin{array}{r}
    \begin{array}{| c | c | c | l |}
    \hline
       &    &     & X^{(1)}_4       \\
    \hline
    X^{(1)}_3 & X^{(1)}_2 & X^{(1)}_1  &          \\
    0  & 0  & 0   & 0.357485 \\
       &    & 1   & 0.420410 \\
       & 1  & 0   & 0.235693 \\
       &    & 1   & 0.545312 \\
    1  & 0  & 0   & 0.316772 \\
       &    & 1   & 0.514750 \\
       & 1  & 0   & 0.859519 \\
       &    & 1   & 0.659134 \\
    \hline
    \end{array}
    \\ \\
    \begin{array}{| c | c | c | l |}
    \hline 
       &    &     & R_1       \\
    \hline 
    X^{(1)}_3 & X^{(1)}_4 & U_2  &          \\
    0  & 0  & 0   & 0.547038 \\
       &    & 1   & 0.556173 \\
       & 1  & 0   & 0.091318 \\
       &    & 1   & 0.271319 \\
    1  & 0  & 0   & 0.711250 \\
       &    & 1   & 0.343158 \\
       & 1  & 0   & 0.261107 \\
       &    & 1   & 0.789046 \\
    \hline
    \end{array}
\end{array}
\quad \quad
\begin{array}{| c | c | c | c | l |}
\hline
   &    &    &     & R_2       \\
\hline
X^{(1)}_3 & X^{(1)}_4 & R_1 & U_2  &          \\
0  & 0  & 0  & 0   & 0.634742 \\
   &    &    & 1   & 0.570906 \\
   &    & 1  & 0   & 0.467930 \\
   &    &    & 1   & 0.704905 \\
   & 1  & 0  & 0   & 0.845492 \\
   &    &    & 1   & 0.510379 \\
   &    & 1  & 0   & 0.861527 \\
   &    &    & 1   & 0.829096 \\
1  & 0  & 0  & 0   & 0.465644 \\
   &    &    & 1   & 0.356262 \\
   &    & 1  & 0   & 0.197143 \\
   &    &    & 1   & 0.285206 \\
   & 1  & 0  & 0   & 0.437094 \\
   &    &    & 1   & 0.256127 \\
   &    & 1  & 0   & 0.476503 \\
   &    &    & 1   & 0.590885 \\
\hline
\end{array}
\quad \quad
\begin{array}{r}
    \begin{array}{| c | c | c | l |}
    \hline
              &     &      & R_3      \\
    \hline
    X^{(1)}_1 & R_2 & R_4  &          \\
    0         & 0   & 0    & 0.522999 \\
              &     & 1    & 0.371357 \\
              & 1   & 0    & 0.141733 \\
              &     & 1    & 0.425316 \\
    1         & 0   & 0    & 0.445790 \\
              &     & 1    & 0.538798 \\
              & 1   & 0    & 0.424396 \\
              &     & 1    & 0.462601 \\
    \hline
    \end{array}
    \\ \\
    \begin{array}{| c | l |}
    \hline
        & R_4       \\
    \hline
    X^{(1)}_3  &          \\
    0   & 0.540869 \\
    1   & 0.181146 \\
    \hline
    \end{array}
\end{array}
\end{equation*}

\begin{itemize}
    \item With positivity violation
\end{itemize}

\begin{equation*}
\tiny
\begin{array}{r}
    p(U_{1} = 0) = 0.743035 \\
    p(U_{2} = 0) = 0.390206 \\
    \begin{array}{| c | l |}
        \hline
            & X^{(1)}_1=0 \\
        \hline
        U_1 &          \\
        0   & 0.821949 \\
        1   & 0.760629 \\
        \hline
    \end{array}
    \\ \\
    \begin{array}{| c | l |}
        \hline
                   & X^{(1)}_2=0 \\
        \hline
        X^{(1)}_1  &          \\
        0          & 0.830070 \\
        1          & 0.952262 \\
        \hline
    \end{array}
    \\ \\ 
    \begin{array}{| c | c | l |}
    \hline
              &     & X^{(1)}_3=0 \\
    \hline
    X^{(1)}_2 & U_1 &          \\
    0         & 0   & 0.493445 \\
              & 1   & 0.756955 \\
    1         & 0   & 0.428295 \\
              & 1   & 0.781199 \\
    \hline
    \end{array}
\end{array}
\quad \quad
\begin{array}{r}
    \begin{array}{| c | c | c | l |}
    \hline
              &           &            & X^{(1)}_4=0 \\
    \hline
    X^{(1)}_3 & X^{(1)}_2 & X^{(1)}_1  &          \\
    0         & 0         & 0          & 0.522999 \\
              &           & 1          & 0.371357 \\
              & 1         & 0          & 0.141733 \\
              &           & 1          & 0.425316 \\
    1         & 0         & 0          & 0.445790 \\
              &           & 1          & 0.538798 \\
              & 1         & 0          & 0.424396 \\
              &           & 1          & 0.462601 \\
    \hline
    \end{array}
    \\ \\
    \begin{array}{| c | c | c | l |}
    \hline 
              &           &     & R_1= 0   \\
    \hline 
    X^{(1)}_3 & X^{(1)}_4 & U_2 &          \\
    0         & 0         & 0   & 0.503617 \\
              &           & 1   & 0.365114 \\
              & 1         & 0   & 0.597744 \\
              &           & 1   & 0.541078 \\
    1         & 0         & 0   & 0.822804 \\
              &           & 1   & 0.352675 \\
              & 1         & 0   & 0.464298 \\
              &           & 1   & 0.079114 \\
    \hline
    \end{array}
\end{array}
\quad \quad
\begin{array}{| c | c | c | c | l |}
\hline
          &           &      &     & R_2=0    \\
\hline
X^{(1)}_3 & X^{(1)}_4 & R_1  & U_2 &          \\
0         & 0         & 0    & 0   & 0.394719 \\
          &           &      & 1   & 0.116755 \\
          &           & 1    & 0   & 0.704038 \\
          &           &      & 1   & 0.299178 \\
          & 1         & 0    & 0   & 0.534339 \\
          &           &      & 1   & 0.441878 \\
          &           & 1    & 0   & 0.016644 \\
          &           &      & 1   & 0.482661 \\
1         & 0         & 0    & 0   & 0.639980 \\
          &           &      & 1   & 0.507167 \\
          &           & 1    & 0   & 0.647185 \\
          &           &      & 1   & 0.853005 \\
          & 1         & 0    & 0   & 0.889070 \\
          &           &      & 1   & 0.545897 \\
          &           & 1    & 0   & 0.547497 \\
          &           &      & 1   & 0.511166 \\
\hline
\end{array}
\\ \\
\begin{array}{r}
    \begin{array}{| c | c | c | l |}
    \hline
              &      &      & R_3= 0   \\
    \hline
    X^{(1)}_1 & R_2  & R_4  &          \\
    0         & 0    & 0    & 0.978111 \\
              &      & 1    & 0.294879 \\
              & 1    & 0    & 0.000000 \\
              &      & 1    & 0.590751 \\
    1         & 0    & 0    & 0.345917 \\
              &      & 1    & 0.018613 \\
              & 1    & 0    & 0.000000 \\
              &      & 1    & 0.724987 \\
    \hline
    \end{array}
    \\ \\
    \begin{array}{| c | l |}
    \hline
              &   R_4=0    \\
    \hline
    X^{(1)}_3 &            \\
    0         &   0.250863 \\
    1         &   0.925975 \\
    \hline
    \end{array}
\end{array}
\end{equation*}

\newpage

\paragraph{DGPs for graph (e)}

\begin{itemize}
    \item With positivity violation
\end{itemize}

\begin{equation*}
\tiny
\begin{aligned}
\begin{array}{r}
    p(X^{(1)}_{1}=0) = 0.638138
    \\
    \\
    \begin{array}{| c | l |}
    \hline
       & X2=0     \\
    \hline
    X1 &          \\
    0  & 0.970209 \\
    1  & 0.049576 \\
    \hline
    \end{array}
\end{array}
\quad \quad
\begin{array}{| c | c | l |}
\hline
   &    & X3=0     \\
\hline
X1 & X2 &          \\
0  & 0  & 0.430232 \\
   & 1  & 0.231407 \\
1  & 0  & 0.713360 \\
   & 1  & 0.741450 \\
\hline
\end{array}
\quad \quad
\begin{array}{| c | c | c | l |}
\hline
   &    &    & X4=0     \\
\hline
X1 & X2 & X3 &          \\
0  & 0  & 0  & 0.583809 \\
   &    & 1  & 0.440495 \\
   & 1  & 0  & 0.548947 \\
   &    & 1  & 0.491406 \\
1  & 0  & 0  & 0.413344 \\
   &    & 1  & 0.595662 \\
   & 1  & 0  & 0.452291 \\
   &    & 1  & 0.559238 \\
\hline
\end{array}
\\
\begin{array}{| c | c | c | c | l |}
\hline
   &    &    &     & X5=0     \\
\hline
X1 & X2 & X3 & X4  &          \\
0  & 0  & 0  & 0   & 0.931621 \\
   &    &    & 1   & 0.423657 \\
   &    & 1  & 0   & 0.411324 \\
   &    &    & 1   & 0.656003 \\
   & 1  & 0  & 0   & 0.787735 \\
   &    &    & 1   & 0.091550 \\
   &    & 1  & 0   & 0.988887 \\
   &    &    & 1   & 0.810696 \\
1  & 0  & 0  & 0   & 0.444835 \\
   &    &    & 1   & 0.609901 \\
   &    & 1  & 0   & 0.637284 \\
   &    &    & 1   & 0.422973 \\
   & 1  & 0  & 0   & 0.934693 \\
   &    &    & 1   & 0.069927 \\
   &    & 1  & 0   & 0.527842 \\
   &    &    & 1   & 0.462012 \\
\hline
\end{array}
\quad \quad 
\begin{array}{| c | c | c | c | l |}
\hline
   &    &    &     & R1=0     \\
\hline
X2 & X3 & X4 & X5  &          \\
0  & 0  & 0  & 0   & 0.705047 \\
   &    &    & 1   & 0.513365 \\
   &    & 1  & 0   & 0.340205 \\
   &    &    & 1   & 0.348838 \\
   & 1  & 0  & 0   & 0.613077 \\
   &    &    & 1   & 0.839110 \\
   &    & 1  & 0   & 0.266611 \\
   &    &    & 1   & 0.884113 \\
1  & 0  & 0  & 0   & 0.404455 \\
   &    &    & 1   & 0.158130 \\
   &    & 1  & 0   & 0.442600 \\
   &    &    & 1   & 0.523876 \\
   & 1  & 0  & 0   & 0.431031 \\
   &    &    & 1   & 0.541048 \\
   &    & 1  & 0   & 0.526709 \\
   &    &    & 1   & 0.598381 \\
\hline
\end{array}
\quad \quad
\begin{array}{| c | c | c | c | l |}
\hline
   &    &    &     & R2=0     \\
\hline
X3 & X4 & X5 & R1  &          \\
0  & 0  & 0  & 0   & 0        \\
   &    &    & 1   & 0.456267 \\
   &    & 1  & 0   & 0        \\
   &    &    & 1   & 0.694902 \\
   & 1  & 0  & 0   & 0        \\
   &    &    & 1   & 0.531102 \\
   &    & 1  & 0   & 0        \\
   &    &    & 1   & 0.532139 \\
1  & 0  & 0  & 0   & 0        \\
   &    &    & 1   & 0.229811 \\
   &    & 1  & 0   & 0        \\
   &    &    & 1   & 0.358095 \\
   & 1  & 0  & 0   & 0        \\
   &    &    & 1   & 0.519177 \\
   &    & 1  & 0   & 0        \\
   &    &    & 1   & 0.828434 \\
\hline
\end{array}
\\
\begin{array}{| c | c | c | c | l |}
\hline
   &    &    &     & R3=0     \\
\hline
X1 & X4 & X5 & R2  &          \\
0  & 0  & 0  & 0   & 0.950850 \\
   &    &    & 1   & 0.807773 \\
   &    & 1  & 0   & 0.331640 \\
   &    &    & 1   & 0.940146 \\
   & 1  & 0  & 0   & 0.435004 \\
   &    &    & 1   & 0.033379 \\
   &    & 1  & 0   & 0.959593 \\
   &    &    & 1   & 0.907998 \\
1  & 0  & 0  & 0   & 0.377597 \\
   &    &    & 1   & 0.161617 \\
   &    & 1  & 0   & 0.506291 \\
   &    &    & 1   & 0.479214 \\
   & 1  & 0  & 0   & 0.306019 \\
   &    &    & 1   & 0.745440 \\
   &    & 1  & 0   & 0.944262 \\
   &    &    & 1   & 0.499467 \\
\hline
\end{array}
\quad \quad
\begin{array}{| c | c | c | c | l |}
\hline
   &    &    &     & R4=0     \\
\hline
X2 & X5 & R1 & R3  &          \\
0  & 0  & 0  & 0   & 0.576291 \\
   &    &    & 1   & 0.483073 \\
   &    & 1  & 0   & 0.973214 \\
   &    &    & 1   & 0.618791 \\
   & 1  & 0  & 0   & 0.360446 \\
   &    &    & 1   & 0.633832 \\
   &    & 1  & 0   & 0.084281 \\
   &    &    & 1   & 0.900471 \\
1  & 0  & 0  & 0   & 0.630489 \\
   &    &    & 1   & 0.519740 \\
   &    & 1  & 0   & 0.687443 \\
   &    &    & 1   & 0.545025 \\
   & 1  & 0  & 0   & 0.298542 \\
   &    &    & 1   & 0.209205 \\
   &    & 1  & 0   & 0.124970 \\
   &    &    & 1   & 0.887454 \\
\hline
\end{array}
\quad \quad
\begin{array}{| c | c | c | c | l |}
\hline
   &    &    &     & R5=0     \\
\hline
X1 & X3 & R2 & R4  &          \\
0  & 0  & 0  & 0   & 0.816639 \\
   &    &    & 1   & 0.427999 \\
   &    & 1  & 0   & 0.212742 \\
   &    &    & 1   & 0.277295 \\
   & 1  & 0  & 0   & 0.516575 \\
   &    &    & 1   & 0.263809 \\
   &    & 1  & 0   & 0.776928 \\
   &    &    & 1   & 0.930711 \\
1  & 0  & 0  & 0   & 0.816094 \\
   &    &    & 1   & 0.577293 \\
   &    & 1  & 0   & 0.908045 \\
   &    &    & 1   & 0.823171 \\
   & 1  & 0  & 0   & 0.240437 \\
   &    &    & 1   & 0.678328 \\
   &    & 1  & 0   & 0.417560 \\
   &    &    & 1   & 0.742792 \\
\hline
\end{array}
\end{aligned}
\end{equation*}

\newpage

\paragraph{DGPs for graph (f)}

We randomly choose a DGP as follow

\begin{enumerate}
    \item For each $i=1, \ldots, 20$, sample random numbers in $(0,1)$ and assign to $p(X^{(1)}_i = 0 \mid \pa_{\mathcal{G}}(X^{(1)}_i))$.
    \item For each $i=1, \ldots, 20$, we assign values to $p(R_i = 0 \mid \pa_{\mathcal{G}}(R_i))$ as follow. Note that $R_{i+k} \in \pa_{\mathcal{G}}(R_i)$ for all $i = 1, \ldots, 20$ and $k = 0, \ldots, 10$ such that $i+k \leq 20$.
    \begin{enumerate}
        \item If $R_{i+k}=0$ for any $k \geq 3$, then $p(R_i = 0 \mid \pa_{\mathcal{G}}(R_i)) = 0$.
        \item If $R_{i+1}=0$ and/or $R_{i+2}=0$, then sample random numbers in $(0, 0.5)$, then multiply by $1.8$, and assign to the corresponding $p(R_i = 0 \mid \pa_{\mathcal{G}}(R_i))$. The goal is to create values higher than 0.5 when $R_{i+1}=0$ and/or $R_{i+2}=0$.
        \item Otherwise, sample random numbers in $(0, 1)$ and assign to $p(R_i = 0 \mid \pa_{\mathcal{G}}(R_i))$.
    \end{enumerate}
\end{enumerate}

\end{document}